\documentclass[11pt]{article}

\usepackage{ccanonne}
\usepackage{booktabs}
\usepackage{palatino}
\addbibresource{references.bib}
\addbibresource{extra.bib}

\title{Private Distribution Testing with Heterogeneous Constraints:\\
Your Epsilon Might Not Be Mine}
\author{Cl\'ement L. Canonne\thanks{University of Sydney. Email: \email{clement.canonne@sydney.edu.au}.} \and Yucheng Sun\thanks{ETH Z\"urich. Email: \email{yucsun@student.ethz.ch}.}}

\begin{document}
\maketitle
\begin{abstract}
    Private closeness testing asks to decide whether the underlying probability distributions of two sensitive datasets are identical or differ significantly in statistical distance, while guaranteeing (differential) privacy of the data. As in most (if not all) distribution testing questions studied under privacy constraints, however, previous work assumes that the two datasets are \emph{equally} sensitive, i.e., must be provided the same privacy guarantees. This is often an unrealistic assumption, as different sources of data come with different privacy requirements; as a result, known closeness testing algorithms might be unnecessarily conservative, ``paying'' too high a privacy budget for half of the data.
    In this work, we initiate the study of the closeness testing problem under \emph{heterogeneous} privacy constraints, where the two datasets come with distinct privacy requirements.
    
    We formalize the question and provide algorithms under the three most widely used differential privacy settings, with a particular focus on the \emph{local} and \emph{shuffle} models of privacy; and show that one can indeed achieve better sample efficiency when taking into account the two different ``epsilon'' requirements.
\end{abstract}
\section{Introduction}
    \label{sec:introduction}
Hypothesis testing allows a statistician, practitioner, or scientist to validate their model or to detect whether one of their assumptions is statistically improbable. One of the prototypical hypothesis testing tasks is two-sample goodness-of-fit, which asks to determine whether two unknown probability distributions are equal, based on samples from both. This task has received a lot of attention from the computer science community over the past decades as part of the broader area of \emph{distribution testing}, where testing questions are phrased as promise problems with a particular emphasis on finite-sample guarantees and data-efficient algorithms (see, \eg~\cite{Rubinfeld:12:Survey,Canonne:15:Survey,CanonneTopicsDT22}, \cite[Chapter~11]{Goldreich:17}, and references within). In distribution testing, two-sample goodness-of-fit corresponds to \emph{closeness testing}, where the two unknown distributions $\p,\q$ are over a discrete domain of size $\ab$. Given a distance parameter $\dst\in(0,1]$, one then seeks to distinguish (with high probability) between the cases $\p=\q$ and $\totalvardist{\p}{\q}>\dst$ from as few samples from $\p,\q$ as possible, where $\dtv$ denotes the total variation (statistical) distance. A long line of work in the distribution testing literature culminated in a full understanding of the sample complexity (\ie the number of observations required) of this question, with respect to all parameters~\cite{BFRSW:00,Valiant:11,ChanDVV14,DiakonikolasGKPP21,CanonneS22}.

The recent rise of privacy concerns, and with it a focus on privacy-preserving data analysis, led researchers to consider the natural question of \emph{private} hypothesis testing, where the samples are seen as sensitive data, and the testing algorithms must guarantee differential privacy (DP)~\cite{DMNS:06}, a rigorous definition of privacy which has become the \emph{de facto} standard. Various distribution testing questions have been studied under this lens, including that of closeness testing, for which tight bounds on the sample complexity have been established in the so-called ``central'' model of differential privacy~\cite{ADR:17,ASZ:18:DP,Zhang21}. Yet, these results suffer from one major shortcoming: they assume both sets of samples (the one coming from $\p$, and the other from $\q$) to be \emph{equally sensitive}, that is, to require the same privacy guarantees (the parameter $\priv>0$ of differential privacy, where smaller values of $\priv$ correspond to better privacy guarantees).

This assumption, while justifiable in some settings, is misguided in many others: for instance, when the two datasets come from different companies, demographic groups, or even countries subject to different legal requirements -- a typical use case for closeness testing, where one seeks to check if the data from two distinct populations have similar statistical properties. Another extreme use case would be when only one of the datasets has privacy constraints, and the other is from a simulated process (\eg when checking if the distribution of the output from a digital twin, or of synthetic data, matches that of the real world). In these cases, using the ``same $\priv$'' for data from both distributions would be unnecessarily conservative, and could lead to requiring much more (costly, or hard to either gather or generate) data from one distribution than required.

To address this shortcoming, we initiate the study of closeness testing under \emph{heterogeneous} privacy constraints, where the two datasets come with distinct privacy parameters $\priv_1,\priv_2>0$. We further formulate the question not only in the (central) DP model, but also in two others of the most widely used distributed models of privacy, the more stringent \emph{local} DP model~\cite{KLNRS:11,DuchiW:13} and the \emph{shuffle} model of privacy~\cite{CheuSUZZ19,ErlingssonFMRTT19}. To the best of our knowledge, our work is also the first to formulate closeness testing in the two latter models, even for homogeneous privacy constraints ($\priv_1=\priv_2$). 
We next elaborate on our results and detail our contributions.

\subsection{Our Results and Contributions}

Our first contribution is to formalize the question of closeness testing under heterogeneous privacy constraints in three models of differential privacy: the local model, the shuffle model, and the central model (\cref{ssec-models}). While this formalization is somewhat straightforward in the latter case, it is less so in the first two, especially in the shuffle model. Indeed, the shuffle model of privacy relies at its core on a trusted channel (the ``shuffler'') which randomly permutes the messages from all users, effectively anonymizing them. 
However, the very question of closeness testing requires the ability to distinguish between the messages from two groups of users: the ones holding samples from $\p$, and the ones holding samples from $\q$. Thus, some care has to be given in how to define the problem in the first place, putting an emphasis on what choice captures the best the possible use cases discussed earlier. Our definition for the shuffle privacy setting involves two ``shufflers'' (one per group of users), and the privacy guarantee applies to the shuffled output of each of them separately. This aligns with the practical setting where the two datasets come from users from two different company or entities, in which case the shuffling is performed in an early stage (``between'' the users and the corresponding company), \emph{before} being sent out in the world.

Other definitional choices could have been (1)~to add to each message a (non-private) label, to specify from which population they came, and use a single random permutation for all messages; or (2)~to restrict the type of permutations (no longer uniformly random) to only shuffle the messages ``within each population.'' While these two options are equivalent to the one we chose, they are less intuitive, possibly more cumbersome to analyze, and obscure the original motivation for the problem.\medskip

Our second contribution is to provide algorithms achieving non-trivial trade-offs between the two privacy parameters $\priv_1,\priv_2$ in all three models of privacy considered, showing that it \emph{is} possible balance the different privacy requirements of the two populations to do significantly better than defaulting to $\priv=\min(\priv_1,\priv_2)$. Moreover, our results are the first (even for homogeneous privacy constraints) for closeness testing in the local and shuffle models of privacy, and yield sample-optimal (up to constant factors) bounds in the former.

In order to state our results, we first recall the distinction between \emph{private-coin} and \emph{public-coin} distributed protocols:\footnote{Confusingly, the ``private'' in ``private-coin'' does not refer to differential privacy, but to the fact that the randomness (``coin'') of a user is hidden from all others.} in the former, each user has only access to their own randomness, independent of every other user's. In the latter, however, there exists a common random seed (\emph{in addition} to each user's personal randomness), publicly available to all parties (users, analyzer, and world) but still independent of the users' data. Thus, while in both cases the protocols are non-interactive, in public-coin protocols this common random seed can be used to achieve better accuracy, by letting the users somehow coordinate.\footnote{We note that while much of the work in the shuffle model (and, slightly less so, in the local model) does not focus on this distinction, we do so here as the availability of a common random seed is known to make a difference in related testing problems, such as uniformity and identity testing, in both the local and shuffle models of privacy~\cite{AcharyaCFST21,AJM:20,BCJM:20,CanonneL22} as well as in other (non-private) distributed settings~\cite{AcharyaCT:IT1,ACT:19}.} For detailed definitions of the privacy settings and types of distributed protocols, we refer the reader to~\cref{sec:preliminaries}.

Our first results address closeness testing under heterogeneous \emph{local} privacy constraints, establishing a tight trade-off in all parameters.
\begin{theorem}[Local Privacy, Private-Coin]
    \label{theo:ldp:private:coin:intro}
There exists a \emph{private-coin} protocol for closeness testing which guarantees $\priv_1$-local privacy to the $\ns_1$ users of the first group, and $\priv_2$-local privacy to the $\ns_2$ users of the second, with $\ns_1 = \bigO{\frac{\ab^{3/2}}{\priv_1^2\dst^2}}$ and $\ns_2 = \bigO{\frac{\ab^{3/2}}{\priv_2^2\dst^2}}$. Moreover, this is optimal.
\end{theorem}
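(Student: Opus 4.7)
The plan is to establish both the upper and matching lower bound: the former by combining two independent LDP randomizers (one per group, calibrated to the relevant privacy parameter) into a chi-squared-type $\ell_2^2$-estimator, and the latter by a straightforward reduction from closeness testing to uniformity testing.

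\textbf{Upper bound.} Each user in group $k \in \{1,2\}$ independently applies a private-coin $\priv_k$-LDP randomizer of the Hadamard-response / random-subset-selection flavor used in the known optimal private-coin LDP uniformity tester: the user draws a random subset $S \subseteq [\ab]$ from her own coins, computes the $\pm 1$ indicator $b = 2\cdot\mathbf{1}[X \in S] - 1$, applies $\priv_k$-randomized response to $b$, and transmits the result (together with $S$). From these messages the analyzer constructs, for each coordinate $j \in [\ab]$, unbiased estimators $\hat p_j, \hat q_j$ of $p_j, q_j$, whose per-coordinate variances scale as $O(1/(\ns_1 \priv_1^2))$ and $O(1/(\ns_2 \priv_2^2))$ respectively (in the regime $\priv_k \lesssim 1$). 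The test statistic is an unbiased $\ell_2^2$ estimator
\[
    T \;=\; \sum_{j=1}^{\ab} (\hat p_j - \hat q_j)^2 \;-\; V,
\]
where $V$ is a deterministic bias correction chosen so that $\mathbb{E}[T] = \|\p - \q\|_2^2$.

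\textbf{Completeness, soundness, and variance.} Under $\p = \q$ we have $\mathbb{E}[T] = 0$, while under $\totalvardist{\p}{\q} > \dst$ Cauchy--Schwarz gives $\mathbb{E}[T] = \|\p - \q\|_2^2 \geq 4\dst^2/\ab$. Using independence across users and across the two groups, together with standard fourth-moment bookkeeping for chi-squared statistics, yields
\[
    \operatorname{Var}(T) \;=\; O\!\left(\frac{\ab}{\ns_1^2\,\priv_1^4} + \frac{\ab}{\ns_2^2\,\priv_2^4}\right),
\]
up to lower-order cross terms involving $\|\p-\q\|_2^2$. Substituting $\ns_k = C\,\ab^{3/2}/(\priv_k^2 \dst^2)$ for a sufficiently large constant $C$ makes each term $O(\dst^4/\ab^2)$, so $\operatorname{Var}(T)$ is comfortably dominated by $\mathbb{E}[T]^2$ in the far case, and Chebyshev's inequality separates the two hypotheses with probability at least $2/3$.

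\textbf{Lower bound.} Any $(\priv_1, \priv_2)$-LDP private-coin closeness tester yields an $\priv_1$-LDP private-coin identity tester against a fixed reference distribution (say, the uniform one): the analyzer internally simulates all messages of the second group, whose inputs now come from a publicly known distribution and therefore require no privacy protection. Feeding this reduction into the known $\Omega(\ab^{3/2}/(\priv_1^2 \dst^2))$ private-coin LDP lower bound for uniformity testing yields the claimed lower bound on $\ns_1$; the analogous argument with the roles of the two groups swapped yields the bound on $\ns_2$.

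\textbf{Main obstacle.} The main technical work is the careful variance analysis of $T$, especially the cross terms arising both from squaring in the $\ell_2^2$ estimator and from the two distinct LDP randomizers, which must be controlled uniformly without the benefit of shared randomness to couple coordinates across the groups. Once the randomizer and the explicit form of the bias correction $V$ are fixed, the remaining bounds reduce to routine moment estimates.
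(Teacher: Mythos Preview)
Your proposal is correct and follows essentially the same approach as the paper: Hadamard-response-style local randomizers calibrated separately to $\priv_1,\priv_2$, de-biased per-coordinate estimators, an $\ell_2^2$-type test statistic analyzed via Chebyshev, and optimality via reduction to the known locally private identity/uniformity testing lower bound. The only differences are cosmetic---the paper deterministically partitions users into $K$ Hadamard blocks (giving i.i.d.\ product-Bernoulli samples with independent coordinates) and uses the inner-product statistic $Z_2=\langle \Delta,\Delta'\rangle$ over two sample halves rather than your bias-corrected squared sum $\sum_j(\hat p_j-\hat q_j)^2-V$, which makes the variance bookkeeping slightly cleaner (no cross-coordinate covariances to track) but leads to the same bound.
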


\begin{corollary}[Local Privacy, Public-Coin]
    \label{theo:ldp:public:coin:intro}
There exists a \emph{public-coin} protocol for closeness testing which guarantees $\priv_1$-local privacy to the $\ns_1$ users of the first group, and $\priv_2$-local privacy to the $\ns_2$ users of the second, with $\ns_1 = \bigO{\frac{\ab}{\priv_1^2\dst^2}}$ and $\ns_2 = \bigO{\frac{\ab}{\priv_2^2\dst^2}}$. Moreover, this is optimal.
\end{corollary}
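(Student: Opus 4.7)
The strategy is to combine a standard public-coin \emph{domain compression} step with the heterogeneous-privacy randomized-response protocol underlying \cref{theo:ldp:private:coin:intro}; the matching lower bound is then obtained by isolating each group and appealing to known public-coin local-privacy identity testing bounds.

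For the upper bound, use the shared coin to draw $R = \Theta(1)$ independent uniform sign vectors $\sigma^{(1)},\dots,\sigma^{(R)} \in \{-1,+1\}^\ab$, partition the $\ns_\ell$ users of each group $\ell \in \{1,2\}$ into $R$ disjoint subgroups of size $\ns_\ell/R$, and have every user in the $r$-th subgroup holding sample $X$ apply $\priv_\ell$-randomized response to the single bit $\sigma^{(r)}(X) \in \{-1,+1\}$. From the outputs the server forms, for each $r$ and $\ell$, a bias-corrected empirical mean $\hat\mu_r^{(\ell)}$ which is unbiased for $\sigma^{(r)} \cdot \p$ (resp.\ $\sigma^{(r)} \cdot \q$) with variance $O(R/(\ns_\ell\priv_\ell^2))$. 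The tester rejects iff $\hat\Delta_r^2 := (\hat\mu_r^{(1)} - \hat\mu_r^{(2)})^2 \geq \tau$ for some $r$, where $\tau$ is taken of order $V := R/(\ns_1\priv_1^2) + R/(\ns_2\priv_2^2)$.

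Under $\p=\q$, each $\hat\Delta_r$ is centered with variance $O(V)$, so a Chebyshev and union bound over $r$ gives correct acceptance. Under $\totalvardist{\p}{\q} \geq \dst$, Cauchy--Schwarz yields $\|\p-\q\|_2^2 \geq \dst^2/\ab$, and the Rademacher fourth-moment bound $\E_\sigma[(\sigma\cdot v)^4] \leq 3\|v\|_2^4$ together with Paley--Zygmund gives $\Pr_\sigma\bigl[(\sigma\cdot(\p-\q))^2 \geq c\,\dst^2/\ab\bigr] = \Omega(1)$ for an absolute constant $c>0$. Choosing $\ns_\ell = C\,\ab/(\priv_\ell^2 \dst^2)$ with $C$ sufficiently large forces $V \ll \dst^2/\ab$, so each round independently rejects (in the alternative case) with constant probability; independence across the $R$ rounds then amplifies this to a $2/3$ correctness guarantee.

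For the lower bound, setting $\priv_2 \to \infty$ (equivalently, giving the tester unrestricted access to samples from $\q$) reduces closeness testing to public-coin $\priv_1$-locally private identity testing of $\p$ against a fixed (uniform) $\q$, for which $\Omega(\ab/(\priv_1^2\dst^2))$ samples are known to be necessary~\cite{AcharyaCFST21,AJM:20}. Hence $\ns_1$ must satisfy this bound, and by symmetry so must $\ns_2$, giving the tightness of both parameters simultaneously. The main subtlety is the Paley--Zygmund step: a single random sign vector only concentrates $(\sigma\cdot v)^2$ around its mean $\|v\|_2^2$ in a weak (constant-probability) sense, so the $R$ independent rounds must be carefully allocated so as not to inflate the per-round noise beyond the $\dst^2/\ab$ signal threshold; the handling of the two different per-group noise variances is otherwise routine, as they appear additively in $V$ and can be balanced separately.
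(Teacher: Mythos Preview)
Your argument is correct and follows the same high-level strategy as the paper---use the public randomness to hash the domain down to constant size, then run a locally private protocol on the resulting instance---but your execution is more self-contained. The paper invokes the domain compression lemma of~\cite{AcharyaCHST20} as a black box (compressing to a constant-size domain $L$ and losing a $\sqrt{L/\ab}$ factor in the distance) and then applies the full private-coin Hadamard-response protocol of \cref{theo:ldp:private:coin:intro} on the compressed instance. You instead specialize directly to $L=2$ via random sign vectors, prove the ``distance preservation'' step yourself via Paley--Zygmund on the Rademacher inner product, and bypass the Hadamard machinery entirely by using single-bit randomized response. Both routes give the same bound; yours is more elementary and avoids citing two black boxes, while the paper's is modular and reuses the private-coin theorem verbatim.

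One minor point on your lower bound: the phrasing ``setting $\priv_2\to\infty$'' is slightly loose, since removing the privacy constraint on the second group still leaves you with only sample access to $\q$, not full knowledge of it. The clean reduction (which is what the paper implicitly relies on) is that identity testing against a \emph{fixed} reference $\q$ is a special case of closeness testing---simulate the second group by generating fresh samples from the known $\q$---so the public-coin $\priv_1$-LDP identity-testing lower bound $\Omega(\ab/(\priv_1^2\dst^2))$ from~\cite{AcharyaCT:IT1,AJM:20} applies directly to $\ns_1$, and symmetrically to $\ns_2$.
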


Our next two results concern the shuffle model of privacy, with algorithms guaranteeing (approximate) differential privacy. For simplicity, we only provide here an informal statement, omitting the at most logarithmic dependence on the parameter $\privdelta$ and focusing on $\priv_1,\priv_2$. We refer the reader to~\cref{theo:sdp_ns,theo:sdp_public_ns} for the full statements.
\begin{theorem}[Shuffle Privacy, Private-Coin (Informal)]
    \label{theo:shuffle:private:coin:intro}
There exists a \emph{private-coin} protocol for closeness testing which guarantees $\priv_1$-shuffle (approximate) privacy to the $\ns_1$ users of the first group, and $\priv_2$-shuffle (approximate) privacy to the $\ns_2$ users of the second, with 
\[
    \ns_1 = \bigO{\max\Paren{\frac{\ab^{1/2}}{\dst^2}, \frac{\ab^{3/4}}{\priv_1\dst}}}
\]
and $\ns_2 = \bigO{\frac{\priv_1^2}{\priv_2^2}\ns_1}$ (assuming without loss of generality that $\priv_2 \leq \priv_1)$. 
\end{theorem}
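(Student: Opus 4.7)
My plan is to process each user group separately through a shuffle-DP histogram primitive, and then apply the Chan--Diakonikolas--Valiant--Valiant (CDVV) style unbiased $\ell_2^2$ closeness statistic to the two resulting noisy histograms. Concretely, I would invoke an off-the-shelf shuffle-DP histogram protocol on each group: group $k \in \{1,2\}$ uses its own shuffler to release $\tilde C_k = C_k + Z_k$ satisfying $(\priv_k,\privdelta)$-shuffle DP, where $C_k$ is the true histogram of the $\ns_k$ samples and the coordinates of $Z_k$ are independent and zero-mean with per-coordinate variance $\sigma_k^2 = \bigO{\log(1/\privdelta)/\priv_k^2}$. Since the two shufflers are disjoint, the joint release achieves the desired heterogeneous shuffle-DP guarantee by post-processing.

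The analyzer then computes the noise-corrected CDVV statistic
\[
\tilde T = \sum_{i=1}^\ab \Bigl[\Bigl(\frac{\tilde C_{1,i}}{\ns_1} - \frac{\tilde C_{2,i}}{\ns_2}\Bigr)^2 - \frac{\tilde C_{1,i}}{\ns_1^2} - \frac{\tilde C_{2,i}}{\ns_2^2}\Bigr] - \frac{\ab\sigma_1^2}{\ns_1^2} - \frac{\ab\sigma_2^2}{\ns_2^2},
\]
and compares it to a threshold of order $\dst^2/\ab$. A Poissonization argument (following CDVV) gives $\ex{\tilde T} = \normtwo{\p - \q}^2$, which vanishes in the ``equal'' case and is at least $4\dst^2/\ab$ in the ``far'' case by Cauchy--Schwarz applied to $\normone{\p-\q} > 2\dst$.

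The core technical step is bounding $\var{\tilde T}$ below the square of that gap. Decomposing $\tilde T$ into sampling-only, noise-only, and cross-term contributions, and exploiting independence between groups and between samples and DP noise, $\var{\tilde T}$ splits as (i)~the non-private CDVV variance, of order $\ab/(\ns_1\ns_2)^2 + 1/(\ab\ns_1^2) + 1/(\ab\ns_2^2)$ in the uniform worst case, which forces the $\ab^{1/2}/\dst^2$ term on each $\ns_k$; (ii)~the pure-noise contribution, dominated by $\ab\sigma_1^4/\ns_1^4 + \ab\sigma_2^4/\ns_2^4 + \ab\sigma_1^2\sigma_2^2/(\ns_1\ns_2)^2$, which forces $\ns_1 \gtrsim \ab^{3/4}/(\priv_1\dst)$; and (iii)~the mixed sampling/DP-noise cross-terms, which one shows to be subdominant. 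The choice $\ns_2 = \Theta((\priv_1/\priv_2)^2 \ns_1)$ then emerges by equalizing, across the two groups, the effective ``information per user'' (which scales as $\priv_k^2$ in the high-privacy regime), mirroring the quadratic dependence already visible in the LDP result of~\cref{theo:ldp:private:coin:intro}.

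I expect the main obstacle to be the careful variance book-keeping under this asymmetric scaling: one must verify that the numerous cross-terms (in particular the mixed ones of the form $\sigma_k^2/(\ab\ns_k^2)$ arising from sampling randomness of one group times DP noise of the other) all remain within budget given the chosen $\ns_1,\ns_2$. A secondary technical issue is picking a shuffle-DP histogram primitive whose noise is coordinate-wise independent and has controlled higher moments, so that a plain Chebyshev-based thresholding of $\tilde T$ suffices; with a less well-behaved primitive one would need to replace the Chebyshev step with a finer concentration argument.
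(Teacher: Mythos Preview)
Your proposal has a genuine gap in step~(i). The statistic you wrote is the \emph{unnormalized} $\ell_2^2$ estimator of $\normtwo{\p-\q}^2$, with no denominator term. For this estimator the sampling variance is not ``$\ab/(\ns_1\ns_2)^2 + 1/(\ab\ns_1^2) + 1/(\ab\ns_2^2)$ in the uniform worst case''; rather, uniform is the \emph{best} case. For general $\p=\q$ the sampling variance scales like $\sum_i(\p_i/\ns_1+\q_i/\ns_2)^2 \asymp \normtwo{\p}^2(1/\ns_1+1/\ns_2)^2$, and since $\normtwo{\p}$ can be as large as $\Theta(1)$ this forces $\ns_1,\ns_2 \gtrsim \ab/\dst^2$ rather than $\sqrt{\ab}/\dst^2$. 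The actual CDVV statistic has the crucial $1/(C_{1,i}+C_{2,i})$ normalization precisely to kill this dependence on $\normtwo{\p}$, but once you add that denominator your clean additive noise decomposition~(i)--(iii) no longer holds, and the analysis becomes substantially harder (the noise now sits inside a ratio).

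The paper avoids this issue entirely by a different route. It uses the distributed Poisson mechanism so that the shuffled output of group~$k$ is \emph{distributionally identical} to $\poisson{N_k}$ i.i.d.\ samples from the mixture $\p'_k = (1-\gamma_k)\p_k + \gamma_k\uniform_\ab$, with $\gamma_k = \ab\mu_k/(\ns_k+\ab\mu_k)$ and $\mu_k\asymp\log(1/\privdelta)/\priv_k^2$. The relation $\ns_2 \asymp (\priv_1/\priv_2)^2\ns_1$ is then \emph{forced} by requiring $\gamma_1=\gamma_2$, so that $\p'=\q'$ whenever $\p=\q$; it is not an ``information balancing'' heuristic. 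Once $\gamma_1=\gamma_2$, the task becomes a purely non-private closeness test between $\p'$ and $\q'$ with distance parameter $(1-\gamma)\dst$ and unequal sample sizes $N_1,N_2$, for which one invokes an off-the-shelf optimal tester (e.g.\ Diakonikolas--Kane) as a black box. That black box already handles arbitrary $\p',\q'$ internally, so the $\normtwo{\p}$ issue never arises. Your approach could be salvaged by either inserting a flattening step or switching to the normalized statistic, but both would require substantially more work than the paper's reduction.
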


\begin{corollary}[Shuffle Privacy, Public-Coin (Informal)]
    \label{theo:shuffle:public:coin:intro}
There exists a \emph{public-coin} protocol for closeness testing which guarantees $\priv_1$-shuffle (approximate) privacy to the $\ns_1$ users of the first group, and $\priv_2$-shuffle (approximate) privacy to the $\ns_2$ users of the second, with 
\[
    \ns_1 = \bigO{\max\Paren{\frac{\ab^{1/2}}{\dst^2}, \frac{\ab^{1/2}}{\priv_1\dst}, \frac{\ab^{2/3}}{\priv_1^{2/3}\dst^{4/3}}}}
\]
and $\ns_2 = \bigO{\frac{\priv_1^2}{\priv_2^2}\ns_1}$ (assuming without loss of generality that $\priv_2 \leq \priv_1)$. 
\end{corollary}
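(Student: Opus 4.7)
The plan is to reduce to the private-coin shuffle protocol of~\cref{theo:sdp_ns} via a \emph{domain compression} step that leverages the shared random seed. The shared randomness lets all users apply a common reduction to the alphabet ``for free'' (from the privacy standpoint), which enables a standard trade-off between the effective alphabet size and the effective distance parameter.

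First, I would have all parties use the public random seed to sample a uniformly random partition $\Pi\colon[\ab]\to[\ab']$, where $\ab'\leq\ab$ is a parameter to be optimized. Each user applies $\Pi$ locally to their sample before executing the private-coin protocol from~\cref{theo:sdp_ns} on the reduced alphabet. Since $\Pi$ depends only on the public randomness and the user's own sample, this step is post-processing and does not affect the shuffle-DP guarantees. In the completeness case ($\p=\q$), the reduced distributions $\p^\Pi,\q^\Pi$ on $[\ab']$ coincide; in the soundness case, a standard domain-compression lemma ensures that with constant probability over $\Pi$, $\totalvardist{\p^\Pi}{\q^\Pi}=\Omega\!\Paren{\dst\sqrt{\ab'/\ab}}$ whenever $\totalvardist{\p}{\q}\geq\dst$. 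Conditioning on this good event (or amplifying with $O(1)$ independent partitions) reduces the instance to closeness testing on $[\ab']$ at distance $\dst'=\Theta(\dst\sqrt{\ab'/\ab})$.

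Plugging these parameters into~\cref{theo:sdp_ns} gives sample complexity
\[
\ns_1 = \bigO{\max\Paren{\frac{\ab}{\sqrt{\ab'}\,\dst^2},\, \frac{(\ab')^{1/4}\sqrt{\ab}}{\priv_1\dst}}},
\]
with $\ns_2=\bigO{(\priv_1/\priv_2)^2\ns_1}$ inherited from the private-coin statement. Optimizing over $\ab'\in[1,\ab]$, the two varying terms balance at $\ab'\asymp \ab^{2/3}\priv_1^{4/3}/\dst^{4/3}$, producing the $\ab^{2/3}/(\priv_1^{2/3}\dst^{4/3})$ term in the final bound; the boundary regimes $\ab'=\Theta(\ab)$ (when $\priv_1$ is large) and $\ab'=\Theta(1)$ (when $\priv_1$ is very small) respectively contribute the $\sqrt{\ab}/\dst^2$ and $\sqrt{\ab}/(\priv_1\dst)$ terms, so the maximum over the three regimes matches the claim.

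The main technical obstacle is pinning down the ``$\Omega(\dst\sqrt{\ab'/\ab})$'' distance-preservation guarantee under random partitioning with a sufficiently large success probability to be amplified to a high-probability reduction, and confirming that the compressed inputs are compatible with the hypotheses of~\cref{theo:sdp_ns} (notably, independence of samples across users is preserved, and no additional privacy cost is introduced by the public-randomness-driven relabeling). Once these checks are in place, the result follows by plugging public-coin domain compression into~\cref{theo:sdp_ns}.
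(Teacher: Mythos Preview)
Your proposal is correct and follows essentially the same route as the paper: domain compression via the shared randomness, then invoking~\cref{theo:sdp_ns} on the compressed instance, and optimizing the compressed alphabet size (your $\ab'\asymp \ab^{2/3}\priv_1^{4/3}/\dst^{4/3}$ coincides with the paper's choice $L\asymp \ab^{2/3}/(\dst^{4/3}\mu_1^{2/3})$ once one substitutes $\mu_1\asymp 1/\priv_1^2$). The only minor slip is that you silently drop the third summand of the~\cref{theo:sdp_ns} bound before optimizing; the paper carries it through and then shows it is dominated by the $\ab^{2/3}/(\dst^{4/3}\priv_1^{2/3})$ term, so this does not affect the outcome but should be mentioned for completeness.
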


Interestingly, specializing our results to the homogeneous privacy constraints case ($\priv_1=\priv_2$), our upper bounds for local and shuffle privacy match\footnote{That is, exactly match the optimal sample complexity in the locally private case; and match the best known upper bounds in the shuffle private case for approximate privacy (or for pure privacy if one treats $\privdelta$ as a constant), and nearly match the corresponding lower bounds for shuffle DP.} the best known algorithms for the simpler problem of \emph{identity} testing (where one of the two distributions is fully known in advance). This shows, perhaps surprisingly, that unlike in the non-private and central DP cases, there is no sample complexity gap between closeness and identity testing.\medskip 

Finally, we provide a simple closeness testing algorithm for the \emph{central} model of differential privacy:
\begin{theorem}[Central Privacy]
    \label{theo:central:intro}
There exists an algorithm for closeness testing which guarantees $\priv_1$-differential privacy to the $\ns_1$ users of the first group, and $\priv_2$-differential privacy to the $\ns_2$ users of the second, with 
\[
    \ns_1 = \bigO{\max\Paren{\frac{\ab^{1/2}}{\dst^2}, \frac{\ab^{1/2}}{\priv_1^{1/2}\dst}, \frac{\ab^{2/3}}{\dst^{4/3}}, \frac{\ab^{1/3}}{\priv_1^{2/3}\dst^{4/3}}, \frac{1}{\priv_1\dst}}}
\]
and $\ns_2 = \bigO{\frac{\priv_1}{\priv_2}\ns_1}$ (assuming without loss of generality that $\priv_2 \leq \priv_1)$. 
\end{theorem}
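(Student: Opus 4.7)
The plan is to follow the standard pipeline for private closeness testing in the central model---Poissonized sampling plus a flattening reduction plus a bias-corrected chi-square statistic---but with independent per-bin Laplace noise of scales tailored separately to each dataset's privacy parameter. Concretely, I would first apply the \emph{flattening} reduction (via a small preliminary sample from each distribution) so that without loss of generality one may assume $\|\p\|_\infty, \|\q\|_\infty = \bigO{1/\ab}$; this forces $\|\p-\q\|_2^2 \gtrsim \dst^2/\ab$ whenever $\totalvardist{\p}{\q}>\dst$. I would then draw Poissonized samples yielding per-bin counts $X_i$ (from $\p$, with $\ns_1$ expected samples) and $Y_i$ (from $\q$, with $\ns_2$ expected samples), and release the noisy counts $\tilde X_i = X_i + W_i$ and $\tilde Y_i = Y_i + V_i$ with independent $W_i \sim \mathrm{Lap}(1/\priv_1)$ and $V_i \sim \mathrm{Lap}(1/\priv_2)$. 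Privacy is immediate: under Poissonization each user affects exactly one bin with sensitivity $1$, so by parallel composition of the Laplace mechanism the vector $(\tilde X_i)_i$ is $\priv_1$-DP and $(\tilde Y_i)_i$ is $\priv_2$-DP.

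For utility, I would use the bias-corrected $\ell_2$ statistic
\[
Z = \sum_{i=1}^{\ab} \left[\Paren{\frac{\tilde X_i}{\ns_1} - \frac{\tilde Y_i}{\ns_2}}^2 - \frac{\tilde X_i}{\ns_1^2} - \frac{\tilde Y_i}{\ns_2^2} - \frac{2}{\priv_1^2 \ns_1^2} - \frac{2}{\priv_2^2 \ns_2^2}\right],
\]
which an elementary computation shows satisfies $\mathbb{E}[Z] = \|\p-\q\|_2^2$. Independence of the bins and of the added noises gives $\mathrm{Var}[Z] = \sum_i \mathrm{Var}[Z_i]$, and each per-bin variance splits as a polynomial in the Poisson fluctuations and the Laplace noises. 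Grouping terms by their type (Poisson--Poisson, Poisson--Laplace, Laplace--Laplace) and comparing against the squared signal gap $\Paren{\dst^2/\ab}^2$ via Chebyshev's inequality yields a set of constraints on $\ns_1$ that, after taking the maximum, recover the five terms in the theorem: the Poisson--Poisson buckets give the non-private terms $\ab^{1/2}/\dst^2$ and $\ab^{2/3}/\dst^{4/3}$, the mixed Poisson--Laplace buckets give $\ab^{1/2}/(\priv_1^{1/2}\dst)$ and $\ab^{1/3}/(\priv_1^{2/3}\dst^{4/3})$, and the pure Laplace contributions (including the Laplace fluctuations of the bias-correction terms $\tilde X_i/\ns_1^2$) give the $1/(\priv_1\dst)$ term.

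The main obstacle is carrying out this variance decomposition carefully---especially tracking the contribution of the bias-correction random variables, which themselves carry Laplace noise---and then verifying that the scaling $\ns_2 = (\priv_1/\priv_2)\ns_1$ makes the second dataset's variance sub-leading in \emph{every} regime. The linear (rather than quadratic) dependence on $\priv_1/\priv_2$ is natural in the central model: Laplace noise has standard deviation $\bigO{1/\priv}$ per bin, so matching signal-to-noise across the two datasets only requires balancing $\ns_j \priv_j$, whereas the local and shuffle models (where each user individually pays the $\priv$-cost) require balancing $\ns_j \priv_j^2$. Once the per-regime bounds are in place and the threshold for $Z$ is set halfway between $0$ and the signal lower bound $\dst^2/\ab$, the proof concludes by taking the maximum over the five regimes for $\ns_1$ and setting $\ns_2 = (\priv_1/\priv_2)\ns_1$; the only somewhat tedious part is the case analysis itself, not the underlying ideas.
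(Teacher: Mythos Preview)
Your proposal takes a fundamentally different route from the paper, and unfortunately it does not reach the stated bound. The paper's proof is a two-line black-box argument: run the (homogeneous) $\priv_1$-DP closeness tester of Zhang on $\ns_1$ samples from each side, and for the second group obtain those $\ns_1$ samples by subsampling uniformly from a larger pool of $\ns_2=(\priv_1/\priv_2)\ns_1$ users; privacy amplification by subsampling then converts the $\priv_1$ guarantee into $\priv_2$. No new statistic, no new variance analysis.

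Your pipeline has two genuine gaps. First, the flattening step is not privacy-free: the partition of the domain depends on the preliminary samples, and a single flattening sample can change the identity of the bins themselves, giving sensitivity as large as $\Theta(\ns)$ for everything downstream. Laplace noise on the \emph{testing} counts does nothing to protect the flattening samples, so the scheme as described is not $\priv_1$-DP with respect to the first group. The paper explicitly flags this obstacle when discussing alternative approaches.

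Second, even granting the ``flat'' assumption $\|\p\|_\infty=O(1/\ab)$, per-bin Laplace noise does not yield the five terms in the theorem. The pure Laplace--Laplace contribution to $\mathrm{Var}(Z)$ is $\Theta\bigl(\ab/(\priv_1^4\ns_1^4)\bigr)$ (from $\sum_i \mathrm{Var}((W_i/\ns_1)^2)$), and comparing this to the signal $(\dst^2/\ab)^2$ forces $\ns_1\gtrsim \ab^{3/4}/(\priv_1\dst)$, a term that is \emph{not} dominated by the theorem's maximum (take $\priv_1=\Theta(1)$, $\dst=\Theta(1)$: you get $\ab^{3/4}$ versus the correct $\ab^{2/3}$). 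Your claimed mixed-term exponents $\ab^{1/2}/(\priv_1^{1/2}\dst)$ and $\ab^{1/3}/(\priv_1^{2/3}\dst^{4/3})$ also do not match what the variance decomposition actually produces. The reason Zhang's algorithm avoids this is that it adds a \emph{single} Laplace noise to an $O(1)$-sensitive scalar statistic rather than $\ab$ noises to the histogram; per-bin privatization is intrinsically lossier here.
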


Our results can be interpreted in two ways. The first focuses on privacy as a \emph{requirement} from the two groups of users, and looks at how the costs varies among the two populations. That is, our results quantifies how much more data one needs to collect from the group with more stringent privacy requirements, compared to the group of less ``privacy-demanding'' users. Our upper bounds show that the overhead scales at most quadratically with the ratio of privacy parameters, \ie as $(\priv_1/\priv_2)^2$ under local and shuffle privacy, and as $\priv_1/\priv_2$ under (central) differential privacy.

The second point of view focuses on privacy as a \emph{promise} (or incentive) instead of a requirement: under this lens, our results show that if more users from one group are willing to participate in the analysis, or if one of the two populations is larger, then it can automatically be guaranteed better data privacy (and our algorithms provide a bound on ``how much more privacy'' this is).

\subsection{Overview of Techniques}
\label{sec:technique}
We now outline the main ideas behind our results, and outline some possible approaches to improve upon them.

\paragraph{Local privacy.} The algorithm underlying \cref{theo:ldp:private:coin:intro} relies on \emph{Hadamard response}, which was proposed in~\cite{AcharyaSZ19} as a communication-efficient mechanism under local differential privacy constraints. As shown in~\cite{AcharyaCFST21}, when the privacy parameters of two groups of users are identical, this mechanism allows reducing the original closeness testing problem to testing whether the $\lp[2]$-distance between the mean vector of two product-Bernoulli distributions (product distributions over $\bool^\dims$) is $0$ or larger than some parameter. Any sample-optimal $\lp[2]$-closeness testing algorithm for product-Bernoulli distributions can be used to achieve this task efficiently. Here, noise is added to samples from these two product-Bernoulli distributions to preserve privacy. When the privacy parameter becomes smaller (more privacy), the two product-Bernoulli distributions are perturbed by more noise, \ie each attribute in the mean vector will be closer to $\frac{1}{2}$. However, when the privacy parameters are heterogeneous this reduction does not go through as is, since even in the case when $\p=\q$ the transformation will lead to two product-Bernoulli distributions with distinct mean vectors. In particular,  previous $\lp[2]$-closeness (of the mean vectors) testing algorithm for product distributions can no longer be used as a blackbox to achieve the task.

To adapt it to the heterogeneous privacy case, we considered a test statistic for testing closeness of two product-Bernoulli distributions with different levels of noise. A side product is a new sample-optimal $\lp[2]$-closeness tester for product distributions, which simplifies previous algorithms (which were designed for a more general problem, either testing closeness of the product distributions in total variation distance instead of $\lp[2]$ distance of the mean vectors, or without the independence assumption in the soundness case).

\cref{theo:ldp:public:coin:intro} then follows from combining~\cref{theo:ldp:private:coin:intro} and the \emph{domain compression} primitive, which was proposed in~\cite{AcharyaCHST20} as a general technique to derive public-coin schemes from private-coin ones: at a high level, the idea is for the users to leverage public randomness in order to jointly hash the domain of size $\ab$ into $\ab'\ll \ab$ parts, and to consider the induced probability distributions on these $\ab'$ parts. This was shown to preserve the total variation distance between probability distributions up to a shrinking factor of $\sqrt{\ab'/\ab}$, effectively ``replacing'' $\dst$ by $\dst' \asymp \dst\sqrt{\ab'/\ab}$. Selecting the optimal value of $\ab'$ to minimize the resulting sample complexity when applying the private-coin algorithm with the new parameters $\ab'$ and $\dst'$ (in this case, $\ab'=2$) leads to the public-coin algorithm.

\paragraph{Shuffle privacy.} Turning to shuffle privacy, the algorithm behind \cref{theo:shuffle:private:coin:intro} starts with the following observation, which was somewhat implicit in~\cite{CanonneL22} in the context of uniformity testing: if $\ns$ users get each a sample from some distribution $\p$ over $[\ab]$ and use the distributed Poisson mechanism with parameter $\frac{\mu}{\ns} = \bigO{{1}/{(\ns\priv^2)}}$ to ``privately report'' their data in the shuffle model, then the central server gets access to $N\eqdef \ns+\ab\mu$ \iid samples from a new mixture distribution
\[
    \p' \eqdef (1-\gamma)\cdot\p + \gamma\cdot \uniform_{\ab}
\]
where $\gamma \eqdef \ab\mu/N$ and $\uniform_{\ab}$ is the uniform distribution on $[\ab]$. 
(This is not totally accurate as stated, but becomes true if we replace ``$\ns$ users'' and ``$N$ samples'' by ``$\poisson{\ns}$ users'' and $\poisson{N}$ samples.'') In our case, this means that we can obtain $N_1$ ``$\priv_1$-private'' samples from $\p'$ and $N_2$ ``$\priv_2$-private'' samples from $\q'$, where
\begin{align*}
    \p' &\eqdef (1-\gamma_1)\cdot\p + \gamma_1\cdot \uniform_{\ab} \\
    \q' &\eqdef (1-\gamma_2)\cdot\q + \gamma_2\cdot \uniform_{\ab}
\end{align*}
where $\gamma_1 \eqdef \ab\mu_1/N_1=\bigO{\frac{\ab}{\ns_1N_1\priv_1^2}}$ and $\gamma_2 \eqdef \ab\mu_2/N_2=\bigO{\frac{\ab}{\ns_2N_2\priv_2^2}}$ (ignoring again, for the sake of this discussion, the dependence on the second privacy parameter $\privdelta$). Now, a natural requirement is to ask that $\p'=\q'$ whenever $\p=\q$, so that our original private closeness testing task on $\p,\q$ reduces to a new one, \emph{non-private}, on $\p',\q'$, for which we can leverage the existing results on closeness testing. 

Doing so boils down to enforcing $\gamma_1=\gamma_2$, which in turn leads to the requirement $\ns_2 = \bigTheta{({\priv_1}/{\priv_2})^2\ns_1}$; this also gives the ``new'' distance parameter $\dst' \eqdef (1-\gamma_1)\dst=(1-\gamma_2)\dst$ for resulting closeness testing problem on $\p',\q'$. All that remains is to now invoke an existing and optimal (non-private) closeness testing algorithm with unequal numbers of samples (since $N_1\neq N_2$), \eg that of~\cite{DK:16}, and derive the resulting conditions on $N_1,N_2$ (and thus on $\ns_1,\ns_2$) this yields to establish~\cref{theo:shuffle:private:coin:intro}.

As in the locally private case, the public-coin case (\cref{theo:shuffle:public:coin:intro}) then follows \textit{via} the domain compression technique, by selecting the optimal number of parts $\ab'\eqdef \ab'(\ab,\dst,\priv_1,\priv_2)$ to hash the domain into, to minimize the resulting sample complexity when plugging $\ab'$ and $\dst'\asymp\dst\sqrt{\ab'/\ab}$ into~\cref{theo:shuffle:private:coin:intro}, subject to $2\leq \ab'\leq \ab$.\medskip

\noindent\emph{What about amplification by shuffling?} We note that a natural idea to obtain a (different) private-coin shuffle private algorithm (and, via domain compression, a corresponding-public-coin one as well) would be to start from a locally private algorithm under heterogeneous privacy constraints and apply the amplification by shuffling result of~\cite{FeldmanMT21}. This idea was used in~\cite{CanonneL22} for \emph{identity} testing (under homogeneous privacy constraints), \ie testing whether an unknown distribution is equal to a fully known reference one. However, this approach comes with two conditions on the LDP protocol one starts with: (1)~all users of the same group must use the same local randomizer, and (2)~the protocol needs to work reasonably well even for large privacy parameters $\priv_1,\priv_2\gg 1$, \ie in the low-privacy regime. Unfortunately, the LDP protocol behind~\cref{theo:ldp:private:coin:intro} satisfies (1) but not~(2), and thus amplification by shuffling would not lead to a shuffle private algorithm with good enough sample complexity; and the LDP identity testing algorithm of~\cite{CanonneL22} does not seem to generalize to closeness testing, let alone closeness testing under heterogeneous privacy constraints. We believe that obtaining a sample-optimal LDP algorithm under heterogeneous privacy constraints satisfying (1) and (2) could lead to an improvement over~\cref{theo:shuffle:private:coin:intro,theo:shuffle:public:coin:intro} (in terms of $\ns_2$), and leave this as an interesting future direction.

\paragraph{Central privacy.} Finally,~\cref{theo:central:intro} follows from combining two ingredients: the first is the (sample-optimal) closeness testing algorithm of~\cite{Zhang21} for the central model of differential privacy for the equal-privacy case, which has sample complexity
\[
\bigO{\max\Paren{\frac{\ab^{1/2}}{\dst^2}, \frac{\ab^{1/2}}{\priv^{1/2}\dst}, \frac{\ab^{2/3}}{\dst^{4/3}}, \frac{\ab^{1/3}}{\priv^{2/3}\dst^{4/3}}, \frac{1}{\priv\dst}}}\,.
\]
This algorithm relies on adding suitably calibrated noise to the non-private, but low-sensitivity closeness testing algorithm of~\cite{DiakonikolasGKPP21}; this low sensitivity (\ie robust to changing any single sample) is crucial when bounding the noise required to make the algorithm differentially private, as adding Laplace noise with parameter $O(1/\priv)$ \emph{independent of $\ns,\ab,\alpha$} then suffices. 

A natural idea to adapt this to the heterogeneous privacy case would be, as for our shuffle privacy protocol, to first extend this algorithm to the ``uneven-sample case'' (i.e., $\ns_2 \geq \ns_1$) and then introduce two different levels of noise, balancing $\ns_1,\ns_2$ accordingly in order to achieve privacy with parameters $\priv_1,\priv_2$. Unfortunately, as we discuss further in~\cref{ssec:central:dp}, this approach turns out to be much trickier than expected, as the main algorithm of~\cite{DiakonikolasGKPP21} does not appear to easily generalize to the  $\ns_1 \neq \ns_2$ case. Worse, any attempt to do so (as done in~\cite{DiakonikolasGKPP21} for this particular algorithm, via the so-called ``flattening'' technique, or using the other non-private, uneven-sample closeness testing algorithms available in the literature) results in algorithms with \emph{much} higher sensitivity, requiring too large a level of random noise in the ``privatization'' step and leading to vacuous sample complexity bounds.

Instead, we take recourse to a much simpler technique, \emph{privacy amplification by subsampling}~\cite{LiQS12} (as used for a similar goal in~\cite{JorgensenYC15}). The idea is to use the above algorithm of~\cite{Zhang21} with privacy parameter $\priv_1$, requiring $\ns_1$ samples (where $\ns_1$ is given by the above equation) from both groups. Now, this achieves $\priv_1$-privacy for the first group; as for the second, which requires a better privacy guarantee, we start with a larger group of $\ns_2$ users and subsample by picking uniformly at random a subgroup of $\ns_1$ users. By a standard argument, this improves the privacy guarantee from $\priv_1$ to $\frac{\ns_1}{\ns_2}\priv_1=\priv_2$, as desired.
\subsection{Related Work}

Uniformity, identity, and closeness testing are three of flagship (and related) questions in distribution testing, with a long history in classical statistics, where the analysis is under an asymptotic regime, as the number of samples goes to infinity. In contrast, computer scientists often study these problems under the framework of property testing, where one wishes to achieve certain accuracy with a limited number of samples (\ie a particular focus on the ``finite-sample'' regime). For this regime, goodness-of-fit testing without privacy constraints has been extensively studied, with sample complexity bounds summarized in~\cref{tab:testing}\footnote{It is well known that uniformity testing and identity testing share the same upper bound and lower bound of sample complexity, as these two problems can be reduced to each other. Thus, we only include 'identity testing' in this table. }. We refer the readers to~\cite{Canonne:first:survey,BW:17,CanonneTopicsDT22} for surveys of the area.

\begin{table}[ht]
	\caption{Sample complexity bounds of goodness-of-fit testing without privacy constraints}
	\centering
	\begin{tabular}{lll}
		\toprule
		\cmidrule(r){1-2}
		Testing     & Upper bound     & Lower bound \\
		\midrule
		Identity testing  & $\bigO{ \frac{\ab^{1/2}}{\dst^{2}}}$   & tight     \\
		Closeness testing & $\bigO{ \frac{\ab^{1/2}}{\dst^{2}} + \frac{\ab^{2/3}}{\dst^{4/3}}}$   &  tight\\
		\bottomrule
	\end{tabular}
	\label{tab:testing}
\end{table}
A significant body of work has considered these questions under (homogeneous) DP constraints: we list previous results in~\cref{tab:testing:DP}. Most relevant to our work, ~\cite{Zhang21} proposed a sample-optimal closeness-testing algorithm under central differential privacy by leveraging the (non-private) test statistic of~\cite{DiakonikolasGKPP21}.~\cite{AcharyaCFST21} gave sample-optimal identity testing algorithms under local differential privacy for both private-coin and public-coin settings. For shuffle differential privacy,~\cite{BCJM:20,CanonneL22} gave identity testing algorithms with the same complexity bounds on the required number of samples (approximate DP), and~\cite{CheuY21} later provided analogous bounds in the pure privacy setting; however, the exact sample complexity of the question for homogeneous shuffle privacy (either pure or approximate) remains open.

\begin{table}[ht]
\footnotesize
	\caption{Sample complexity bounds for goodness-of-fit testing under homogeneous differential privacy. (The shuffle privacy bounds from~\cite{CheuY21}, marked with ``*'', hold for \emph{pure} privacy, \ie without the logarithmic factor in $1/\privdelta$).}
	\centering
	\begin{tabular}{llll}
		\toprule
		  & \bf Testing question    &\bf  Upper bound     &\bf  Lower bound \\

		\midrule
		\multirow{2}{*}{\bf Local} & Identity, private-coin     & $\bigO{\frac{\ab^{3/2}}{\dst^{2} \priv^2}} $ ~\cite{AcharyaCFST21} & tight ~\cite{AcharyaCT:IT1}     \\
    &Identity, public-coin &$\bigO{\frac{\ab}{\dst^{2} \priv^2}} $ ~\cite{AcharyaCFST21} & tight ~\cite{AcharyaCT:IT1,AJM:20} \\
        \cline{2-4}
		& Closeness, private-coin     & No result   &  $\bigOmega{\frac{\ab^{3/2}}{\dst^{2} \priv^2}} $ ~\cite{AcharyaCT:IT1} \\
        & Closeness, public-coin     & No result   &  $\bigOmega{\frac{\ab}{\dst^{2} \priv^2}} $ ~\cite{AcharyaCT:IT1,AJM:20} \\

        \midrule
		\multirow{2}{*}{\bf Shuffle} & Identity, private-coin     &  $\bigO{\frac{\ab^{3/4}}{\dst \priv}\sqrt{\log(\frac{1}{\privdelta})}+\frac{\sqrt{\ab}}{\dst^2}}$ ~\cite{BCJM:20,CanonneL22},~\cite{CheuY21}*   &   $\bigOmega{\frac{\ab^{2/3}}{\dst^{4/3}\priv^{2/3}}+\frac{\sqrt{\ab}}{\dst^2} + \frac{1}{\dst \priv}}$~\cite{BCJM:20}      \\ 
        &Identity, public-coin & $\bigO{\frac{\ab^{2/3}}{\dst^{4/3}\priv^{2/3}}\log^{1/3}\frac{1}{\privdelta}+\frac{\sqrt{\ab}}{\dst\priv}\log^{1/2}\frac{1}{\privdelta}+\frac{\sqrt{\ab}}{\dst^2}}$~\cite{CanonneL22},~\cite{CheuY21}* & $\bigOmega{\frac{\ab^{2/3}}{\dst^{4/3}\priv^{2/3}}+\frac{\sqrt{\ab}}{\dst^2} + \frac{1}{\dst \priv}}$~\cite{BCJM:20} \\
        \cline{2-4}
        
		&Closeness (either)    & No result   &  $\bigOmega{\frac{\ab^{2/3}}{\dst^{4/3}\priv^{2/3}}+\frac{\sqrt{\ab}}{\dst^2} + \frac{1}{\dst \priv}}$~\cite{BCJM:20}  \\
  
		\midrule
		\multirow{2}{*}{\bf Central} & Identity  & $\bigO{ \frac{\ab^{1/2}}{\dst^{2}} + \frac{\ab^{1/2}}{\dst \priv^{1/2}} +
		\frac{\ab^{1/3}}{\dst^{4/3} \priv^{2/3}} + \frac{1}{\dst \priv}}$ ~\cite{ASZ:18:DP}  & tight ~\cite{ASZ:18:DP}     \\
		& Closeness & $\bigO{ \frac{\ab^{1/2}}{\dst^{2}} + \frac{\ab^{2/3}}{\dst^{4/3}} + \frac{\ab^{1/2}}{\dst \priv^{1/2}} +
		\frac{\ab^{1/3}}{\dst^{4/3} \priv^{2/3}} + \frac{1}{\dst \priv}}$ ~\cite{Zhang21}  &  tight ~\cite{Zhang21} \\
  
		\bottomrule
	\end{tabular}
    \label{tab:testing:DP}
\end{table}
\subsection{Organization of the Paper}
In~\cref{sec:preliminaries}, we give the formal definition of our problem and introduce some necessary tools used in this paper. In~\cref{sec:algorithm}, we present our algorithms and the related proofs, and give a discussion on closeness testing under homogeneous central differential privacy constraint. In~\cref{sec:future:work}, we discuss future work.

\section{Model and Preliminaries}
    \label{sec:preliminaries}
\subsection{Closeness Testing}
Given sample access to two unknown distributions, closeness testing asks whether these distributions are the same, or differ significantly in terms of statistical distance:\footnote{The statistical (total variation) distance between two probability distributions $\p,\q$ over the same domain $\domain$ is defined as $\totalvardist{\p}{\q} = \sup_{S\subseteq \domain} (\p(S)-\q(S) = \frac{1}{2} \normone{\p-\q}$, where the supremum is taken over all measurable subsets.}
\begin{definition}[Closeness testing]
Let $\p,\q$ be two unknown distributions with domain $[\ab]$. A closeness testing algorithm with sample complexity $\ns$ takes inputs $\dst \in (0, 1]$, a set of $\ns$ i.i.d. samples from $\p$ and  a set of $\ns$ i.i.d. samples from $\p$ and outputs either \accept or \reject such that the following holds:
\begin{itemize}
    \item If $\p = \q$, then the algorithm outputs \accept with probability at least $\frac{2}{3}$;
    \item If $\totalvardist{\p}{\q} > \dst$, then the algorithm outputs \reject with probability at least $\frac{2}{3}$.
\end{itemize}
\end{definition}
In this definition, $\frac{2}{3}$ is just some arbitrary number picked between $\frac{1}{2}$ and $1$. By a standard amplification argument any high probability $1-\errprob$ can be achieved by repeating the test independently $O(\log(1/\errprob)$ times and using a majority rule.

\subsection{Differential Privacy}

We now recall the relevant concepts we will extensively use, starting with the definition of differential privacy.
\begin{definition}
[Differential privacy~\cite{DMNS:06}]  A randomized algorithm $\mathcal{M}\colon\mathcal{X}^n \to \mathcal{R}^{k}$ is said to be \emph{$(\priv, \privdelta)$-differentially private} if for all measurable $S \subseteq \mathcal{R}^{k}$ and all neighboring datasets $x,y \in \mathcal{X}$:
\[
    \bPr{\mathcal{M}(x) \in S} \leq e^{\priv} \bPr{\mathcal{M}(y) \in S} + \privdelta\,,
\]
where two datasets $x,y$ are said to be neighboring if $\dist{x}{y} \leq 1$ (\ie for $\dist{}{}$ being the Hamming distance, if they only differ in (at most) one entry).
\end{definition}
A key property of differential privacy is \emph{immunity of post-processing}.
\begin{lemma}[Immunity of post-processing]
    \label{theo:immunity:post-processing}
    Let $f \colon \mathcal{X}^n \rightarrow \mathcal{R}^k$ be a mapping which is $(\priv,\privdelta)$-differentially private. Let $g \colon \mathcal{R}^k \rightarrow \mathcal{R}^{k'}$ be any arbitrary random mapping. Then the mapping $g \circ f$ is still $(\priv,\privdelta)$-differentially private.
\end{lemma}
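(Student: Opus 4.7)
The plan is to reduce to the deterministic case and then integrate over the internal randomness of $g$. First I would decompose the random mapping $g$ as $g(y) = \tilde g(y, R)$, where $R$ is a random seed drawn from some distribution on an auxiliary space, \emph{independent of the input} $y$ (and hence also independent of the data and of the randomness of $f$), and $\tilde g(\cdot, r)$ is a deterministic (measurable) map for each fixed $r$. This is the standard way to realize any randomized algorithm.

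Next, fix neighboring datasets $x, y \in \mathcal{X}^n$ and any measurable $T \subseteq \mathcal{R}^{k'}$. For a fixed realization $R = r$, the set $S_r \eqdef \tilde g(\cdot, r)^{-1}(T) \subseteq \mathcal{R}^{k}$ is measurable, so the $(\priv,\privdelta)$-DP guarantee of $f$ applied to $S_r$ gives
\[
    \bPr{f(x) \in S_r} \leq e^{\priv} \bPr{f(y) \in S_r} + \privdelta.
\]
Since $R$ is independent of $f$'s randomness and of the data, I would then write $\bPr{g(f(x)) \in T} = \Exp_R\bPr{f(x) \in S_R}$ (using Fubini/Tonelli to justify interchanging the expectation over $R$ with the probability over $f$'s randomness), and similarly for $y$. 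Taking expectation over $R$ on both sides of the inequality above and invoking linearity preserves the inequality, yielding
\[
    \bPr{g(f(x)) \in T} \;=\; \Exp_R\bPr{f(x) \in S_R} \;\leq\; e^{\priv} \Exp_R\bPr{f(y) \in S_R} + \privdelta \;=\; e^{\priv}\bPr{g(f(y)) \in T} + \privdelta,
\]
which is exactly the $(\priv,\privdelta)$-DP condition for $g \circ f$.

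The proof is essentially routine once the randomness of $g$ has been externalized; the only subtle step is the measurability of $S_r$ and the use of Fubini to commute the two expectations, which I expect to be the main (mild) technical point. No other obstacle arises, and the argument does not rely on any structural property of $g$ beyond being a well-defined (measurable) random mapping whose randomness is independent of the input.
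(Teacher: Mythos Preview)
Your argument is the standard, correct proof of post-processing immunity: externalize $g$'s randomness, apply the DP guarantee of $f$ to the preimage $S_r$ for each fixed seed $r$, and average over $r$ using independence and Fubini. The paper itself does not prove this lemma; it simply states it as a well-known preliminary fact, so there is no paper proof to compare against. Your write-up is fine and would be accepted as a routine verification of the lemma.
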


In many cases, this randomized mapping $\mathcal{M}$ is obtained by adding random noise to some function $f(x)$ of the data, where $f$ is the (non-private) function to be computed (such mechanisms are called \emph{additive noise mechanisms}). The amount of noise to be added to $f(x)$ then needs to be tailored to the specific properties of $f$: in particular, such an important property is its $\lp[1]$-sensitivity. 
\begin{definition}[Sensitivity]
    The $\lp[1]$-sensitivity of a function $f \colon \mathcal{N}^{|\mathcal{X}|} \rightarrow \mathcal{R}^{k}$ is  
    \[
        \triangle  = \max_{x,y \in \mathcal{N}^{|\mathcal{X}|},\, \dist{x}{y} \leq 1} \norm{f(x) - f(y)}_1\,.
    \]
    We say the function $f$ is \emph{$\triangle$-sensitive}. 
\end{definition}

One mechanism for adding randomness using $\lp[1]$-sensivity as a parameter is the Poisson mechanism. We use this mechanism in the shuffle model.
\begin{lemma}[Poisson mechanism~\cite{Ghazi0MP20}]
    \label{lemma:poisson_mech} 
    Let $f \colon \mathcal{X}^n \rightarrow \mathcal{Z}$ be a $\triangle$-sensitive function. For any $\priv > 0, \privdelta \in (0,1)$ and $\lambda \geq \frac{16\log (10/\privdelta)}{(1-e^{-\priv/\triangle})^2} + \frac{2\triangle}{1-e^{-\priv/\triangle}}$, the randomized function $\mathcal{A}(x) = f(x) + Y$ where $x \in \mathcal{X}^n, Y \sim \poisson{\lambda}$ is $(\priv,\privdelta)-$differentially private in the central and shuffle model.
\end{lemma}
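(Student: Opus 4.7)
The plan is to bound the privacy loss between the laws of $\mathcal{A}(x)$ and $\mathcal{A}(y)$ pointwise on neighboring inputs, and then invoke the standard ``privacy loss random variable'' characterization of approximate DP. For the shuffle-model claim, one first reduces to the central one: summing the independent per-user Poisson messages through the shuffler reproduces a single $\poisson{\lambda}$ at the analyzer, so $(\priv,\privdelta)$-DP of the shuffled output follows from the central statement by post-processing (\cref{theo:immunity:post-processing}). It therefore suffices to analyze $\mathcal{A}(x) = f(x) + Y$ with $Y \sim \poisson{\lambda}$ in the central model.

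Set $a := f(x)$, $b := f(y)$, $d := |a-b| \le \triangle$, and without loss of generality $a \le b$. For $z \ge b$, writing $j := z - b$, a direct Poisson-PMF computation gives
\[
\frac{\Pr[\mathcal{A}(x)=z]}{\Pr[\mathcal{A}(y)=z]} \;=\; \frac{\lambda^d}{\prod_{i=1}^{d}(j+i)}\,,
\]
and the sandwich $(j+1)^d \le \prod_{i=1}^d(j+i) \le (j+d)^d$ shows that both directions of this pointwise ratio lie in $[e^{-\priv},e^{\priv}]$ whenever $\lambda e^{-\priv/d} - 1 \le j \le \lambda e^{\priv/d} - d$. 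For $a \le z < b$ we instead have $\Pr[\mathcal{A}(y)=z] = 0$ while $\Pr[\mathcal{A}(x)=z] > 0$, so these outputs contribute an infinite pointwise ratio and must be absorbed into the $\privdelta$-failure event.

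With $Y \sim \poisson{\lambda}$, the bad event from the $\mathcal{A}(x)$-side is then contained in $\{Y < \lambda e^{-\priv/d} + d - 1\}\cup\{Y > \lambda e^{\priv/d}\}$, and a symmetric (in fact milder) pair of tails bounds the bad event from the $\mathcal{A}(y)$-side. Using $d \le \triangle$ and $1-e^{-\priv/d} \ge 1-e^{-\priv/\triangle}$, the lower-tail gap from the Poisson mean is at least $\lambda(1-e^{-\priv/\triangle}) - \triangle + 1$; the stated lower bound on $\lambda$ decomposes exactly so that the $\tfrac{2\triangle}{1-e^{-\priv/\triangle}}$ term absorbs this additive shift, while the $\tfrac{16\log(10/\privdelta)}{(1-e^{-\priv/\triangle})^2}$ term supplies a Chernoff slack $t$ large enough that the Poisson lower-tail bound $\Pr[Y \le \lambda - t] \le e^{-t^2/(2\lambda)}$ gives failure probability $\le \privdelta/2$. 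The upper tail, easier because $e^{\priv/d}-1 \ge 1-e^{-\priv/d}$, contributes the other $\privdelta/2$.

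The main obstacle is the arithmetic at this last step: one must verify that the constants $16,2,10$ in the stated lower bound on $\lambda$ really are what drops out after substituting $t = \lambda(1-e^{-\priv/d}) - d + 1$ into the exponent $t^2/(2\lambda)$, while simultaneously ensuring that the additive $\triangle$-term dominates the shift correction $d-1$. The two-term shape of the $\lambda$ bound reflects precisely this tension: the first term is the usual Poisson ``standard-deviation scale'' cost $\sqrt{\lambda \log(1/\privdelta)}$ required by concentration, and the second is an additive shift correction specific to integer-valued mechanisms whose supports are shifted by $d$ (absent in, e.g., the Gaussian mechanism).
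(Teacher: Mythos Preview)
The paper does not give its own proof of this lemma: it is stated as a known result from~\cite{Ghazi0MP20} and used as a black box. So there is no ``paper's proof'' to compare your proposal against.

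That said, your sketch is the standard and correct route to such a statement: compute the pointwise likelihood ratio of two shifted Poisson PMFs, identify the range of outputs on which this ratio lies in $[e^{-\priv},e^{\priv}]$, and push the complement into the $\privdelta$ slack via Poisson (Chernoff) tails. Your reduction from the shuffle-model statement to the central one via additivity of independent Poisson noises and post-processing is also the right observation. The one place where your write-up is genuinely a sketch rather than a proof is exactly where you flag it: you have not actually carried out the arithmetic showing that the specific constants $16$, $2$, $10$ in the hypothesis on $\lambda$ suffice to make each tail at most $\privdelta/2$. If you were to complete this, you would want to be a bit careful that the lower-tail Chernoff bound you quote, $\Pr[Y \le \lambda - t] \le e^{-t^2/(2\lambda)}$, is applied with $t \le \lambda$ (which holds here), and that the ``support hole'' $\{a \le z < b\}$, which has mass $\Pr[Y < d]$, is indeed subsumed by your lower-tail event; it is, but it is worth saying explicitly.
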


A standard technique to amplify a differential-private algorithm is amplification by subsampling. The formal statement is as follows.
\begin{lemma}[{Amplification by subsampling (see, \eg,~\cite{LiQS12}, or\cite[Theorem~9]{BalleBG18})}]
    \label{theo:subsampling}
    Let $\mathcal{A}: \mathcal{X}^{n_1} \rightarrow \mathcal{R}^k$ be a $(\priv_1,\privdelta_1)$-differentially private mapping and $\mathcal{H}\colon \mathcal{X}^{n_2} \rightarrow \mathcal{X}^{n_1}$ be a randomized mapping which uniformly chooses $n_1$ elements from $n_2$ elements. Then, the composition of this two mappings $\mathcal{A} \circ \mathcal{H}\colon\mathcal{X}^{n_2} \rightarrow \mathcal{R}^k$ is $(\priv_2,\privdelta_2)$-differentially private where $\priv_2 = \ln (1+ \frac{n_1}{n_2}(e^{\priv_1}-1))$ and $\privdelta_2 = \frac{n_1}{n_2} \privdelta_1$. 
    In particular, if $\priv_1 \leq 1$ then $\priv_2 = \bigO{\frac{n_1}{n_2} \priv_1}$.
\end{lemma}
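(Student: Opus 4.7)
My plan is to follow the standard privacy-amplification-by-subsampling proof, as in Li--Qardaji--Su and Balle--Barthe--Gaboardi. Fix any two neighboring datasets $x, y \in \mathcal{X}^{n_2}$ (Hamming distance at most one; say they agree except possibly at a single index $i \in [n_2]$), any measurable $S \subseteq \mathcal{R}^{k}$, and let $\beta := n_1/n_2$. The goal is to establish
\[
    \bPr{\mathcal{A}(\mathcal{H}(x)) \in S} \;\leq\; e^{\priv_2}\, \bPr{\mathcal{A}(\mathcal{H}(y)) \in S} + \privdelta_2.
\]
Since $\mathcal{H}$ selects a uniformly random $n_1$-subset $T \subseteq [n_2]$, I would first couple the two subsamplings by using the same random $T$ for both $x$ and $y$, and note $\bPr{i \in T} = \beta$.

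The natural decomposition is then
\[
    \bPr{\mathcal{A}(\mathcal{H}(x)) \in S} = (1-\beta) A + \beta B_x, \qquad \bPr{\mathcal{A}(\mathcal{H}(y)) \in S} = (1-\beta) A + \beta B_y,
\]
where $A := \bPr{\mathcal{A}(x_T) \in S \mid i \notin T}$ is the \emph{same} on both sides (since $x_T = y_T$ whenever $i \notin T$) and $B_x, B_y$ are the analogous conditional probabilities given $i \in T$. Conditional on $i \in T$, the subsampled datasets $x_T$ and $y_T$ are themselves Hamming-neighbors, so the $(\priv_1,\privdelta_1)$-DP of $\mathcal{A}$ yields $B_x \leq e^{\priv_1} B_y + \privdelta_1$. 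Plugging this in gives $\bPr{\mathcal{A}(\mathcal{H}(x)) \in S} \leq (1-\beta) A + \beta e^{\priv_1} B_y + \beta \privdelta_1$, and the tight amplification comes from recognizing this as $e^{\priv_2}\bigl((1-\beta) A + \beta B_y\bigr) + \beta \privdelta_1$ with $e^{\priv_2} = 1 + \beta(e^{\priv_1}-1)$ and $\privdelta_2 = \beta \privdelta_1$.

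I expect this last step to be the main obstacle: a pessimistic pointwise analysis (e.g., taking $A = 0$, $B_y = 1$) would force $e^{\priv_2} \geq e^{\priv_1}$ and give no amplification at all, so a naive substitution does not suffice. The tight bound, carried out in Theorem~9 of Balle--Barthe--Gaboardi, uses a coupling / privacy-loss-distribution analysis that exploits the particular structure of the subsampled distributions --- namely, that $A$ itself is a mixture over off-$i$ subsamples which can be related to $B_y$ via an internal ``swap'' coupling between an $(n_1{-}1)$-subset of $y \setminus \{y_i\}$ and an $n_1$-subset that includes $y_i$, so the quantities $A, B_x, B_y$ are not free parameters. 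I would invoke that result directly rather than reprove it from scratch. The asymptotic claim $\priv_2 = \bigO{\beta \priv_1}$ when $\priv_1 \leq 1$ then follows from the elementary inequalities $e^{\priv_1}-1 \leq 2\priv_1$ on $[0,1]$ and $\ln(1+u) \leq u$ for $u \geq 0$, which together give $\priv_2 \leq 2\beta \priv_1$.
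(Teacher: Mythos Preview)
The paper does not prove this lemma; it is stated as a cited result from Li--Qardaji--Su and Balle--Barthe--Gaboardi (Theorem~9), exactly the references you invoke. Your sketch is the standard decomposition, you correctly flag the one genuinely nontrivial step (that $(1-\beta)A + \beta e^{\priv_1} B_y \leq e^{\priv_2}\bigl((1-\beta)A + \beta B_y\bigr)$ is \emph{not} a pointwise inequality in the free parameters $A,B_y$, so a further structural argument is needed), and your resolution~--- defer to Balle--Barthe--Gaboardi rather than redo their coupling / divergence analysis~--- is precisely what the paper itself does. The final asymptotic bound $\priv_2 \leq 2\beta\priv_1$ via $e^{t}-1 \leq 2t$ on $[0,1]$ and $\ln(1+u)\leq u$ is correct.
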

We conclude this by recalling the privacy of Randomized Response~\cite{Warner:65}, one of the standard mechanisms:
\begin{fact}[Randomized Response]
    \label{fact:RR}
    Let $\mathcal{M}_f$ be the mechanism which takes one bit as input and flips it with probability $\frac{1}{e^{\priv}+1}$. Then, $\mathcal{M}$ is $(\priv,0)-$differentially private.
    \label{fact:flip_1_bit}
\end{fact}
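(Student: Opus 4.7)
The plan is to apply the definition of differential privacy directly, since the mechanism is so simple that no auxiliary machinery is needed. The domain is $\{0,1\}$ and the output is also in $\{0,1\}$, so I would first unpack the output distribution: for any input bit $b$, we have $\bPr{\mathcal{M}_f(b)=b} = 1-\frac{1}{e^\priv+1} = \frac{e^\priv}{e^\priv+1}$ and $\bPr{\mathcal{M}_f(b)=1-b} = \frac{1}{e^\priv+1}$.

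Next I would identify the neighboring pairs. Since a dataset here consists of a single bit, any two neighboring datasets $x,y \in \{0,1\}$ with $\dist{x}{y}\leq 1$ are either equal (in which case the DP condition is trivially satisfied with ratio $1$) or they differ in that single entry, i.e.\ $y=1-x$. It therefore suffices to handle the latter case.

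For $y=1-x$ and any outcome $z\in\{0,1\}$, I would bound the ratio $\bPr{\mathcal{M}_f(x)=z}/\bPr{\mathcal{M}_f(y)=z}$. The numerator is at most $\frac{e^\priv}{e^\priv+1}$ (the larger of the two output probabilities), while the denominator is at least $\frac{1}{e^\priv+1}$ (the smaller). Hence the ratio is at most $e^\priv$, which establishes $(\priv,0)$-differential privacy by taking $S=\{z\}$ and observing that the condition on singletons extends to all $S\subseteq\{0,1\}$ by summation.

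There is no real obstacle in this proof; the only thing to be careful about is not to conflate the parameter of the Bernoulli flip $\frac{1}{e^\priv+1}$ with $\priv$ itself, and to ensure one takes the worst-case outcome $z$ in both directions (so the argument is symmetric in $x$ and $y$). The whole argument is essentially a one-line computation once the output distribution is written down.
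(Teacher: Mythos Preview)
Your proof is correct and is the standard verification of this folklore fact. The paper does not actually prove this statement: it is recorded as a \emph{Fact} with a reference to~\cite{Warner:65} and then used as a black box. Your direct computation of the output probabilities and the likelihood ratio is exactly how one would justify it if a proof were required.
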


\subsection{Central, local and shuffle models}
In the central model of differential privacy, there is a trusted data curator who holds all the original data and guarantees its output is differentially private. A stricter model is the local model of differential privacy, where the data curator is untrusted and only receives noisy data. It is well known that the local model provides a stronger privacy guarantee, but often at the cost of utility (that is, usually leads to worse accuracy).

A third model, ``between'' the local and central models is the shuffle model. While a central data curator is still not trusted in this model, we allow a ``shuffler'' to receive messages from the users and anonymize them by applying a uniformly random permutation . The permutation is typically implemented using cryptographic primitives such as secure multi-party computation. One advantage of shuffling is that while these cryptographic primitives are time-intensive, implementing a shuffler is simple enough and is not too time-consuming. (Instead, a fully trusted algorithm using cryptography would be too computationally intensive.)

\subsection{Central, Shuffle and Local Models for Heterogeneous Privacy and Data}
    \label{ssec-models}

    Since testing involves two different distributions, it is natural to consider the case where users from different distributions have different concerns of privacy. Specifically, we want to make sure our algorithm is $(\priv_1,\privdelta_1)$-differentially private for samples from $\p$ and $(\priv_2,\privdelta_2)$-differentially private for samples from $\q$. We introduce the corresponding definition of the testingtask below.

\begin{definition}[Closeness testing under heterogeneous local differential privacy constraints]
Let $\p,\q$ be two unknown distributions with domain $[\ab]$. A \emph{closeness testing algorithm under heterogeneous local differential privacy constraints} consists of the following:
\begin{itemize}
    \item $\ns_1$ randomizers $\mathcal{R}_1, ..., \mathcal{R}_{\ns_1}: \mathcal{X} \times \{ 0,1 \}^r \rightarrow \mathcal{Y}$ mapping a sample drawn from $\p$ and a public randomness bit of length  $r$ to a privatised output.
    
    \item $\ns_2$ randomizers $\mathcal{R'}_1, ..., \mathcal{R'}_{\ns_2}: \mathcal{X'} \times \{ 0,1 \}^r \rightarrow \mathcal{Y'}$ mapping a sample drawn from $\q$ and a public randomness bit of length  $r$ to a privatised output.
    
    \item an analyser $\mathcal{A}: \mathcal{Y}^{\ns_1} \times \mathcal{Y}^{\ns_2} \times \{ 0,1 \}^r \rightarrow \mathcal{Z}$ mapping all privatised message and the public randomness bit to the result of analysis either \accept or \reject such that the following holds
    \begin{itemize}
    \item If $\p = \q$, then $\mathcal{A}$ outputs \accept with probability at least $\frac{2}{3}$
    \item If $\totalvardist{\p}{\q} > \dst$, then $\mathcal{A}$ outputs \reject with probability at least $\frac{2}{3}$
    \end{itemize}
\end{itemize}

When $r = 0$ (no public randomness), the testing algorithm is said to belong to the \emph{private-coin local differential-private model}. Otherwise, it belongs to the \emph{public-coin local differential-private model}.

When the output $\mathcal{R}(\mathcal{X})$ of each randomizer $\mathcal{R}$ is $(\priv_1, \privdelta_1)$-differentially private w.r.t. $\mathcal{X}$ and the output $\mathcal{R'}(\mathcal{X'})$ of each randomizer $\mathcal{R'}$ is $(\priv_2, \privdelta_2)$-differentially private w.r.t. $\mathcal{X'}$, $\mathcal{P}$ is said to be \emph{$((\priv_1, \privdelta_1),(\priv_2, \privdelta_2))$-heterogeneously locally differentially private}. For simplicity, $\mathcal{P}$ is said to be \emph{$(\priv_1, \priv_2)$-heterogeneously locally differentially private} when $\privdelta_1,\privdelta_2 = 0$.

\end{definition}

    \begin{figure}[H]\centering
    \scalebox{0.7}{
    \begin{tikzpicture}[->,>=stealth',shorten >=1pt,auto,node distance=20mm, semithick,scale=1.1, every node/.style={transform shape}]
  \node[circle,draw,minimum size=13mm] (A) {$X_1$};
  \node (B) [right of=A] {$\dots$};
  \node[circle,draw,minimum size=13mm] (BB) [right of=B] {$X_{\ns_1}$};
  \node (C) [right of=BB] {};
  \node[circle,draw,minimum size=13mm] (DD) [right of=C] {$X'_1$};
  \node (D) [right of=DD] {$\dots$};
  \node[circle,draw,minimum size=13mm] (E) [right of=D] {$X'_{\ns_2}$};
  
  \node[rectangle,draw,minimum width=13mm,minimum height=7mm] (WA) [below of=A] {$R_1^{\vphantom{'}}$};
  \node (WB) [below of=B] {$\dots$};
  \node[rectangle,draw,minimum width=13mm,minimum height=7mm] (WBB) [below of=BB] {$R_{\ns_1}$};
  
  \node (WC) [below of=C] {};
 
  \node[rectangle,draw,minimum width=13mm,minimum height=7mm] (WDD) [below of=DD] {$R'_1$};
  \node  (WD) [below of=D] {$\dots$};
  \node[rectangle,draw,minimum width=13mm,minimum height=7mm] (WE) [below of=E] {$R'_{\ns_2}$};
  
  \node (P) [above of=B] {$\mathbf{p}$};
  \node (Q) [above of=D] {$\mathbf{q}$};
  \node[rectangle,draw, minimum size=10mm] (R) [below of=WC] {Analyzer};
  \node (out) [below of=R,node distance=13mm] {output $\in\{\accept,\reject\}$};
  
  \node[draw,dashed,fit=(A) (E) (WA) (WB) (WBB) (WC) (WDD) (WD) (WE)] {trusted};

  \draw[->] (P) edge[densely dashed,bend right=10] (A)(A) edge (WA)(WA) edge[bend right=10] (R);
  
  \draw[->] (P) edge[densely dashed] (BB)(BB) edge (WBB)(WBB)  edge (R);
  \draw[->] (Q) edge[densely dashed] (DD)(DD) edge (WDD)(WDD) edge (R);

  \draw[->] (Q) edge[densely dashed,bend left=10] (E)(E) edge (WE)(WE) edge[bend left=10] (R);
  \draw[->] (R) edge (out);
\end{tikzpicture}
    }
    \caption{Closeness testing under heterogeneous \emph{local} differential privacy constraints. The $R_i$'s (resp. $R'_i$'s) are the local randomizers used by the users to privatize their data prior to sending it to the analyzer.}
    \end{figure}
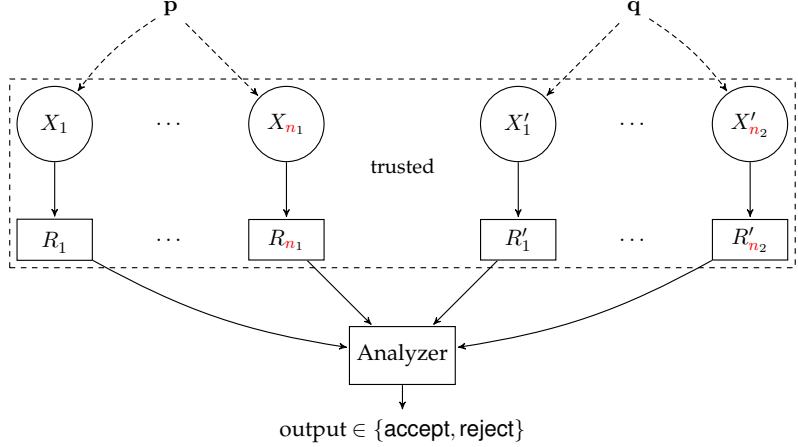

We now provide the analogous definitions for shuffle privacy. As discussed in the introduction, this definition is not entirely straightforward, as the very definition of the closeness testing problem requires the ability to distinguish between two groups of users -- those with inputs from $\p$, and those with inputs from $\q$. This goes against the objective of shuffling, and motivates the introduction of two distinct shufflers: one for each group.
\begin{definition}[Closeness testing under heterogeneous shuffle differential privacy constraints]
Let $\p,\q$ be two unknown distributions with domain $[\ab]$. A closeness testing algorithm under heterogeneous shuffle differential privacy constraints consists of the following:
\begin{itemize}
    \item $\ns_1$ randomizers $\mathcal{R}_1, ..., \mathcal{R}_{\ns_1}: \mathcal{X} \times \{ 0,1 \}^r \rightarrow \mathcal{Y}$ mapping a sample drawn from $\p$ and a public randomness bit of length  $r$ to a privatised output.
    
    \item $\ns_2$ randomizers $\mathcal{R'}_1, ..., \mathcal{R'}_{\ns_2}: \mathcal{X} \times \{ 0,1 \}^r \rightarrow \mathcal{Y}'$ mapping a sample drawn from $\q$ and a public randomness bit of length $r$ to a privatised output.
    
    \item  A shuffler $\mathcal{S}_1 : \mathcal{Y} \rightarrow \mathcal{Y}^*$ that concatenates message vectors and then applies a uniformly random permutation to the messages.
    
    \item  A shuffler $\mathcal{S}_2 : \mathcal{Y}' \rightarrow \mathcal{Y}^{'*} $ that concatenates message vectors and then applies a uniformly random permutation to the messages.

    \item an analyser $\mathcal{A}: \mathcal{Y}^* \times \mathcal{Y}^{'*} \times \{ 0,1 \}^r \rightarrow \mathcal{Z}$ mapping all privatised message and the public randomness bit to the result of analysis  either \accept or \reject such that the following holds
    \begin{itemize}
    \item If $\p = \q$, then $\mathcal{A}$ outputs \accept with probability at least $\frac{2}{3}$
    \item If $\totalvardist{\p}{\q} > \dst$, then $\mathcal{A}$ outputs \reject with probability at least $\frac{2}{3}$
    \end{itemize}
\end{itemize}

When the output of the shuffler $\mathcal{S}_1$ is $(\priv_1, \privdelta_1)$-differentially private w.r.t. $\mathcal{X}$ and the output of the shuffler $\mathcal{S}_2$ is $(\priv_2, \privdelta_2)$-differentially private w.r.t. $\mathcal{X'}$, $\mathcal{P}$ is said to be \emph{$((\priv_1, \privdelta_1),(\priv_2, \privdelta_2))$-heterogeneously shuffle differentially private}. For simplicity, $\mathcal{P}$ is said to be \emph{$(\priv_1, \priv_2)$-heterogeneously shuffle differentially private} when $\privdelta_1,\privdelta_2 = 0$.
\end{definition}
    
    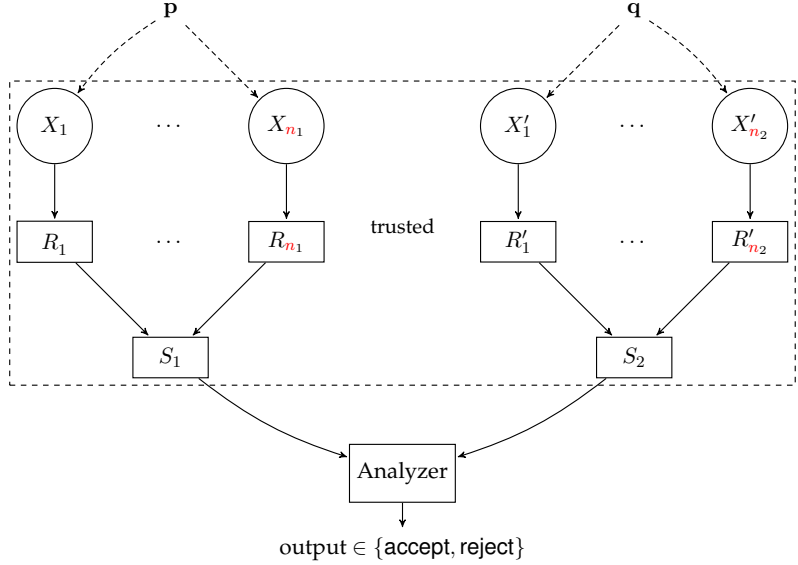
\begin{figure}[H]\centering
    \scalebox{0.7}{
    \begin{tikzpicture}[->,>=stealth',shorten >=1pt,auto,node distance=20mm, semithick,scale=1.1, every node/.style={transform shape}]
  \node[circle,draw,minimum size=13mm] (A) {$X_1$};
  \node (B) [right of=A] {$\dots$};
  \node[circle,draw,minimum size=13mm] (BB) [right of=B] {$X_{\ns_1}$};
  \node (C) [right of=BB] {};
  \node[circle,draw,minimum size=13mm] (DD) [right of=C] {$X'_1$};
  \node (D) [right of=DD] {$\dots$};
  \node[circle,draw,minimum size=13mm] (E) [right of=D] {$X'_{\ns_2}$};
  
  \node[rectangle,draw,minimum width=13mm,minimum height=7mm] (WA) [below of=A] {$R_1^{\vphantom{'}}$};
  \node (WB) [below of=B] {$\dots$};
  \node[rectangle,draw,minimum width=13mm,minimum height=7mm] (WBB) [below of=BB] {$R_{\ns_1}$};
  
  \node (WC) [below of=C] {};
 
  \node[rectangle,draw,minimum width=13mm,minimum height=7mm] (WDD) [below of=DD] {$R'_1$};
  \node  (WD) [below of=D] {$\dots$};
  \node[rectangle,draw,minimum width=13mm,minimum height=7mm] (WE) [below of=E] {$R'_{\ns_2}$};
  
  \node[rectangle,draw,minimum width=13mm,minimum height=7mm] (SD) [below of=WD] {$S_2$};
  \node[rectangle,draw,minimum width=13mm,minimum height=7mm] (SB) [below of=WB] {$S_1$};
  
  \node (P) [above of=B] {$\mathbf{p}$};
  \node (Q) [above of=D] {$\mathbf{q}$};
  \node (R) [below of=WC] {};
  \node[rectangle,draw, minimum size=10mm] (RR) [below of=R] {Analyzer};
  \node (out) [below of=RR,node distance=13mm] {output $\in\{\accept,\reject\}$};
  
  \node[draw,dashed,fit= (A) (E) (WA) (WB) (WBB) (WC) (WDD) (WD) (WE) (SB) (SD)] {trusted};

  \draw[->] (P) edge[densely dashed,bend right=10] (A)(A) edge (WA)(WA) edge (SB)(SB) edge[bend right=10] (RR);
  
  \draw[->] (P) edge[densely dashed] (BB)(BB) edge (WBB)(WBB)  edge (SB);
  \draw[->] (Q) edge[densely dashed] (DD)(DD) edge (WDD)(WDD) edge (SD);

  \draw[->] (Q) edge[densely dashed,bend left=10] (E)(E) edge (WE)(WE) edge (SD)(SD) edge[bend left=10] (RR);
  \draw[->] (RR) edge (out);
\end{tikzpicture}
    }
    \caption{Closeness testing under heterogeneous \emph{shuffle} differential privacy constraints. Here, $S_1$ and $S_2$ are the shufflers for the two groups, and the $R_i$'s (resp. $R'_i$'s) are the randomizers used by the users, prior to the shuffling.}
    \end{figure}

Finally, we conclude with the definition of the task under (central) differential privacy:
\begin{definition}[Closeness testing under heterogeneous central differential privacy constraints]
Let $\p,\q$ be two unknown distributions with domain $[\ab]$. A closeness testing algorithm under central local differential privacy constraints consists of the following:
\begin{itemize}

    \item an analyser $\mathcal{A}: \mathcal{X} \times \mathcal{X'} \times \{ 0,1 \}^r \rightarrow \mathcal{Z}$ mapping a sample vector $\mathcal{X}$ from $\p$ and a sample vector $\mathcal{X'}$ from $\q$ to the result of analysis either \accept or \reject such that the following holds
    \begin{itemize}
    \item If $\p = \q$, then $\mathcal{A}$ outputs \accept with probability at least $\frac{2}{3}$
    \item If $\totalvardist{\p}{\q} > \dst$, then $\mathcal{A}$ outputs \reject with probability at least $\frac{2}{3}$
    \end{itemize}
\end{itemize}

When the output $\mathcal{A}(\mathcal{X})$ of the analyser $\mathcal{A}$ is $(\priv_1, \privdelta_1)$-differentially private w.r.t. $\mathcal{X}$ and $(\priv_2, \privdelta_2)$-differentially private w.r.t. $\mathcal{X'}$, $\mathcal{P}$ is said to be \emph{$((\priv_1, \privdelta_1),(\priv_2, \privdelta_2))$-heterogeneously centrally differentially private}. For simplicity, $\mathcal{P}$ is said to be \emph{$(\priv_1, \priv_2)$-heterogeneously centrally differentially private} when $\privdelta_1,\privdelta_2 = 0$.

\end{definition}
    
    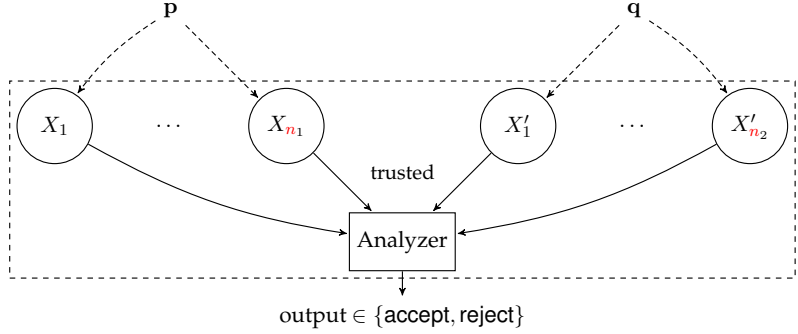
\begin{figure}[H]\centering
    \scalebox{0.7}{
    \begin{tikzpicture}[->,>=stealth',shorten >=1pt,auto,node distance=20mm, semithick,scale=1.1, every node/.style={transform shape}]
  \node[circle,draw,minimum size=13mm] (A) {$X_1$};
  \node (B) [right of=A] {$\dots$};
  \node[circle,draw,minimum size=13mm] (BB) [right of=B] {$X_{\ns_1}$};
  \node (C) [right of=BB] {};
  \node[circle,draw,minimum size=13mm] (DD) [right of=C] {$X'_1$};
  \node (D) [right of=DD] {$\dots$};
  \node[circle,draw,minimum size=13mm] (E) [right of=D] {$X'_{\ns_2}$};

  \node (P) [above of=B] {$\mathbf{p}$};
  \node (Q) [above of=D] {$\mathbf{q}$};
  \node[rectangle,draw, minimum size=10mm] (R) [below of=C] {Analyzer};
  \node (out) [below of=R,node distance=13mm] {output $\in\{\accept,\reject\}$};
  
  \node[draw,dashed,fit=(R) (A) (E)] {trusted};

  \draw[->] (P) edge[densely dashed,bend right=10] (A)(A) edge[bend right=10] (R);
  
  \draw[->] (P) edge[densely dashed] (BB)(BB) edge  (R);
  \draw[->] (Q) edge[densely dashed] (DD)(DD) edge (R);

  \draw[->] (Q) edge[densely dashed,bend left=10] (E)(E)  edge[bend left=10] (R);
  \draw[->] (R) edge (out);
\end{tikzpicture}
    }
    \caption{Closeness testing under heterogeneous \emph{central} differential privacy constraints.}
    \end{figure}

\subsection{Domain compression}

Finally, to obtain public-coin protocols from private-coin ones, we will rely on the following \emph{domain compression} result, a hashing-type technique that allows to trade domain size for distance parameter in distribution testing: 
\begin{lemma}[Domain Compression~\cite{AcharyaCHST20}]
  \label{theo:random:dct:hashing}
  There exist absolute constants $c_1,c_2>0$ such that the following holds. For any $2\leq \ab'\leq \ab$ and any distributions $\p,\q$ over $[\ab]$,
  \[
        \probaDistrOf{\Pi}{ \totalvardist{\p_\Pi}{\q_\Pi} \geq c_1\sqrt{\frac{\ab'}{\ab}}\totalvardist{\p}{\q} } \geq c_2\,,
  \] 
  where $\Pi=(\Pi_1,\dots\Pi_{\ab'})$ is a uniformly random partition of $[\ab]$ in $\ab'$ subsets, and $\p_\Pi$ denotes the probability distribution on $[\ab']$ induced by $\p$ and $\Pi$ via $\p_\Pi(i) = \p(\Pi_i)$.
\end{lemma}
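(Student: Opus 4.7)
The plan is to pass from the uniform random partition $\Pi$ of $[\ab]$ into $\ab'$ labelled parts to the essentially equivalent i.i.d.\ hashing model $h\colon[\ab]\to[\ab']$ (the two differ only by constant factors, since bucket sizes concentrate around $\ab/\ab'$). Writing $f_i \eqdef \p(i) - \q(i)$, so that $\totalvardist{\p}{\q} = \tfrac{1}{2}\|f\|_1$, one has
\[
\totalvardist{\p_\Pi}{\q_\Pi} = \frac{1}{2} \sum_{j=1}^{\ab'} |S_j|, \qquad S_j \eqdef \sum_{i\,:\,h(i)=j} f_i,
\]
and the goal reduces to showing that $X \eqdef \sum_j |S_j|$ satisfies $X = \bigOmega{\|f\|_1 \sqrt{\ab'/\ab}}$ with constant probability. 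I would attack this via a Paley--Zygmund (second moment) argument on $X$.

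For the first moment, note that $\mathbb{E}[S_j] = 0$ since $\sum_i f_i = 0$, and
\[
\operatorname{Var}(S_j) = \frac{1}{\ab'}\Paren{1 - \frac{1}{\ab'}}\|f\|_2^2 \geq \frac{\|f\|_2^2}{2\ab'} \geq \frac{\|f\|_1^2}{2\ab\ab'},
\]
where the last inequality is Cauchy--Schwarz. Assuming the anti-concentration estimate $\mathbb{E}|S_j| \gtrsim \sqrt{\operatorname{Var}(S_j)}$ (discussed below), summing over $j \in [\ab']$ yields
\[
\mathbb{E}[X] \gtrsim \ab' \cdot \frac{\|f\|_1}{\sqrt{\ab\ab'}} = \|f\|_1 \sqrt{\frac{\ab'}{\ab}}.
\]

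For the second moment, I would expand $\mathbb{E}[X^2] = \sum_{j,k}\mathbb{E}[|S_j|\,|S_k|]$. The diagonal contributes $\sum_j \operatorname{Var}(S_j) \leq \|f\|_2^2$, while the cross terms $j \neq k$ can be controlled using independence of $h(i)$ across $i$, combined with the exclusivity $\mathbb{1}[h(i){=}j]\,\mathbb{1}[h(i){=}k] = 0$: a careful expansion yields $\mathbb{E}[X^2] = \bigO{(\mathbb{E}[X])^2}$, after which the Paley--Zygmund inequality
\[
\Pr\!\left[X \geq \tfrac{1}{2}\mathbb{E}[X]\right] \geq \frac{(\mathbb{E}[X])^2}{4\,\mathbb{E}[X^2]}
\]
produces an absolute constant $c_2 > 0$, completing the proof (with $c_1$ absorbing the constant factors accumulated above).

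The hard part is the anti-concentration estimate $\mathbb{E}|S_j| \gtrsim \sqrt{\operatorname{Var}(S_j)}$, which can fail when $S_j$ is dominated by a single large coordinate (in that case $S_j$ is essentially a scaled Bernoulli rather than approximately Gaussian). I would resolve this by a case split on $M \eqdef \max_i |f_i|$: when $M \geq \|f\|_1/\sqrt{\ab\ab'}$, with constant probability this heavy coordinate lands alone in its bucket and $|S_j|$ already exceeds the target directly; when $M$ is small, the Lindeberg condition is satisfied and a quantitative central limit theorem (Berry--Esseen) applied to $S_j$ yields the required Gaussian-type lower bound on $\mathbb{E}|S_j|$. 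A secondary technical point is the comparison between the partition and i.i.d.\ hashing models, which I would handle by standard coupling and concentration of bucket sizes.
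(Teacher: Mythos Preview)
The paper does not prove this lemma; it is quoted from \cite{AcharyaCHST20} and used as a black box, so there is no in-paper proof to compare against. Your overall strategy (Paley--Zygmund on $X=\sum_j |S_j|$) is reasonable and close in spirit to what underlies the original argument, but the anti-concentration step---which you rightly flag as the crux---is not actually resolved by your case split.

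Concretely: (i) the threshold is miscalibrated, since you enter the ``heavy'' case when $M \geq \|f\|_1/\sqrt{\ab\ab'}$, yet the quantity a single bucket must reach is $\|f\|_1\sqrt{\ab'/\ab}$, larger by a factor of $\ab'$; so even if the heavy coordinate were the only nonzero entry in its bucket (giving $|S_{j^\ast}|=M$), your assumption on $M$ is too weak by that factor. (ii) ``Lands alone with constant probability'' is false when $\ab' \ll \ab$, as each bucket receives $\approx \ab/\ab'$ elements; what you would actually need is that the heavy coordinate \emph{dominates} the remainder of its bucket, i.e., anti-concentration of $\sum_{i\neq i^\ast,\,h(i)=j^\ast} f_i$ around $-f_{i^\ast}$---precisely the problem you set out to solve. (iii) The second-moment bound $\mathbb{E}[X^2]=O((\mathbb{E}[X])^2)$ is asserted without justification, and the natural Cauchy--Schwarz route gives $\sum_{j\neq k}\mathbb{E}[|S_j|\,|S_k|]\leq \bigl(\sum_j \sqrt{\mathbb{E}[S_j^2]}\bigr)^2$, which matches $(\mathbb{E}[X])^2$ only if $\sqrt{\mathbb{E}[S_j^2]}\lesssim\mathbb{E}|S_j|$---again the same per-bucket anti-concentration. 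The proof in \cite{AcharyaCHST20} breaks this circularity via a more structured argument rather than a bare Berry--Esseen split; your sketch captures the architecture but would not go through as written.
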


\paragraph{Notation.} Throughout, we write $a_n \gtrsim b_n$ (resp. $a_n \lesssim b_n$) to denote the existence of an absolute constant $C>0$ such that $a_n \leq C\cdot b_n$ (resp. $a_n \geq C\cdot b_n$) for all $n$; and use $a_n \asymp b_n$ when both $a_n \gtrsim b_n$ and $a_n \lesssim b_n$ hold. Besides this, we use the standard $O(\cdot)$, $\Omega(\cdot)$, and $\Theta(\cdot)$ notation. Hereafter, we identify a probability distribution $\p$ over a discrete domain $\domain$ with its probability mass function (pmf), writing $\p(x)$ for $\probaDistrOf{X\sim \p}{X=x}$; and for a subset $S$ of the domain, write $\p(S) = \sum_{x\in S}\p(x)$.
\section{Our Algorithms}
    \label{sec:algorithm}
We now provide the details of our algorithms, starting with those under (heterogeneous) local privacy.
\subsection{Under Local Privacy}
\subsubsection{Private-coin protocol}
Our algorithm for testing closeness under heterogeneous local differential privacy constraints is based on the Hadamard Response mechanism~\cite{AcharyaS19}; we recall one of its key properties below.
\begin{theorem}[\cite{AcharyaS19}]
    \label{theo:ldb_l1_to_l2_mean}
    Let $H^{K}$ be the $K \times K$ Hadamard matrix where $K = 2^{\lfloor \log_2(\ab + 1) \rfloor}$, which is the smallest power of two larger than $\ab$. Let $C_j$ be the locations of $1$s in the $j$th column of $H^{K}$ where $j \in [K]$. For any distribution $\p$, let $\p(C_j)$ be the probability that a sample from $\p$ falls in set $C_j$. Then we have
    \[
        \sum_{j=1}^{\ab}(\p(C_j) - \q(C_j))^2 = \frac{K}{4}\norm{\p-\q}^2_2.
    \]
\end{theorem}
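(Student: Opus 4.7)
The strategy is to invoke the orthogonality of the Hadamard matrix, $(H^K)^{\top} H^K = K\cdot I_K$, to rewrite the sum of squared partition-probability differences as a squared $\ell_2$-norm of $\p-\q$. First, I would pad $\p,\q$ by zeros outside $[\ab]$ so that they live in $\mathbb{R}^K$, and let $v \eqdef \p - \q$. Writing $h_j \in \{\pm 1\}^K$ for the $j$th column of $H^K$, the indicator vector of $C_j$ in $\mathbb{R}^K$ admits the clean decomposition
\[
    e_j = \tfrac{1}{2}\bigl(\mathbf{1}_K + h_j\bigr),
\]
since $C_j$ is exactly the set of $+1$-entries of $h_j$, where $\mathbf{1}_K$ is the all-ones vector.

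The next step exploits that $v$ is mean-zero: $\langle \mathbf{1}_K, v\rangle = \sum_{i}\p(i) - \sum_{i}\q(i) = 0$, which eliminates the trivial contribution and gives
\[
    \p(C_j) - \q(C_j) = \langle e_j, v\rangle = \tfrac{1}{2}\langle h_j, v\rangle.
\]
Squaring, summing over all $K$ columns, and applying $\sum_{j=1}^K h_j h_j^{\top} = K\cdot I_K$, I would obtain
\[
    \sum_{j=1}^{K}\bigl(\p(C_j)-\q(C_j)\bigr)^2 = \tfrac{1}{4}\bigl\|(H^K)^{\top} v\bigr\|_2^2 = \tfrac{K}{4}\|v\|_2^2 = \tfrac{K}{4}\|\p-\q\|_2^2.
\]
This is the analytic heart of the theorem: the Hadamard transform is, up to scale, an isometry, so it preserves squared $\ell_2$-norm.

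The remaining step is to reconcile the upper limit of summation with the stated $j\leq \ab$. Under the standard convention, the first column of $H^K$ is all-ones, so $C_1 = [K]$ and $\p(C_1)-\q(C_1) = 0$; that term contributes nothing and can be dropped. For the choice $K \asymp \ab$ made in the theorem, a suitable reindexing of the remaining nontrivial columns identifies the $K$-term sum above with the $\ab$-term sum in the statement. This bookkeeping around the precise relationship between $K$ and $\ab$ (and the convention on which column is indexed as ``first'') is the only mild obstacle; the substantive content is essentially a one-line consequence of the orthogonality of $H^K$ combined with the vanishing of $\langle \mathbf{1}_K, \p-\q\rangle$.
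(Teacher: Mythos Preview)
The paper does not supply its own proof of this statement; it is imported from~\cite{AcharyaS19} and used as a black box. Your argument is the standard one and is correct: the decomposition of the indicator of $C_j$ as $\tfrac12(\mathbf{1}_K + h_j)$, the vanishing $\langle \mathbf{1}_K, \p-\q\rangle = 0$, and the isometry $\|(H^K)^\top v\|_2^2 = K\|v\|_2^2$ immediately give
\[
    \sum_{j=1}^{K}\bigl(\p(C_j)-\q(C_j)\bigr)^2 = \frac{K}{4}\,\|\p-\q\|_2^2,
\]
which is the substantive claim.

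Regarding the upper limit: your instinct that the reconciliation between $\ab$ and $K$ is ``only bookkeeping'' is slightly too generous. Dropping the all-ones column removes one term, leaving $K-1$ nontrivial summands, and in general $K-1 \neq \ab$, so no mere reindexing will turn the $K$-term sum into an $\ab$-term one while preserving the exact equality. This is not a flaw in your argument but an inconsistency in the paper's statement: the algorithm that actually uses this theorem (\cref{alg:local}) partitions users into $K$ groups and forms a $K$-dimensional product-Bernoulli, so the identity that is invoked downstream is precisely your $\sum_{j=1}^{K}$ version. The displayed upper limit $\ab$ appears to be a typo for $K$. Your proof of the $K$-indexed identity is complete; there is nothing to fix beyond noting that the statement as printed is slightly off.
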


Recall that identity and closeness testing fix a distance $\dst$, and test whether two distributions $\p,\q$ are the same or the total variation distance between two distributions is larger than $\dst$. By using the Cauchy--Schwarz inequality, we have $\sqrt{\ab} \norm{\p-\q}_2  \geq \norm{\p-\q}_1$ for any two distributions $\p,\,\q \in [\ab]$. Thus, if $\totalvardist{\p}{\q} = \frac{1}{2}\norm{\p-\q}_1 > \dst$ then we must have $\frac{4}{K} \sum_{j=1}^{\ab}(\p(C_j) - \q(C_j))^2 = \norm{\p-\q}^2_2 \geq (\frac{1}{\sqrt{\ab}}\norm{\p-\q}_1)^2 > \frac{4}{\ab}\dst^2$. Since $K \geq \ab$, that implies

\begin{equation}
\sum_{j=1}^{\ab}(\p(C_j) - \q(C_j))^2 > \dst^2.
\label{eq:local:hadmard_pq_not_eq}
\end{equation}
Otherwise, if $\p = \q$, we must have
\begin{equation}
        \sum_{j=1}^{\ab}(\p(C_j) - \q(C_j))^2 = \frac{K}{4}\norm{\p-\q}^2_2 = 0.
\end{equation}
Motivated by this observation, the identity testing algorithm in \cite{AcharyaCFST21} divides users from distribution $\p$ into $K$ disjoint groups with equal size. It then assigns the users in the $j$th group to a set $C_j$. Each user generates a 1-bit message, indicating whether the data of the user belongs to $C_j$. To make the output of each user differentially private, each user needs to add some noise to their output. Specifically, each user flips the $1$ bit of message with a certain probability $\frac{1}{e^{\priv}+1}$ and sends it using Randomized Response~\cref{fact:RR}. 

Recall that the message sent by each user is only one bit, and the messages sent in the same group follows the same Bernoulli distribution. For users from distribution $\p$, by taking the message from one user from every set, we can obtain a sample from a product-Bernoulli distribution $P$ with length $K$. Let $\mu(P)$ be the mean of the product-Bernoulli distribution, we have
\begin{equation}
        \mu(P)_j = \frac{e^{\priv}}{e^{\priv}+1}\sum_{x \in C_j}\p(x)   + \frac{1}{e^{\priv}+1} \sum_{x \notin C_j}\p(x)  = \frac{e^{\priv}-1}{e^{\priv}+1}\p(C_j) + \frac{1}{e^{\priv}+1}
        \label{eq:local:distribution_with_noise}
\end{equation}

Since our task is closeness testing instead of identity testing, we also need to perform the same operations for users from $\q$ and obtain samples from another product-Bernoulli distribution $Q$. Similarly, we use $\mu(Q)$ to denote the mean of the product-Bernoulli distribution $Q$.

There is a very intuitive understanding of~\cref{eq:local:distribution_with_noise}. When $\priv \rightarrow \infty$ (no privacy), $\mu(P)_j \rightarrow \p(C_j)$. When $\priv \rightarrow 0$ (no accuracy), $\mu(P)_j$ converges to $\frac{1}{2}$. Moreover, if users from two distributions are using the same parameter $(\priv,0)$ for differential privacy and $\totalvardist{\p}{\q} \geq \dst$, we have $\norm{\mu(P) - \mu(Q)}_2 > \frac{e^{\priv}-1}{e^{\priv}+1} \dst$ by combining~\cref{eq:local:hadmard_pq_not_eq,eq:local:distribution_with_noise}.

That implies we can test whether $\p = \q$ or $d_{TV} (\p,\q) > \dst$ by \emph{non-privately} testing whether $\mu(P)=\mu(Q)$ or $\norm{\mu(P)-\mu(Q)}_2 > \frac{e^{\priv}-1}{e^{\priv}+1} \dst$., using any sample-optimal $\lp[2]$-testing algorithm for testing identity and closeness of product distributions (\eg \cite{CanonneDKS20}). This leads to the optimal sample complexity for identity testing under LDP constraints, as shown in \cite{AcharyaCFST21}.

If we want to use the same privatizing method with heterogeneous constraints, however, we can no longer simply use a $\ell_2$ closeness testing algorithm for product distributions as outlined above: indeed, our differential privacy constraints for the two distributions $\p,\q$ are not the same. By \cref{fact:flip_1_bit} we need to flip the bits of message from wo distributions with probabilities $\frac{1}{e^{\priv_1}+1}$ and $\frac{1}{e^{\priv_1}+1}$ respectively, if we want messages from $\p$ to be $(\priv_1,0)$-differentially private and messages from $\q$ to be $(\priv_2,0)$-differentially private. But that implies $\mu(P)$ and $\mu(Q)$ will not be the same even if $\p = \q$. Thus, we need to have a different algorithm for testing closeness between $\p,\q$ given $\mu(P),\mu(Q)$. To design this algorithm, we first provide an algorithm for closeness testing of product distributions taylored to our purpose, which is simpler than the (more general) algorithm in~\cite{CanonneDKS20}. The idea of this testing algorithm is inspired by the test statistic in~\cite{CanonneCKLW21}, which we can simplify as we do not need, in our case, to deal with arbitrary covariances. While the end guarantees are not new, we provide this slightly simpler algorithm for completeness.

\begin{algorithm}
\caption{Testing closeness of two product distributions $P,\,Q$}\label{alg:product_closeness_no_privacy}
\begin{algorithmic}[1]
\Require Two groups of samples $X^{(1)},...,X^{(\ns)},X'^{(1)},...,X'^{(\ns)}$  from the $d-$dimensional product distribution $P$ and two groups of samples $Y^{(1)},...,Y^{(\ns)},Y'^{(1)},...,Y'^{(\ns)}$  from the $d-$dimensional product distribution $Q$, where $P,\,Q \in \{ -1,\,1 \}^d$ and $\ns = \frac{100\sqrt{d}}{\dst^2} $
\State Calculate $\hat{X}, \hat{X'}, \hat{Y}, \hat{Y'}$, which are mean vectors of $X,X',Y,Y'$ respectively.
\State Define $Z_1 = \langle\, \hat{X} - \hat{Y} , \hat{X’} - \hat{Y'} \rangle$.
\If{$Z_1 \leq \frac{1}{2}\dst^2$}
    \State return accept.
\Else 
    \State return reject.
\EndIf
\end{algorithmic}
\end{algorithm}

\begin{lemma}
    \label{theo:product_closeness_no_privacy}
    Assume we can draw samples from two $d$-dimensional product-Bernoulli distributions $P, Q \in \{ 0,\,1 \}^d$. Given a distance parameter $\dst > 0$,~\cref{alg:product_closeness_no_privacy} is a sample-optimal algorithm which distinguishes between $P = Q$ and $\norm{\mu(P) - \mu(Q)}_2 > \dst$ with probability at least $\frac{2}{3}$ using $\bigO{\frac{\sqrt{d}}{\dst^2}}$ samples.
\end{lemma}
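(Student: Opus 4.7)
The statistic $Z_1$ is designed via \emph{sample splitting}: by computing two independent empirical mean vectors $\hat X,\hat X'$ for $P$ and $\hat Y,\hat Y'$ for $Q$, one obtains two independent random vectors $v \eqdef \hat X - \hat Y$ and $w \eqdef \hat X' - \hat Y'$, both with common mean $\mu(P)-\mu(Q)$. Consequently, $\mathbb{E}[Z_1] = \mathbb{E}\langle v,w\rangle = \norm{\mu(P)-\mu(Q)}_2^2$, which equals $0$ under $P=Q$ and exceeds $\dst^2$ in the far case. The threshold $\dst^2/2$ lies exactly between the two regimes, so correctness reduces to showing that $Z_1$ concentrates within $\dst^2/2$ of its mean with probability at least $5/6$ in each case.

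\textbf{Variance calculation.} Two independence properties drive the analysis: (i)~$v \perp w$ by the sample split; and (ii)~within each of $v,w$ the coordinates are independent, since $P$ and $Q$ are product distributions. Together, these imply $\mathrm{Cov}(v_iw_i,v_jw_j)=0$ for $i\neq j$: by (i), $\mathbb{E}[v_iw_iv_jw_j]=\mathbb{E}[v_iv_j]\mathbb{E}[w_iw_j]$, and by (ii) each factor splits further as $\mathbb{E}[v_i]\mathbb{E}[v_j]\mathbb{E}[w_i]\mathbb{E}[w_j]=\mathbb{E}[v_iw_i]\mathbb{E}[v_jw_j]$. Hence $\mathrm{Var}(Z_1)=\sum_{i=1}^d \mathrm{Var}(v_iw_i)$. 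Setting $\mu_i \eqdef \mu(P)_i-\mu(Q)_i$ and $\sigma_i^2 \eqdef \mathrm{Var}(v_i)=\mathrm{Var}(w_i) \lesssim 1/\ns$ (each $v_i$ is a difference of two bounded empirical means over $\ns$ samples), a one-line expansion using $v\perp w$ gives $\mathrm{Var}(v_iw_i)=\sigma_i^4+2\sigma_i^2\mu_i^2$. Summing over $i$ yields $\mathrm{Var}(Z_1) \lesssim \frac{d}{\ns^2}+\frac{\norm{\mu(P)-\mu(Q)}_2^2}{\ns}$.

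\textbf{Concluding via Chebyshev, and optimality.} In the null case, all $\mu_i=0$, so $\mathrm{Var}(Z_1)\lesssim d/\ns^2$ and Chebyshev gives $\Pr[Z_1 > \dst^2/2] \lesssim d/(\ns^2\dst^4)$, which is at most $1/6$ once $\ns \gtrsim \sqrt{d}/\dst^2$. In the far case, $\mathbb{E}[Z_1]\geq \dst^2$ and $\Pr[Z_1 \leq \mathbb{E}[Z_1]/2]\lesssim \mathrm{Var}(Z_1)/\mathbb{E}[Z_1]^2 \lesssim \frac{d}{\ns^2\dst^4}+\frac{1}{\ns\dst^2}$, again at most $1/6$ for the same value of $\ns$. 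This gives the $\bigO{\sqrt{d}/\dst^2}$ sample complexity (the constant $100$ in the algorithm absorbs the slack), and the matching $\Omega(\sqrt{d}/\dst^2)$ lower bound is standard, following from classical $\ell_2$-testing lower bounds on $\{0,1\}^d$ that subsume uniformity testing on $d$ elements. The only subtlety worth flagging is the coordinate-level covariance vanishing: this is exactly where the full product structure is used, and it is precisely what lets the simple inner-product statistic $Z_1$ suffice here, unlike in the general (non-product) setting treated in~\cite{CanonneDKS20,CanonneCKLW21}, where a more involved test statistic is required.
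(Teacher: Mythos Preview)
Your proposal is correct and follows essentially the same approach as the paper: compute $\mathbb{E}[Z_1]=\norm{\mu(P)-\mu(Q)}_2^2$, use the two layers of independence (sample split and product structure) to bound $\mathrm{Var}(Z_1)\lesssim d/\ns^2 + \norm{\mu(P)-\mu(Q)}_2^2/\ns$, and conclude via Chebyshev in both cases. The only cosmetic difference is that you decompose the variance directly as $\sum_i \mathrm{Var}(v_iw_i)$ by arguing the cross-covariances vanish, whereas the paper expands $\mathbb{E}[Z_1^2]$ first and then cancels the $\ell_4^4$ terms---but the resulting bound and the rest of the argument are identical.
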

\noindent The proof can be found in~\cref{app:proof:product_closeness_no_privacy}.\smallskip

We then show that~\cref{alg:product_closeness_no_privacy}, with some modification, can deal with two groups of samples under different differential privacy constraints.

\begin{theorem}
    \label{theo:l2_mean_flip}
    Assume we can draw samples with noise from two $d$-dimensional product distributions $P, Q \in \{ 0,\,1 \}^d$ , where each coordinate in samples from $P, Q$ is flipped with probability $\frac{1}{e^{\priv_1}+1},\,\frac{1}{e^{\priv_2}+1}$ respectively. Given privacy parameters $\priv_1, \priv_2 \in (0,1]$ and a distance parameter $\dst$, there exist an sample-optimal algorithm which uses $\frac{\sqrt{d}}{\dst^2 \priv_1^2}$ samples from $P$ and $\frac{\sqrt{d}}{\dst^2 \priv_2^2}$ samples from $Q$, and distinguish between $\norm{\mu(P)-\mu(Q)}_2 = 0$ and $\norm{\mu(P) - \mu(Q)}_2 > \dst$ with high probability.
\end{theorem}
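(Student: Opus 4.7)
The plan is to adapt \cref{alg:product_closeness_no_privacy} to the noisy setting via \emph{debiasing}. Let $p_i \eqdef 1/(e^{\priv_i}+1)$ and $\alpha_i \eqdef 1-2p_i = (e^{\priv_i}-1)/(e^{\priv_i}+1)$ for $i\in\{1,2\}$, so that $\alpha_i \asymp \priv_i$ on $(0,1]$. A noisy observation $\tilde X\in\{0,1\}^d$ obtained from $X\sim P$ satisfies $\E[\tilde X_j] = p_1 + \alpha_1\mu(P)_j$ coordinate-wise, so I would rescale each sample to $\bar X \eqdef (\tilde X - p_1\vec 1)/\alpha_1$, giving an unbiased estimator of $\mu(P)$ with per-coordinate variance at most $1/(4\alpha_1^2)$ (and independence across coordinates inherited from the product structure of $P$). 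Apply the analogous affine transformation to the noisy samples from $Q$, using $p_2,\alpha_2$.

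Next, I would split the $\ns_1$ rescaled $P$-samples into two halves of size $m_1 \eqdef \ns_1/2$ (and similarly $m_2$ for $Q$), form the four empirical means $\hat\mu_P^{(1)},\hat\mu_P^{(2)},\hat\mu_Q^{(1)},\hat\mu_Q^{(2)}$, and compute the inner-product statistic
\[
    Z \eqdef \langle \hat\mu_P^{(1)} - \hat\mu_Q^{(1)},\; \hat\mu_P^{(2)} - \hat\mu_Q^{(2)}\rangle,
\]
accepting if $Z \leq \dst^2/2$ and rejecting otherwise. By independence across halves, $\E[Z] = \norm{\mu(P)-\mu(Q)}_2^2$, which is $0$ under the null and strictly exceeds $\dst^2$ under the alternative. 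It then suffices to control $\operatorname{Var}(Z)$ and conclude by Chebyshev with the threshold $\dst^2/2$.

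Writing $\Delta \eqdef \mu(P)-\mu(Q)$ and expanding $Z$ around $\Delta$, a coordinate-wise calculation (using independence of the two halves and of the coordinates within each sample) should yield
\[
    \operatorname{Var}(Z) \;\lesssim\; \norm{\Delta}_2^2 \cdot \Paren{\frac{1}{m_1\alpha_1^2} + \frac{1}{m_2\alpha_2^2}} \;+\; d\cdot \Paren{\frac{1}{m_1\alpha_1^2} + \frac{1}{m_2\alpha_2^2}}^2.
\]
Choosing $m_i \asymp \sqrt d/(\dst^2\priv_i^2)$ makes each parenthesized factor $\lesssim \dst^2/\sqrt d$; under the null this bounds the second term by $\dst^4$ directly (the first vanishes), and under the alternative $\norm{\Delta}_2^2 \geq \dst^2$ yields that both terms are $\lesssim \norm{\Delta}_2^4$, which is exactly what Chebyshev requires with the threshold $\dst^2/2$.

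The main obstacle I anticipate is the variance bookkeeping for the \emph{pure-noise} term $\sum_j V_j V'_j$ (where $V,V'$ denote the centered deviations of the two halves). Its expectation vanishes by independence across halves, but its variance equals $\sum_j \E[V_j^2]\E[V'_j{}^2]$ and thus absorbs \emph{both} noise scales $1/(m_i\priv_i^2)$ additively through $\E[V_j^2]=\E[V'_j{}^2]\leq 1/(4m_1\alpha_1^2) + 1/(4m_2\alpha_2^2)$. This additive coupling is precisely what forces the asymmetric sample-size requirement $\ns_i \asymp \sqrt d/(\dst^2\priv_i^2)$, rather than the symmetric $\sqrt d/(\dst^2\min(\priv_1,\priv_2)^2)$ one would get from a cruder analysis. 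For the matching lower bound, the argument reduces to the known $\Omega(\sqrt d/\dst^2)$ non-private $\lp[2]$ product testing bound combined with the noise contraction $\norm{\mu(P')-\mu(Q')}_2 = \alpha_i\norm{\mu(P)-\mu(Q)}_2$ that randomized response induces on each group, giving the claimed $\Omega(\sqrt d/(\dst^2\priv_i^2))$ per group.
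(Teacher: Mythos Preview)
Your proposal is correct and essentially identical to the paper's proof: the paper defines exactly the same debiased inner-product statistic (their $\gamma_i = 1/\alpha_i$ in your notation), obtains the same variance bound $\lesssim \norm{\Delta}_2^2(\gamma_1^2/\ns_1 + \gamma_2^2/\ns_2) + d(\gamma_1^2/\ns_1 + \gamma_2^2/\ns_2)^2$, and concludes via Chebyshev with the threshold $\dst^2/2$. Your brief lower-bound sketch is a small addition, as the paper's proof of this theorem does not address optimality explicitly (deferring it to the known identity-testing lower bound).
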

\noindent In the interest of space and clarity of exposition, the proof of this theorem is deferred to~\cref{app:proof:l2_mean_flip}.

Finally, we can claim that we have an sample-optimal algorithm for testing closeness of two distributions under heterogeneous local privacy constraints.

\begin{theorem}
    \label{theo:ldp_ns}
    For every $\ab \ge 0$ and $\priv_1, \priv_2 \in (0,1]$, there exist a private-coin protocol for $(\ab,\dst)$-closeness testing between two unknown distributions $\p,\,\q$ using $\bigO{\frac{\ab^{3/2}}{\dst^2\priv_1^2}}$ samples from $\p$ and $\bigO{\frac{\ab^{3/2}}{\dst^2\priv_2^2}}$ samples from $\q$, as this protocol is $(\priv_1,0)-$LDP for samples in $\p$ and $(\priv_2,0)-$LDP private for samples in $\q$ respectively.
\end{theorem}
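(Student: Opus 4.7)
The plan is to instantiate the Hadamard-response reduction already sketched in the preamble, carry the heterogeneous privacy parameters all the way through, and then invoke \cref{theo:l2_mean_flip} as a black box. Concretely, I would first split the $\ns_1$ users holding samples from $\p$ into $K=2^{\lfloor\log_2(\ab+1)\rfloor}$ equally sized subgroups, assign subgroup $j\in[K]$ to the Hadamard column set $C_j$, and have each user in subgroup $j$ output the indicator $\indic{X\in C_j}$. Similarly split the $\ns_2$ users holding samples from $\q$ into $K$ equally sized subgroups and produce the analogous indicators. Each user then applies Randomized Response (\cref{fact:RR}) to their single output bit, using flip probability $\frac{1}{e^{\priv_1}+1}$ for users of the first group and $\frac{1}{e^{\priv_2}+1}$ for users of the second. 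The privacy guarantee follows immediately: every user sends out a single bit produced by $(\priv_1,0)$- (resp.\ $(\priv_2,0)$-)Randomized Response on their sample, so their transcript is $\priv_1$-LDP (resp.\ $\priv_2$-LDP) by \cref{fact:RR}, as required by the definition of heterogeneous LDP.

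Next I would argue that aggregating one user per subgroup yields a single sample from a $K$-dimensional product-Bernoulli distribution whose underlying (pre-flip) mean $\mu(P)$ has coordinates $\mu(P)_j=\p(C_j)$, and likewise $\mu(Q)_j=\q(C_j)$ for the second group. With $\ns_1$ users partitioned evenly, this yields $\ns_1/K$ noisy samples from $P$, and analogously $\ns_2/K$ from $Q$. Completeness and soundness of the reduction then come directly from \cref{theo:ldb_l1_to_l2_mean}: if $\p=\q$ then $\p(C_j)=\q(C_j)$ for every $j$, so $\norm{\mu(P)-\mu(Q)}_2=0$; if $\totalvardist{\p}{\q}>\dst$, the chain of inequalities already set out in the text (Cauchy--Schwarz combined with \cref{theo:ldb_l1_to_l2_mean} and $K\geq \ab$) gives $\norm{\mu(P)-\mu(Q)}_2^2=\sum_{j=1}^{K}(\p(C_j)-\q(C_j))^2>\dst^2$.

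With this reduction in place, I would feed the noisy samples into the tester of \cref{theo:l2_mean_flip} with dimension $d=K$, privacy parameters $\priv_1,\priv_2$ and distance threshold $\dst$. That theorem guarantees a correct decision between $\norm{\mu(P)-\mu(Q)}_2=0$ and $\norm{\mu(P)-\mu(Q)}_2>\dst$ from $\bigO{\sqrt{K}/(\dst^2\priv_1^2)}$ product-samples of $P$ and $\bigO{\sqrt{K}/(\dst^2\priv_2^2)}$ of $Q$. Since each such product-sample consumes $K$ individual users, the total user counts required are
\[
\ns_1 \;=\; K\cdot \bigO{\frac{\sqrt{K}}{\dst^2\priv_1^2}} \;=\; \bigO{\frac{\ab^{3/2}}{\dst^2\priv_1^2}}, \qquad \ns_2 \;=\; \bigO{\frac{\ab^{3/2}}{\dst^2\priv_2^2}},
\]
using $K=\Theta(\ab)$. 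This is private-coin (no shared randomness is used: the column assignment depends only on the deterministic user index), and post-processing of LDP outputs by the analyzer preserves the privacy guarantees. I expect no major obstacle beyond careful bookkeeping: the one subtlety is that $\mu(P)\neq \mu(Q)$ pre-reduction is avoided because the Hadamard step is applied \emph{before} the randomized flip, so the $\p=\q$ case genuinely yields $\mu(P)=\mu(Q)$ and only the variance of the observed bits changes with the flip probability---which is exactly the regime that \cref{theo:l2_mean_flip} was designed to handle.
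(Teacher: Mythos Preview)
Your proposal is correct and follows essentially the same approach as the paper: the paper's proof simply points to \cref{alg:local} (Hadamard-column partitioning, per-group Randomized Response with the two distinct flip probabilities, then thresholding the statistic $Z_2$ from~\cref{eq:z2}) and notes that correctness follows from \cref{theo:ldb_l1_to_l2_mean}, \cref{eq:local:hadmard_pq_not_eq}, \cref{fact:flip_1_bit}, and \cref{theo:l2_mean_flip}. Your write-up is in fact more explicit than the paper's, spelling out the $K\cdot\bigO{\sqrt{K}/(\dst^2\priv_i^2)}$ accounting and the private-coin/post-processing observations that the paper leaves implicit.
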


\begin{proof}
We claim~\cref{alg:local} below satisfies our demands. Its correctness directly follows from~\cref{theo:ldb_l1_to_l2_mean,eq:local:hadmard_pq_not_eq,fact:flip_1_bit,theo:l2_mean_flip}. 
\end{proof}
\begin{algorithm}[htp!]
\caption{Closeness testing under heterogeneous local differential privacy constraints}\label{alg:local}
\begin{algorithmic}[1]
\Require Privacy parameters $\priv_1, \priv_2$, a distance parameter $\dst$, $\ns_1$ users from unknown distribution $\p$ and $\ns_2$ users from unknown distribution $\q$.
\State Define $C_j = \{ i \in [K] : H_{ij}^{(K)} = 1 \}$, $j \in [K]$.
\State $\ns_1$ users from $\p$ and $\ns_2$ users from $\q$ are divided into $K$ disjoint subgroups of equal size separately. Users in the $j$th group generate a bit of message $1$ or $0$ depending on whether their data is in the set $C_J$.
\State Users from $\p$ flip their one bit of message with probability $\frac{1}{e^{\priv_1} + 1}$, and users from $\q$ flip their one bit of message with probability $\frac{1}{e^{\priv_2} + 1}$. Then users send their message to the analyser.
\State For users from $\p$, by taking one user from each block and viewing the resulting collection of messages as a length-$K$
binary vector, the analyser gets $\ns_1/K$ independent samples of a product-Bernoulli distribution $P \in \{ 0,1 \}^K$. 
\State The analyser does the same thing for users from $\q$ and gets $\ns_2/K$ independent samples of a product-Bernoulli distribution $Q \in \{ 0,1 \}^K$.
\State The analyser calculates $Z_2$ as defined in~\cref{eq:z2}. 
\If{$Z_2 \leq \frac{1}{2}\dst^2$}
    \State return accept.
\Else 
    \State return reject.
\EndIf
\end{algorithmic}
\end{algorithm}

\begin{remark}
    The sample complexity of this algorithm matches the known lower bound of sample complexity of locally private identity testing, and thus this algorithm is sample-optimal.
\end{remark}

\subsubsection{Public-coin protocol}
By combining the domain compression technique stated in~\cref{theo:random:dct:hashing} with~\cref{theo:ldp_ns}, we are then able to obtain a sample-efficient public-coin algorithm:
\begin{theorem}
    \label{theo:ldp_public_ns}
    For every $\ab \ge 0$ and $\priv_1, \priv_2 \in (0,1]$, there exist a private-coin protocol for $(\ab,\dst)$-closeness testing between two unknown distributions $\p,\,\q$ using $\bigO{\frac{\ab}{\dst^2\priv_1^2}}$
    samples from $\p$ and $\bigO{\frac{\ab}{\dst^2\priv_2^2}}$ samples from $\q$, as this protocol is $(\priv_1,0)-$LDP for samples in $\p$ and $(\priv_2,0)-$LDP private for samples in $\q$ respectively.
\end{theorem}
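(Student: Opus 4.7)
The plan is to derive the claimed bound from the private-coin protocol of \cref{theo:ldp_ns} by composing it with a single application of the domain compression lemma (\cref{theo:random:dct:hashing}). Since the hashing step requires all users to agree on a common partition independent of their data, the resulting protocol is naturally described in the common-randomness (public-coin) model, which is what \cref{theo:random:dct:hashing} needs and what the paper's paragraph preceding the statement also mentions; I will interpret the theorem accordingly.

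Concretely, I would (i) have all parties use the common random seed to sample a uniformly random partition $\Pi=(\Pi_1,\dots,\Pi_{\ab'})$ of $[\ab]$ into $\ab'$ parts, (ii) have each user replace their raw sample $X\in[\ab]$ by the index $j\in[\ab']$ with $X\in\Pi_j$, thereby generating one sample from the induced distribution $\p_\Pi$ (resp.\ $\q_\Pi$), and (iii) run the private-coin protocol of \cref{theo:ldp_ns} on the induced domain $[\ab']$ with distance parameter $\dst' \eqdef c_1\sqrt{\ab'/\ab}\,\dst$. The sample complexity from \cref{theo:ldp_ns} becomes
\[
    \ns_i \;\lesssim\; \frac{(\ab')^{3/2}}{(\dst')^2 \priv_i^2} \;\asymp\; \frac{(\ab')^{1/2}\,\ab}{\dst^2 \priv_i^2},
\]
which is minimized at $\ab'=2$, yielding $\ns_i = \bigO{\ab/(\dst^2\priv_i^2)}$ for $i\in\{1,2\}$, as required.

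For privacy, each user's local randomizer acts on a single bit (the indicator of the part their sample fell into), so it is still the Hadamard-response / randomized-response mechanism applied to a deterministic function of the raw sample. Because $\Pi$ is drawn independently of the data, the overall map from a user's raw sample to their transmitted message is $(\priv_i,0)$-LDP by immunity of post-processing (\cref{theo:immunity:post-processing}), and this holds for every realization of the public coins, so the heterogeneous-LDP guarantee carries through intact from \cref{theo:ldp_ns}. Correctness in the completeness case ($\p=\q$) is immediate since $\p=\q \Rightarrow \p_\Pi=\q_\Pi$ for every $\Pi$, so the inner tester accepts with probability at least $2/3$. For soundness, \cref{theo:random:dct:hashing} guarantees that with probability at least $c_2$ over $\Pi$ the induced distributions satisfy $\totalvardist{\p_\Pi}{\q_\Pi}>\dst'$, in which case the inner tester rejects with probability at least $2/3$; overall the reject probability is $\geq \tfrac{2}{3}c_2$, a positive constant.

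The only real obstacle is this last point: the combined soundness probability $\tfrac{2}{3}c_2$ is a constant strictly less than $2/3$. I would handle it by the standard amplification step mentioned in \cref{sec:preliminaries}: run the entire reduce-then-test scheme $O(1)$ times independently with fresh public randomness for $\Pi$ and fresh local randomness, and output the majority vote, which boosts both completeness and soundness above $2/3$ while changing the sample complexity only by a constant factor. Putting these three pieces (domain compression, inner LDP tester, constant-factor amplification) together yields the claimed bound with the stated heterogeneous LDP guarantees.
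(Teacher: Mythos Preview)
Your proposal is correct and follows essentially the same approach as the paper: compress the domain to a constant size via \cref{theo:random:dct:hashing}, invoke the private-coin tester of \cref{theo:ldp_ns} on the induced distributions with the rescaled distance, and amplify to recover the required success probability. One small caveat: majority vote alone will not boost soundness if $\tfrac{2}{3}c_2<\tfrac{1}{2}$; the standard fix (which the paper also leaves implicit) is to first amplify the inner tester's error to well below $c_2$ and then reject if any of the $O(1)$ repetitions rejects, or equivalently to use a suitably chosen threshold rather than a simple majority.
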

\begin{proof}
    Recall that by using the domain compression, we are able to compress the size of domain to $L$ while the total variation distance between any two distributions are preserved with high probability. Specifically, if we set the size of the compressed domain to be a constant $L = c_1$, we have
    a $(c_1, \sqrt{\frac{c_1c_2}{\ab}}, \delta)$-domain compression such that for all $\p, \q$ with domain sizes $\ab$ and total variation distance $\totalvardist{\p}{\q} > \dst$, the mapping satisfies
    \[
        \Pr[\totalvardist{\p^{\psi_U}}{\q^{\psi_U}} \geq \sqrt{\frac{c_1c_2}{\ab}} \dst] >  1 - \beta. 
    \]
    Also, $\p^{\psi_U} = \q^{\psi_U}$ when $\p=\q$.  
    
    From~\cref{theo:ldp_ns} we know that there is an algorithm testing whether $\p^{\psi_U} = \q^{\psi_U}$ or $\Pr[\totalvardist{\p^{\psi_U}}{\q^{\psi_U}} > \sqrt{\frac{c_1c_2}{\ab}} \dst]$ using $\bigO{\frac{c_1^{3/2}}{\sqrt{\frac{c_1c_2}{\ab}}^2\dst^2\priv_1^2}}$ samples from $\p$ and $\bigO{\frac{c_1^{3/2}}{\sqrt{\frac{c_1c_2}{\ab}}^2}}$ from $\q$ as this algorithm is $(\priv_1,0)-$LDP for samples in $\p$ and $(\priv_2,0)-$LDP private for samples in $\q$. 
    By standard probability amplification techniques, we can decrease the error probability by a constant factor by increasing the number of samples by a constant factor, to achieve any desired constant error probability $\beta_0$. That implies the probability of this algorithm outputting the correct answer is at least $ (1-c_3)(1-\beta_0)$. By choosing the right constants, the probability can be made larger than $\frac{2}{3}$.
\end{proof}

\subsection{Under Shuffle Privacy}
We now turn to our algorithms under the less stringent shuffle privacy model.
\subsection{Private-coin protocol}
For closeness testing under heterogeneous shuffle differential privacy constraints, we propose an algorithm based on Poisson mechanism, whose guarantees are stated in~\cref{lemma:poisson_mech}. (I.e., each user will add Poisson noise to their data to preserve their privacy in the shuffled model.) Since $1-e^{-x} \geq x/2$ for any $x \in [0,1]$, we only need to set $\mu = \bigO{\frac{\log (1/\delta)}{\priv^2}}$ when $\priv \in [0,1] ,\triangle \geq 1$, in this lemma.

We now claim their is a sample-efficient algorithm for closeness testing under heterogeneous shuffle differential privacy constraints, as stated in~\cref{theo:sdp_ns}.

\begin{theorem}
    \label{theo:sdp_ns}
    For every $\ab \ge 0$, $\priv_1, \priv_2 \in (0,1]$ (w.l.o.g. $\priv_1 \geq \priv_2$), and $\delta\in(0,1]$, there exist a private-coin protocol for $(\ab,\dst)$ closeness testing between two unknown distributions $\p,\,\q$ using 
    \[
    \ns_1 = \bigO{ \frac{\sqrt{k}}{\dst^2} + \frac{\ab^{3/4}\sqrt{\log(1/\delta)}}{\dst\priv_1}  + \min\Paren{\frac{\priv_1^2 \priv_2^2}{\dst^4 \log^2(1/\delta)}, \frac{\ab^{2/3}}{\dst^{4/3}}\Paren{\frac{\priv_2}{\priv_1}}^{2/3}}}
    \]
    samples from $\p$ and $\ns_2 = (\frac{\priv_1}{\priv_2})^2\ns_1$ samples from $\q$; and this protocol is $(\priv_1,\delta)-$shuffle differentially private for samples in $\p$ and $(\priv_2,\delta)-$shuffle differentially private for samples in $\q$ respectively.
\end{theorem}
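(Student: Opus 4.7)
The plan is to realize the distributed Poisson mechanism sketched in \cref{sec:technique}. In the protocol, each user $U$ in group $i \in \{1,2\}$ with sample $X^{(U)} \in [\ab]$ will build the histogram-shaped vector $e_{X^{(U)}} + Z^{(U)} \in \mathbb{N}^\ab$ with $Z^{(U)} \sim \poisson{\mu_i/\ns_i}^{\otimes \ab}$ independently across users, and forward it (as a multiset of tokens) to shuffler $\mathcal{S}_i$. Choosing $\mu_i = C\log(1/\delta)/\priv_i^2$ for a sufficiently large constant $C$, the aggregate noise within group $i$ will be $\poisson{\mu_i}^{\otimes \ab}$ by Poisson infinite divisibility; since the group-$i$ histogram has $\lp[1]$-sensitivity $2$, \cref{lemma:poisson_mech} will then certify that the post-shuffling output is $(\priv_i,\delta)$-differentially private, as required.

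Next, after Poissonizing the per-group sample counts (a standard step contributing at most constant factors), the analyzer will observe $N_i \sim \poisson{\ns_i + \ab\mu_i}$ \iid draws from the mixture
\[
    \p'_i \eqdef (1-\gamma_i) \p_i + \gamma_i \uniform_\ab, \qquad \gamma_i \eqdef \frac{\ab\mu_i}{\ns_i + \ab\mu_i},
\]
with $\p_1 = \p$ and $\p_2 = \q$. I will enforce $\gamma_1 = \gamma_2 =: \gamma$, so that $\p = \q$ implies $\p'_1 = \p'_2$; given $\mu_i \propto 1/\priv_i^2$, this is equivalent (after simplification) to the relation $\ns_2 = (\priv_1/\priv_2)^2 \ns_1$. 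If additionally $\ns_1 \gtrsim \ab\mu_1$, then $\gamma \leq 1/2$, which yields $N_i \asymp \ns_i$ and $\dst' \eqdef (1-\gamma)\dst \geq \dst/2$. Under these conditions, the original private closeness testing task reduces to a \emph{non-private} closeness test on $(\p'_1,\p'_2)$ at distance $\dst' \asymp \dst$ using $N_1, N_2$ samples.

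It will then remain to invoke an optimal non-private closeness tester with unequal sample sizes -- such as that of~\cite{DK:16} -- and push its sample-complexity conditions through the substitutions $N_i \asymp \ns_i$ and $\ns_2 = (\priv_1/\priv_2)^2 \ns_1$. The floor $\min(N_1,N_2) \gtrsim \sqrt{\ab}/\dst^2$ will produce the first term; the privacy-mandated floor $\ns_1 \gtrsim \ab\mu_1 \asymp \ab\log(1/\delta)/\priv_1^2$, balanced with the non-private $\sqrt{\ab}/\dst^2$ floor, will produce the $\ab^{3/4}\sqrt{\log(1/\delta)}/(\dst\priv_1)$ term; and the asymmetric product-type condition of the non-private tester (of the form $N_1^2 N_2 \gtrsim \ab^2/\dst^4$) will become, after the substitutions, the $(\ab^{2/3}/\dst^{4/3})(\priv_2/\priv_1)^{2/3}$ branch inside the $\min$.

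The main obstacle will be this last bookkeeping step: precisely combining the unequal-sample conditions of the non-private tester with the privacy-induced coupling $\ns_2 = (\priv_1/\priv_2)^2 \ns_1$ and the Poisson-noise floor $\mu_i \gtrsim \log(1/\delta)/\priv_i^2$, then optimizing over $\mu_1$ (equivalently over $\gamma$) so that the resulting bound collapses into the stated three-term expression rather than a looser sum of five or six terms. Verifying that the two options appearing inside the $\min$ correctly correspond to the ``small-$\gamma$'' and ``large-$\gamma$'' regimes of the optimization will be the most delicate part; a secondary (routine) check is that the Poissonization of $\ns_i$ does not affect sample complexity up to constants.
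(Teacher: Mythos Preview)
Your overall architecture matches the paper exactly: distributed Poisson mechanism with $\mu_i \asymp \log(1/\delta)/\priv_i^2$, Poissonization, reinterpretation of the noisy histograms as $N_i$ \iid samples from the mixtures $(1-\gamma_i)\p_i + \gamma_i\uniform_\ab$, enforcing $\gamma_1=\gamma_2$ (which gives $\ns_2 = (\priv_1/\priv_2)^2\ns_1$), and then invoking the uneven-sample non-private tester of~\cite{DK:16}. That part is fine.

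The gap is in how you handle $\gamma$. You impose the extra assumption $\ns_1 \gtrsim \ab\mu_1$ so that $\gamma \leq 1/2$ and $N_i \asymp \ns_i$, $\dst' \asymp \dst$. This is \emph{not} what the paper does, and it costs you the stated bound. With $\mu_1$ already fixed at $\Theta(\log(1/\delta)/\priv_1^2)$ by privacy, your floor $\ns_1 \gtrsim \ab\mu_1 \asymp \ab\log(1/\delta)/\priv_1^2$ is strictly larger than the claimed $\ab^{3/4}\sqrt{\log(1/\delta)}/(\dst\priv_1)$ whenever $\ab^{1/4}\sqrt{\log(1/\delta)}\,\dst \gtrsim \priv_1$, and it is not dominated by the other terms in general. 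Your ``balancing'' heuristic that would turn $\max\bigl(\sqrt{\ab}/\dst^2,\ab\mu_1\bigr)$ into its geometric mean $\ab^{3/4}\sqrt{\mu_1}/\dst$ has no free parameter to optimize over: $\mu_1$ is pinned down by the privacy constraint, so there is nothing to balance.

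The paper instead keeps $\gamma$ arbitrary and substitutes $N_i = \ns_i + \ab\mu_i$ and $(1-\gamma) = \ns_i/N_i$ directly into the non-private condition $N_1 \gtrsim \sqrt{\ab}/((1-\gamma)^2\dst^2) + \ab/((1-\gamma)^2\dst^2\sqrt{N_2})$. The first summand becomes $\ns_1^2 \gtrsim \sqrt{\ab}\,(\ns_1+\ab\mu_1)/\dst^2$, which splits into $\ns_1 \gtrsim \sqrt{\ab}/\dst^2$ and $\ns_1 \gtrsim \ab^{3/4}\sqrt{\mu_1}/\dst$; this is where the $\ab^{3/4}$ term actually comes from, and it emerges precisely because one does \emph{not} collapse $N_1$ to $\ns_1$. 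Similarly, the $\min(\cdot,\cdot)$ in the third term is not the outcome of an optimization over $\mu_1$ but of the dichotomy ``$\ns_i$ dominates $\ab\mu_i$ in $N_i$'' versus the reverse, carried through the second summand. So the fix is simple but essential: drop the assumption $\gamma\le 1/2$ and do the bookkeeping with the exact substitutions $N_i=\ns_i+\ab\mu_i$, $(1-\gamma)\dst=\ns_i\dst/N_i$.
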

\begin{remark}
    When $\priv_1=\priv_2$ and ignore $\delta$, our algorithm retrieves the best known upper bound even for identity testing in the shuffle differential privacy model.
\end{remark}

\begin{proof}[Proof of~\cref{theo:sdp_ns}]
We will use the so-called ``Poissonisation trick'' here: that is, instead of exactly $\ns_1$ and $\ns_2$, we will assume there are $\poisson{\ns_1}$ users getting samples from $\p$, and $\poisson{\ns_2}$ from $\q$.

Let each user use the Poisson mechanism to preserve their privacy. We end up with a histogram $X_1,X_2,...,X_{\ab}$ with Poisson noise from $\p$ and a histogram $Y_1,Y_2,...,Y_{\ab}$ with Poisson noise from $\q$. Specifically, we have $X_j \sim \poisson{\ns_1 \p_i + \mu_1}$ and $Y_j \sim \poisson{\ns_2 \q_i + \mu_2}$ for $j \in \{ 1,2,...,\ab \}$ where 
\begin{equation}
    \label{eq:poi_mu}
    \mu_1 = \bigO{\frac{\log (1/\privdelta)}{\priv_1^2}}, \mu_2 = \bigO{\frac{\log (1/\privdelta)}{\priv_2^2}}.
\end{equation}
Equivalently, we can rewrite this as
\[
        X_j \sim \poisson{ \nsa ((1-\gammaa)\p_j + \gammaa \uniform_j) }, \qquad Y_j \sim \poisson{ \nsb ((1-\gammab)\q_j + \gammab \uniform_j) }
\]
for $j\in[\ab]$, 
where $\uniform$ is the uniform distribution on $[\ab]$, and \begin{align}
    \label{eq:ns_gamma}
    \nsa &\eqdef \ns_1+\ab\mu_1\,, \quad \gammaa \eqdef \frac{\ab\mu_1}{\nsa}, \notag \\
    \nsb &\eqdef \ns_2+\ab\mu_2\,, \quad \gammab \eqdef \frac{\ab\mu_2}{\nsb}.
\end{align}

Let $\p' \eqdef (1-\gammaa)\p + \gammaa \uniform$ and $\q' \eqdef (1-\gammab)\q + \gammab \uniform$. From the above equations, we can see that our histograms are equivalent to taking $\nsa$ samples from $\p'$ and $\nsb$ samples from $\q'$ without any noise. (Note that $\nsa,\nsb$ are integers when $\mu_1,\mu_2$ are integers.) Moreover, setting $\gammaa = \gammab = \gamma$, we get $\totalvardist{\p'}{\q'} = (1-\gamma) \totalvardist{\p}{\q}$. Thus, we can test whether $\p = \q$ or $\totalvardist{\p}{\q} > \dst$ by testing whether $\p' = \q'$ or $\totalvardist{\p'}{\q'} > (1-\gamma)\dst$ given $N_1$ samples from $\p'$ and $N_2$ from $\q'$.

There exist several known sample-optimal algorithms for closeness testing from an unveven number of samples~\cite{ChanDVV14,BV:15,DK:16,DiakonikolasGKPP21}, showing one can test closeness of two distributions $\p,\,\q$ with distance parameter $\dst$ using $\ns_1 = \bigO{\frac{\sqrt{k}}{\dst^2} + \frac{k}{\dst^2 \sqrt{\ns_2}}}$ samples from $\p$ and $\ns_2$ samples from $\q$. By substituting $\ns_1,\ns_2$ with $\nsa,\nsb$ and $\dst$ with $(1-\gamma) \dst$, we have
\begin{equation}
 \nsa \gtrsim \frac{\sqrt{k}}{(1-\gamma)^2\dst^2} + \frac{k}{(1-\gamma)^2\dst^2 \sqrt{\nsb}}
 \label{eq:shuffle_eq_constr}
\end{equation}

Combined with~\cref{eq:poi_mu,eq:ns_gamma} and the setting $\gammaa = \gammab = \gamma$, we now show that the complexity bound will be satisfied as soon as
\begin{align*}
    \ns_1 &\gtrsim \frac{\sqrt{k}}{\dst^2} + \frac{\ab^{3/4}\sqrt{\mu_1}}{\dst}  + \min\Paren{\frac{1}{\dst^4\mu_1\mu_2}, \frac{\ab^{2/3}}{\dst^{4/3}}\Paren{\frac{\mu_1}{\mu_2}}^{1/3}},  \\
    \ns_2 &= \mleft(\frac{\priv_1}{\priv_2}\mright)^2\ns_1,
\end{align*}
which will conclude the proof. Details follows.
\begin{itemize}
    \item The first term of~\cref{eq:shuffle_eq_constr} imposes
$
 \nsa \gtrsim \frac{\sqrt{k}\nsa^2}{\ns_1^2\dst^2},
$
by using~\cref{eq:ns_gamma}; 
i.e.,
$
 \frac{\sqrt{k}}{\ns_1\dst^2} + \frac{\ab^{3/2}\mu_1}{\ns_1^2\dst^2} \lesssim 1 ,
$
which is equivalent to
\begin{equation}
    \label{eq:n1}
 \ns_1 \gtrsim \frac{\sqrt{k}}{\dst^2} \,\,\,\, \text{and} \, \,\,\, \ns_1 \gtrsim \frac{\ab^{3/4}\sqrt{\mu_1}}{\dst} \asymp \frac{\ab^{3/4}\sqrt{\log(1/\privdelta)}}{\dst\priv_1}.
\end{equation}
    \item Combined with~\cref{eq:ns_gamma}, the second term of~\cref{eq:shuffle_eq_constr} imposes
\begin{equation}
 \nsa \gtrsim \frac{\ab \nsa^2}{\ns_1^2\dst^2\sqrt{\nsb}}
 = \frac{\ab \nsb^2}{\ns_2^2\dst^2\sqrt{\nsb}}.
 \tag{By using~\cref{eq:ns_gamma}}
\end{equation}
$\nsa \gtrsim \frac{\ab \nsa^2}{\ns_1^2\dst^2\sqrt{\nsb}}$ implies
$
 \sqrt{\nsb} \gtrsim \frac{\ab^{3/2}\mu_1}{\ns_1^2\dst^2}+\frac{\ab}{\ns_1\dst^2},
$
i.e.,
\[
 \ns_2+\ab\mu_2 \gtrsim \frac{\ab^{3}\mu_1^2}{\ns_1^4\dst^4}+\frac{\ab^2}{\ns_1^2\dst^4}.
\]
And $\nsa \gtrsim \frac{\ab \nsb^2}{\ns_2^2\dst^2\sqrt{\nsb}}$ implies
\[
\ns_1+\ab\mu_1 \gtrsim \frac{\ab}{\dst^2\sqrt{\ns_2}} + \frac{\ab^{5/2}\mu_2^{3/2}}{\dst^2\ns_2^2} ,
\]
i.e.,
$
\frac{\ab}{\dst^2\sqrt{\ns_2}(n_1+k\mu_1)} + \frac{\ab^{5/2}\mu_2^{3/2}}{\dst^2\ns_2^2(n_1+k\mu_1)} \lesssim 1 ,
$
which leads to the sufficient condition
\[
\ns_2 \gtrsim \frac{\ab^2}{\dst^4(n_1+k\mu_1)^2} \,\,\,\, \text{and} \,\,\,\, \ns_2 \gtrsim \frac{\ab^{5/4}\mu_2^{3/4}}{\dst\sqrt{n_1+\ab\mu_1}},
\]
i.e.,
\begin{equation}
    \label{eq:n2:intermediate}
\ns_2 \gtrsim \min\Paren{ \frac{\ab^2}{\dst^4\ns_1^2}, \frac{\ab^2}{\dst^4\ab^2\mu_1^2} } \,\,\,\, \text{and} \,\,\,\, \ns_2 \gtrsim \min\Paren{ \frac{\ab^{5/4}\mu_2^{3/4}}{\dst\sqrt{n_1}},\frac{\ab^{5/4}\mu_2^{3/4}}{\dst\sqrt{\ab\mu_1}} },
\end{equation}
i.e.,
\begin{align*}
&\ns_2 \gtrsim \min\Paren{ \frac{\ab^2}{\dst^4}\frac{\dst^2{\priv_1}^2}{\ab^{3/2}\log(1/\privdelta)}, \frac{\ab^2}{\dst^4}\frac{\dst^4}{\ab}, \frac{{\priv_1}^4}{\dst^4{\log^2(1/\privdelta)}} } \\
\text{and} \,\,\,\ & \ns_2 \gtrsim \min\Paren{\frac{\ab^{5/4}}{\dst\sqrt{n_1}}\frac{\log^{3/4}(1/\privdelta)}{{\priv_2}^{3/2}},\frac{\ab^{3/4}\priv_1}{\dst\sqrt{\log(1/\privdelta)}}\frac{\log^{3/4}(1/\privdelta)}{{\priv_2}^{3/2}}},
\end{align*}
i.e.,
\begin{align*}
&\ns_2 \gtrsim \min(\frac{\ab^{1/2}{\priv_1}^2}{\dst^2\log(1/\privdelta)}, \ab, \frac{{\priv_1}^4}{\dst^4{\log^2(1/\privdelta)}}) \\ \text{and} \,\,\,\, &\ns_2 \gtrsim \min(\frac{\ab^{5/4}\log^{3/4}(1/\privdelta)}{\dst\sqrt{n_1}{\priv_2}^{3/2}},\frac{\ab^{3/4}\priv_1\log^{1/4}(1/\privdelta)}{\dst{\priv_2}^{3/2}}).
\end{align*}
\end{itemize}
Now, notice that our condition that $\gammaa=\gammab$ can equivalently be restated as 
\begin{equation}
    \label{eq:n1:n2}
        \ns_1 = \frac{\mu_1}{\mu_2}\ns_2 \asymp \Paren{\frac{\priv_2}{\priv_1}}^2 \ns_2
\end{equation}
(also, recall that we assumed wlog that $\priv_1 \geq \priv_2$, which implies $\mu_1 \leq \mu_2$). Then,~\cref{eq:n2:intermediate} becomes
\[
\ns_2 \gtrsim 
\min\Paren{ \frac{\ab^2\mu_2^2}{\dst^4\mu_1^2\ns_2^2}, \frac{1}{\dst^4\mu_1^2} } 
\,\,\,\, \text{and} \,\,\,\, \ns_2 \gtrsim \min\Paren{ \frac{\ab^{5/4}\mu_2^{5/4}}{\dst\sqrt{\mu_1}\sqrt{n_2}},\frac{\ab^{3/4}\mu_2^{3/4}}{\dst\sqrt{\mu_1}} }.
\]
This is equivalent to
\begin{equation}
    \label{eq:n2}
\ns_2 \gtrsim \min\Paren{ \frac{\ab^{2/3}}{\dst^{4/3}} \Paren{\frac{\mu_2}{\mu_1}}^{2/3}, \frac{1}{\dst^4\mu_1^2} } 
+ \min\Paren{ \frac{\ab^{5/6}\mu_2^{1/2}}{\dst^{2/3}}\Paren{\frac{\mu_2}{\mu_1}}^{1/3},\frac{\ab^{3/4}\mu_2^{1/4}}{\dst}\Paren{\frac{\mu_2}{\mu_1}}^{1/2} }\,.
\end{equation}
Replacing $\ns_2$ by its value $\frac{\mu_2}{\mu_1} \ns_1$ in ~\cref{eq:n2}, we then get
\[
\frac{\mu_2}{\mu_1} \ns_1 \gtrsim \min\Paren{ \frac{\ab^{2/3}}{\dst^{4/3}} \Paren{\frac{\mu_2}{\mu_1}}^{2/3}, \frac{1}{\dst^4\mu_1^2} } 
+ \min\Paren{ \frac{\ab^{5/6}\mu_2^{1/2}}{\dst^{2/3}}\Paren{\frac{\mu_2}{\mu_1}}^{1/3},\frac{\ab^{3/4}\mu_2^{1/4}}{\dst}\Paren{\frac{\mu_2}{\mu_1}}^{1/2} },
\]
i.e.
\begin{equation}
\label{eq:n1_2}
\ns_1 \gtrsim \min\Paren{ \frac{\ab^{2/3}}{\dst^{4/3}} \Paren{\frac{\mu_1}{\mu_2}}^{1/3}, \frac{1}{\dst^4\mu_1 \mu_2} }
+ \min\Paren{ \frac{\ab^{5/6}\mu_2^{1/2}}{\dst^{2/3}}\Paren{\frac{\mu_1}{\mu_2}}^{2/3},\frac{\ab^{3/4}\mu_2^{1/4}}{\dst}\Paren{\frac{\mu_1}{\mu_2}}^{1/2} }.
\end{equation}
By combining \cref{eq:n1,eq:n1_2}, we have
\begin{align*}
    \ns_1 &\gtrsim \frac{\sqrt{k}}{\dst^2} + \frac{\ab^{3/4}\sqrt{\mu_1}}{\dst}  + \min\Paren{\frac{1}{\dst^4\mu_1\mu_2}, \frac{\ab^{2/3}}{\dst^{4/3}}\Paren{\frac{\mu_1}{\mu_2}}^{1/3}}, \\ 
     \ns_2 &= \frac{\mu_2}{\mu_1}\ns_1.
\end{align*}
Combining this with~\cref{eq:poi_mu} yields the claimed bounds. This concludes the proof.
\end{proof}

\subsection{Public-coin protocol}
By leveraging again the domain compression technique of~\cref{theo:random:dct:hashing} with~\cref{theo:sdp_ns}, we get the following public-coin sample complexity:
\begin{theorem}
    \label{theo:sdp_public_ns}
    For every $\ab \ge 0$ and $\priv_1, \priv_2 \in (0,1]$, and $\delta\in(0,1]$, there exist a private coin protocol for $(\ab,\dst)$ closeness testing between two unknown distributions $\p,\,\q$ using 
    \[
    \ns_1 = \bigO{\frac{\sqrt{\ab}}{\dst^2} + \frac{\ab^{2/3}}{\dst^{4/3}\priv_1^{2/3}}\log^{1/3}\frac{1}{\delta} + \frac{\sqrt{\ab}}{\dst \priv_1} \sqrt{\log \frac{1}{\delta}}}
    \]
    samples from $\p$ and $\ns_2 = \bigO{\Paren{\frac{\priv_1}{\priv_2}}^2 \ns_1}$ samples from $\q$, as this protocol is $(\priv_1,\delta)-$shuffle differentially private for samples in $\p$ and $(\priv_2,\delta)-$shuffle differentially private for samples in $\q$ respectively.
\end{theorem}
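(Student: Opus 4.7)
The plan is to combine the private-coin shuffle algorithm of \cref{theo:sdp_ns} with the domain compression primitive of \cref{theo:random:dct:hashing}, in exactly the same spirit as the proof of \cref{theo:ldp_public_ns}. Using the public random seed, all users jointly draw a uniformly random partition $\Pi = (\Pi_1, \ldots, \Pi_L)$ of $[\ab]$ into $L$ parts, for some parameter $2 \leq L \leq \ab$ to be chosen below. Each user replaces their sample $x \in [\ab]$ by the index $j$ such that $x \in \Pi_j$, reducing the problem to closeness testing of the induced distributions $\p_\Pi, \q_\Pi$ on $[L]$. By \cref{theo:random:dct:hashing}, $\p = \q$ still implies $\p_\Pi = \q_\Pi$, while $\totalvardist{\p}{\q} > \dst$ implies $\totalvardist{\p_\Pi}{\q_\Pi} \gtrsim \sqrt{L/\ab}\,\dst$ with constant probability. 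We then run the private-coin algorithm of \cref{theo:sdp_ns} on the reduced inputs with parameters $(L, \dst')$ for $\dst' \asymp \sqrt{L/\ab}\,\dst$; since the randomizers (and the Poisson mechanism they use) act on the post-partition samples, and the partition depends only on public randomness, the $(\priv_1, \delta)$- and $(\priv_2, \delta)$-shuffle privacy guarantees for the two groups carry over verbatim by post-processing.

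Substituting $\dst' \asymp \dst\sqrt{L/\ab}$ into the bound of \cref{theo:sdp_ns} transforms the three contributions into $\sqrt{L}/\dst'^2 \asymp \ab/(\dst^2\sqrt{L})$, $L^{3/4}\sqrt{\log(1/\delta)}/(\dst'\priv_1) \asymp L^{1/4}\sqrt{\ab\log(1/\delta)}/(\dst\priv_1)$, and (for the second branch of the min) $L^{2/3}/\dst'^{4/3} \cdot (\priv_2/\priv_1)^{2/3} \asymp \ab^{2/3}(\priv_2/\priv_1)^{2/3}/\dst^{4/3}$. Crucially, this last contribution is \emph{independent of $L$}, and since $\priv_2 \leq \priv_1 \leq 1$ it is always bounded by $\ab^{2/3}\log^{1/3}(1/\delta)/(\dst^{4/3}\priv_1^{2/3})$, hence dominated by the claimed middle term. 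It therefore suffices to minimize $f(L) \eqdef \ab/(\dst^2\sqrt{L}) + L^{1/4}\sqrt{\ab\log(1/\delta)}/(\dst\priv_1)$ over $L \in [2, \ab]$, while the $\ns_2 = (\priv_1/\priv_2)^2 \ns_1$ relationship is inherited directly from \cref{theo:sdp_ns}.

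Setting $f'(L) = 0$ yields the unconstrained optimum $L^\star \asymp \ab^{2/3}\priv_1^{4/3}/(\dst^{4/3}\log^{2/3}(1/\delta))$, at which $f(L^\star) \asymp \ab^{2/3}\log^{1/3}(1/\delta)/(\dst^{4/3}\priv_1^{2/3})$, matching the middle term of the claim. When $L^\star > \ab$ (equivalently, $\priv_1 \gtrsim \ab^{1/4}\dst\sqrt{\log(1/\delta)}$) we clamp to $L = \ab$; a direct check shows that in this regime the ``second term'' contribution $\ab^{3/4}\sqrt{\log(1/\delta)}/(\dst\priv_1)$ is dominated by $\sqrt{\ab}/\dst^2$, so $f(\ab) \asymp \sqrt{\ab}/\dst^2$. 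Symmetrically, when $L^\star < 2$ (equivalently, $\sqrt{\ab}\,\priv_1/\dst \lesssim \sqrt{\log(1/\delta)}$) we clamp to $L = 2$; here the ``first term'' contribution $\ab/\dst^2$ is dominated by $\sqrt{\ab\log(1/\delta)}/(\dst\priv_1)$, so $f(2) \asymp \sqrt{\ab\log(1/\delta)}/(\dst\priv_1)$. In every regime the sample complexity is bounded by the maximum of the three claimed terms, and the constant-factor loss of success probability from domain compression is restored by a standard amplification (constant number of independent repetitions plus majority vote).

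The only real (but mild) obstacle is the boundary bookkeeping in the optimization: one must verify that in each of the two clamped regimes the term that would dominate at $L^\star$ is in fact dominated by the surviving term at the boundary. This is precisely what the defining inequality of each regime enforces, making the argument otherwise a straightforward plug-and-chug on top of \cref{theo:sdp_ns} and \cref{theo:random:dct:hashing}.
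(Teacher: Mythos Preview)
Your proposal is correct and follows essentially the same approach as the paper: apply domain compression (\cref{theo:random:dct:hashing}) on top of the private-coin shuffle protocol of \cref{theo:sdp_ns}, optimize over the compressed domain size $L$, and observe that the $\min$ term (via its second branch, which is independent of $L$) is absorbed by the middle term of the final bound. The paper is terser---it simply sets $L = \min(\ab, \max(2, \ab^{2/3}/(\dst^{4/3}\mu_1^{2/3})))$ and writes the resulting three-term sum directly---whereas you carry out the boundary analysis ($L^\star > \ab$ and $L^\star < 2$) explicitly; this extra care is harmless and, if anything, makes the argument easier to verify.
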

\begin{proof}
    We also use domain compression technique, and the procedure is the same as that in the proof of~\cref{theo:ldp_public_ns}. The only thing we need to do is to choose an appropriate size of the compressed domain. I.e., we need to choose $2\leq L\leq \ab$ such that the following is minimised:
\[
    \ns_1 \gtrsim \frac{\sqrt{L}}{\Paren{\frac{\sqrt{L}\dst}{\sqrt{\ab}}}^2} + \frac{L^{3/4}\sqrt{\mu_1}}{\Paren{\frac{\sqrt{L}\dst}{\sqrt{\ab}}}}  + \min\Paren{\frac{1}{{\Paren{\frac{\sqrt{L}\dst}{\sqrt{\ab}}}}^4\mu_1\mu_2}, \frac{L^{2/3}}{{\Paren{\frac{\sqrt{L}\dst}{\sqrt{\ab}}}}^{4/3}}\Paren{\frac{\mu_1}{\mu_2}}^{1/3}},
\]
i.e.
\[
    \ns_1 \gtrsim \frac{\ab}{\sqrt{L}\dst^2} + \frac{L^{1/4}\sqrt{\ab}\sqrt{\mu_1}}{\dst}  + \min\Paren{\frac{\ab^2}{L^2{\dst}^4\mu_1\mu_2}, \frac{\ab^{2/3}}{\dst^{4/3}}\Paren{\frac{\mu_1}{\mu_2}}^{1/3}}.
\]

If we set $L$ to minimise the sum of the two terms, that is, 
\[
    L \eqdef \min\Paren{\ab, \max\Paren{2,\frac{\ab^{2/3}}{\dst^{4/3}\mu_1^{2/3}}}},
\]
we get
\[
    \ns_1 \gtrsim \frac{\sqrt{\ab}}{\dst^2} + \frac{\ab^{2/3}}{\dst^{4/3}}\mu_1^{1/3} + \frac{\sqrt{\ab}}{\dst}\sqrt{\mu_1}  + \min\Paren{\max\Paren{\frac{\ab^2}{\dst^4\mu_1\mu_2}, \frac{1}{{\dst}^4\mu_1\mu_2} , \frac{\ab^{2/3}}{\dst^{4/3}\mu_1^{2/3}\mu_2}}, \frac{\ab^{2/3}}{\dst^{4/3}}\Paren{\frac{\mu_1}{\mu_2}}^{1/3}}
\]
i.e.
\begin{equation}
    \ns_1 \gtrsim \frac{\sqrt{\ab}}{\dst^2} + \frac{\ab^{2/3}}{\dst^{4/3}}\mu_1^{1/3} + \frac{\sqrt{\ab}}{\dst}\sqrt{\mu_1}  + \min\Paren{\frac{\ab^2}{\dst^4\mu_1\mu_2}, \frac{\ab^{2/3}}{\dst^{4/3}}\Paren{\frac{\mu_1}{\mu_2}}^{1/3}}
    \label{eq:shuffle_before_rm_min}
\end{equation}

Since we have
\[
\min\Paren{\frac{\ab^2}{\dst^4\mu_1\mu_2}, \frac{\ab^{2/3}}{\dst^{4/3}}\Paren{\frac{\mu_1}{\mu_2}}^{1/3}} 
\leq  \frac{\ab^{2/3}}{\dst^{4/3}}\Paren{\frac{\mu_1}{\mu_2}}^{1/3}
\leq \frac{\ab^{2/3}}{\dst^{4/3}}\mu_1^{1/3},
\]

we can remove the last term in~\cref{eq:shuffle_before_rm_min}. Thus, the sample complexity is 
\begin{align*}
    \ns_1 &\gtrsim \frac{\sqrt{\ab}}{\dst^2} + \frac{\ab^{2/3}}{\dst^{4/3}\priv_1^{2/3}}\log^{1/3}\frac{1}{\delta} + \frac{\sqrt{\ab}}{\dst \priv_1} \sqrt{\log \frac{1}{\delta}}, \\ 
     \ns_2 &= (\frac{\priv_1}{\priv_2})^2\ns_1.
\end{align*}
This concludes the proof.
\end{proof}

\subsection{What About Central Privacy?}
    \label{ssec:central:dp}
We now recall our result for closeness testing in the heterogeneous central privacy model,~\cref{theo:central:intro}.
\begin{theorem}[Central Privacy, restated]
    \label{theo:central}
There exists an algorithm for closeness testing which guarantees $\priv_1$-differential privacy to the $\ns_1$ users of the first group, and $\priv_2$-differential privacy to the $\ns_2$ users of the second, with 
\[
    \ns_1 = \bigO{\max\Paren{\frac{\ab^{1/2}}{\dst^2}, \frac{\ab^{1/2}}{\priv_1^{1/2}\dst}, \frac{\ab^{2/3}}{\dst^{4/3}}, \frac{\ab^{1/3}}{\priv_1^{2/3}\dst^{4/3}}, \frac{1}{\priv_1\dst}}}
\]
and $\ns_2 = \bigO{\frac{\priv_1}{\priv_2}\ns_1}$ (assuming without loss of generality that $\priv_2 \leq \priv_1)$. 
\end{theorem}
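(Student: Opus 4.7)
The plan is to combine the homogeneous-privacy closeness tester $\mathcal{A}$ of~\cite{Zhang21} (which, with $\priv=\priv_1$, already has sample complexity matching the claimed $\ns_1$) with the privacy amplification by subsampling primitive of~\cref{theo:subsampling}, exactly as sketched in~\cref{sec:technique}. The starting observation is that $\mathcal{A}$ is $\priv_1$-differentially private with respect to changing any single entry in its combined $2\ns_1$-sample input; in particular, by freezing one group's samples, we automatically obtain that $\mathcal{A}$ is $\priv_1$-DP with respect to group~1's samples (viewed alone) \emph{and} $\priv_1$-DP with respect to group~2's samples (viewed alone). This delivers the required $\priv_1$-privacy guarantee for group~1 immediately, with $\ns_1$ samples.

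To reduce the privacy parameter from $\priv_1$ to $\priv_2 \leq \priv_1$ for group~2 without modifying $\mathcal{A}$, I would collect $\ns_2 = \Theta\bigl((\priv_1/\priv_2)\ns_1\bigr)$ i.i.d.\ samples from $\q$, uniformly subsample $\ns_1$ of them (call this random mapping $\mathcal{H}$), and feed the $\ns_1$ samples from $\p$ together with $\mathcal{H}$'s output into $\mathcal{A}$. Viewing the composition $\mathcal{A} \circ \mathcal{H}$ as a mapping of the $\ns_2$ group-2 samples,~\cref{theo:subsampling} guarantees it is $(\priv_2',0)$-DP with $\priv_2' = \ln\bigl(1 + (\ns_1/\ns_2)(e^{\priv_1}-1)\bigr)$. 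In the privacy-relevant regime $\priv_1 \leq 1$, this simplifies to $\priv_2' = O((\ns_1/\ns_2)\priv_1)$, so rescaling $\ns_2$ by a suitable absolute constant yields exactly $\priv_2$-DP for group~2; the low-privacy regime $\priv_1 \geq 1$ is already absorbed into the non-private terms $\ab^{1/2}/\dst^2$ and $\ab^{2/3}/\dst^{4/3}$ of the bound and is handled by the same argument after artificially clipping $\priv_1$ to $1$.

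Correctness essentially comes for free: a uniformly random size-$\ns_1$ subsample of $\ns_2$ i.i.d.\ samples from $\q$ is itself a collection of $\ns_1$ i.i.d.\ samples from $\q$, so $\mathcal{A}$'s input has exactly the same distribution as in the homogeneous setting, and its testing guarantee transfers verbatim (independently of $\mathcal{H}$'s internal randomness, which can be absorbed into the protocol's private coins). There is no substantial obstacle in this argument: the only small point to verify is that analyzing privacy separately per group is legitimate, which follows directly from the definition of DP applied group-wise and from the fact that $\mathcal{H}$ only touches group~2. The whole reduction is essentially mechanical once one observes that, unlike the natural ``uneven-sample'' route discussed in~\cref{ssec:central:dp} that stumbles on the sensitivity of~\cite{DiakonikolasGKPP21}'s statistic, Zhang's tester need not itself be adapted to $\ns_1 \neq \ns_2$.
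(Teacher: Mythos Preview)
Your proposal is correct and follows essentially the same approach as the paper: invoke the $\priv_1$-DP tester of~\cite{Zhang21} on $\ns_1$ samples from each group, then apply privacy amplification by subsampling (\cref{theo:subsampling}) to the second group's $\ns_2 = \Theta((\priv_1/\priv_2)\ns_1)$ samples to boost its guarantee to $\priv_2$-DP. The paper's own proof is a single sentence pointing to exactly this combination, so your write-up simply fills in the (straightforward) details that the paper leaves implicit.
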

As outlined in~\cref{sec:technique}, this follows straightforwardly by combining the result from~\cite{Zhang21} and privacy amplification by subsampling~\cref{theo:subsampling}.

This begs the question of whether this simple approach can be improved upon. We discuss below some other natural approaches, and why they failed or did not pan out.

    \subsubsection{Second idea: find a test statistics with heterogeneous sensitivity}

Designing a sample-efficient closeness testing algorithm under heterogeneous central differential privacy constraints is much more challenging than that under local and shuffle constraints. Our first attempt was to generalize previous work. As mentioned in the background section, a sample-optimal closeness testing algorithm under central differential privacy constraints was proposed in~\cite{Zhang21} by using the test statistic in~\cite{DiakonikolasGKPP21}. To be more specific, that test statistic is as follows. Suppose we take two sets of $\ns$ i.i.d.\ samples from both $\p$ and $\q$, and let $X_i, X'_i, Y_i, Y'_i$ be the number of occurrences of $i$ in those four sets respectively for $i \in [\ab]$ where $\ab$ is the size of the domain of $\p,\,\q$. Then the test statistic $Z$ is defined as
\begin{equation}
    \label{eq:statistic:z}
    Z \eqdef  |X_i-Y_i| + |X'_i-Y'_i| - |X_i-X'_i| - |Y_i-Y'_i| 
\end{equation}

In~\cite{Zhang21}, the algorithm shifts $Z$ to get a new test statistic 
\begin{equation}
    Z' = (Z - C_1\sqrt{\ns}- \frac{C_2}{\priv})/2
\end{equation}

Then, it uses a sigmoid function to map $\priv Z'$ to $(0,1)$ and then draws a Bernoulli random variable using this value as a parameter. Finally, the algorithm outputs "accept" or "reject" depending on whether the value of this random variable is $1$ or $0$. Roughly speaking, the idea is that Z will either be close to $0$ or greater than $\sqrt{n}$, depending on whether $\p=\q$
or $\totalvardist{\p}{\q}> \dst$; and therefore after shifting, $Z'$ will be either $< -C_2/\priv$ or $> C_2/\priv$ (with high probability). Using the sigmoid function on $\priv Z'$ maps this to a Bernoulli with bias either $1/2-\bigOmega{1}$ or $1/2+\bigOmega{1}$, which allows to distinguish the two cases while satisfying $\priv$-DP (since changing one sample will change the value of $Z$ by at most 2, as we will see below, and thus of $\priv Z'$ by at most $\priv$.

If we view $Z$ as a function of $\mathbf{X,X',Y,Y'}$ respectively, the $\ell_1$-sensitivity of $Z$ must be smaller than or equal to $2$. To see why, let us w.l.o.g. consider one sample in the first set taken from $\p$. The change of that sample will only add $1$ to $X_i$ and decrease $1$ from $X_j$ for some $i,j \in [\ab]$ where $i \neq j$. Thus, $Z$ will only by increased by $1$ or decreased by $1$ when that sample changes. By using a similar statement, one can show that the sensitivity of $\priv Z'$ is not larger than $\priv$. Combined with the fact $e^{- |\gamma|} \leq \sigma(x+\gamma) \sigma(x) \leq e^{|\gamma|}$, the author of~\cite{Zhang21} showed the algorithm is $(\priv,0)$-differential private in the central model.

Now, we consider how to generalize their algorithm. One natural idea is to adapt the test statistic to differentiate its sensitivity for samples from $\p$ and samples from $\q$. An example would be the following statistic:
\begin{equation} \label{eq:sta_example}
     Z = |\frac{X_i}{n}-\frac{Y_i}{m}|+|\frac{X'_i}{n}-\frac{Y'_i}{m}| -  |\frac{X_i}{n}-\frac{X'_i}{n}|-  |\frac{Y_i}{m}-\frac{Y'_i}{m}|.
\end{equation}

However, that test static is not easy to analyze as previous approaches do not naturally extend to this. Specifically, in~\cite{DiakonikolasGKPP21}, the analysis of the test static defined in~\cref{eq:statistic:z} relied on two things: (1)~the samples are drawn using the "Poissonisation trick" ($X_i, X'_i,Y_i, Y'_i$ are Poisson random variables), and (2) clever use of the additive property of Poisson distributions. Thus, we cannot use their proof technique for our proposed test statistic as a scaled Poisson (such as $X_i/\ns$) is no longer Poisson, and the different scalings prevent their approach from going through. To overcome that difficulty, we found a simpler and more general method to analyze the static defined in~\cref{eq:statistic:z}. It relies on an identity (Zolotarev identity) relating the expectation of the absolute value of any random variable to the integral of its characteristic function:
    \[
        \mathbb{E}{|X|} = \frac{2}{\pi}\int_0^\infty \frac{1-\Re(\mathbb{E}{e^{i tX}})}{t^2}\, dt
    \]

We have written the complete analysis in~\cite{CanonneS22}. However, we were still not able to find an appropriate tester even after finding the new analysis method. However, even this new analysis did not allow us to establish the desired properties of the testers we considered (such as the one in~\cref{eq:sta_example}), and it is
unclear whether they would actually work. We conjecture so, and proving this would be an interesting (and non-trivial) future direction.

\subsubsection{Third idea: using a different privatizing method}
In~\cite{Zhang21}, the closeness testing algorithm firstly calculates the test statistic using samples drawn from $\p$ and $\q$, then privatizes the test static. Instead, we want to privatize the histograms drawn from $\p$ and $\q$ respectively, and then use the privatized histogram to calculate a test statistic. 

However, we do not want to use a continuous noise such as the Laplace noise. If we use one of those continuous distributions for the noise, then analysing the statistic will become very hard. (E.g. the distribution of sums of Poisson/multinomial and Laplace can be very complicated.) For example, we can consider using the Skellam mechanism. (A Skellam distribution is the same as the distribution of the difference between two Poisson random variables.) The correctness of the Skellam mechanism directly follows from the correctness of the Poisson mechanism.

\begin{lemma}[Skellam noise]
    Let $f \colon X^n \rightarrow \mathcal{Z}$ be a $\triangle$-sensitive function. For any $\priv > 0, \delta \in (0,1)$ and $\lambda \geq \frac{16\log (10/\delta)}{(1-e^{-\priv/\triangle})^2} + \frac{2\triangle}{1-e^{-\priv/\triangle}}$, the randomized function $\mathcal{A}(X^n) = f(x) + Y_1 - Y_2$ where $Y_1,Y_2 \sim \poisson{\lambda}$ is $(\priv,\delta)-$differentially private in the shuffle model.
\end{lemma}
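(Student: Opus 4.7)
The plan is to deduce this as an immediate corollary of the Poisson mechanism (Lemma~\ref{lemma:poisson_mech}) together with immunity of post-processing (Lemma~\ref{theo:immunity:post-processing}), exploiting the definitional fact that a Skellam random variable with rate $\lambda$ is the difference $Y_1 - Y_2$ of two independent $\poisson{\lambda}$ variables. Hence no new privacy argument is needed: the Skellam noise is obtained by post-processing a Poisson-noised release.

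Concretely, I would first introduce the intermediate mechanism $\mathcal{A}_1(x) \eqdef f(x) + Y_1$ with $Y_1 \sim \poisson{\lambda}$. Since $f$ is $\triangle$-sensitive and $\lambda$ satisfies the stated lower bound, Lemma~\ref{lemma:poisson_mech} directly gives that $\mathcal{A}_1$ is $(\priv,\delta)$-differentially private in both the central and shuffle models. Next, I would sample $Y_2 \sim \poisson{\lambda}$ independently of $x$, $Y_1$, and of the entire transcript of $\mathcal{A}_1$, and define the randomized map $g(z) \eqdef z - Y_2$. Since the randomness of $g$ is independent of its input, $g$ is a legitimate (randomized) post-processing, and $\mathcal{A}(x) = g(\mathcal{A}_1(x)) = f(x) + Y_1 - Y_2$. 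Applying Lemma~\ref{theo:immunity:post-processing} to $\mathcal{A}_1$ and $g$ then yields that $\mathcal{A}$ inherits the $(\priv,\delta)$-DP guarantee, as desired.

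For the shuffle-model claim, the same reduction carries through without modification. In the distributed implementation of the Poisson mechanism, each user emits a small Poisson share so that the shuffled aggregate is distributionally equivalent to $f(x) + Y_1$; the subtraction of an independently generated $Y_2$ is then performed by the (untrusted) analyzer after the shuffle, as data-independent post-processing. A symmetric alternative, which might read more cleanly, is to run two parallel Poisson-shuffle mechanisms, one producing $f(x) + Y_1$ via $f$, and one producing $-Y_2$ via the constant (zero-sensitivity) function, and to let the analyzer add their outputs; post-processing of the joint release then gives the same conclusion.

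I do not expect any genuine technical obstacle. The only point worth care is ensuring that $Y_2$ is drawn independently of the adversary's view of $\mathcal{A}_1(x)$; once this is arranged, Lemma~\ref{theo:immunity:post-processing} does all the work, and one never has to reason directly about the distribution of the Skellam noise or to bound its privacy loss from scratch. A minor cosmetic note is that, for the bound on $\lambda$ to literally match Lemma~\ref{lemma:poisson_mech}, the $\delta$ appearing in the Skellam statement plays the role of the $\privdelta$ there; this is a purely notational identification.
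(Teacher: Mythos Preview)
Your proposal is correct and matches the paper's own proof exactly: apply the Poisson mechanism (Lemma~\ref{lemma:poisson_mech}) to obtain that $f(x)+Y_1$ is $(\priv,\delta)$-DP, then invoke immunity to post-processing (Lemma~\ref{theo:immunity:post-processing}) for the subtraction of the independent $Y_2$. Your write-up is in fact more careful than the paper's two-line argument, particularly in spelling out the shuffle-model implementation and the independence of $Y_2$.
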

\begin{proof}
    Since the Poisson mechanism is correct, the randomized function $f$ is already $(\priv,\delta)-$differentially private after adding $Y_1$. Then it should still be $(\priv,\delta)-$differentially private after adding $Y_2$ due to the immunity of post-processing of differential privacy.
\end{proof}

Assuming we are dealing with homogeneous privacy constraints and want our output to be $(\priv,\delta)-$differentially private, we would set the parameter $\mu$ of the Skellam mechanism to be $\Theta(\log(1/\delta)/\priv^2)$. With some transformation, our test statistic could be written in this form:
 \[
        \tilde{Z} = \sum_{i=1}^{\ab} (|X_i + a_i - Y_i - c_i| + |X'_i + b_i - Y'_i - d_i| - |X_i + a_i' - X'_i - b_i'| - |Y_i + c_i' - Y'_i - d_i'|)
\]

where $a_1,b_i,c_i,d_i,a_1',b_i',c_i',d_i' \sim \poisson{2\mu}$ due to the additivity property of the Poisson distribution. By using the additivity property of the Poisson distribution again, it could be rewritten as
 \[
        \tilde{Z} = \sum_{i=1}^{\ab} (|A_i - B_i | + |A'_i - B'_i | - |A_i  - A'_i | - |B_i - B'_i |)
\]
where $A_i, A_i' \sim \poisson{\ns \p_i + 2\mu}$, $B_i, B_i' \sim \poisson{\ns \q_i + 2\mu}$. Since $\ns \p_i + 2\mu = (\ns + 2\ab \mu)\frac{\ns \p_i + 2\mu}{\ns + 2\ab \mu}$ and $\sum_{i=1}^{\ab}\frac{\ns \p_i + 2\mu}{\ns + 2\ab \mu} = 1$, using this test statistic is then equivalent to using the test statistic in~\cite{DiakonikolasGKPP21} by taking $\ns + 2\ab \mu$ samples from another distribution $\p'$ and $\ns + 2\ab \mu$ samples from another distribution $\q'$, where $\p'_i = \frac{\ns \p_i + 2\mu}{\ns + 2\ab \mu},\q'_i = \frac{\ns \q_i + 2\mu}{\ns + 2\ab \mu}$. Then, directly from the analysis in $\frac{\ns p_i + 2\mu}{\ns + 2\ab \mu}$ we have 
\[
        \mathbb{E}[\tilde{Z}]^2 \gtrsim  \frac{n^{3}\dst^4}{\ab(1+\ab\mu/\ns)}
\]

To bound the variance of the test statistic, we will use the Efron--Stein inequality. Our test statistic could be seen as the sum of $12\ab$ independent Poisson random variables. For each $i \in [\ab]$, there are 2 $\poisson{\ns \p_i}$ random variables, 2 $\poisson{\ns \q_i}$ random variables, 8 $\poisson{2\mu}$ random variables, and they are independent of each other. Then by using the Efron--Stein inequality, we get
     \[
        \var(\tilde{Z}) \le \frac{1}{2} \sum_{i=1}^{12\ab}\bE{}{(\tilde{Z} - \tilde{Z'_i})^2}
     \]

where the $i$th independent random variable in $\tilde{Z}$ is different from that in $\tilde{Z'_i}$, and other pairs of random variables are the same, respectively. This allows us to bound the variance as

\begin{align*} 
        \var(\tilde{Z}) \le \frac{1}{2}\sum_{i=1}^{\ab} 2 \cdot 2\ns p_i + 2 \cdot 2\ns q_i + 8 \cdot 2\cdot 2\mu = \bigO{\ns + \ab \mu}.
\end{align*} 

Then, for our tester to work, we need the "signal" to be larger than the "noise", i.e., we want $\mathbb{E}[\tilde{Z}]^2 \gg \var(\tilde{Z})$. When $\ns \ge \ab \mu$, this implies $\frac{n^{3}\dst^4}{\ab} \gtrsim \ns$, and thus $\ns = \bigOmega{\frac{k^{1/2}}{\dst^{2}}}$. Otherwise, we have $\frac{\ns^{3}\dst^4}{\ab(1+\ab\mu/\ns)} \gtrsim \ab \mu$, and thus $ \ns^3 \gtrsim  \frac{\ab^3 \mu}{n \dst^4}        \gtrsim \frac{\ab^3}{n \priv^3 \dst^4} $, which implies $\ns = \bigOmega{\frac{\ab^{3/2}}{\priv^{3/2} \dst^2} }$. Combining the results, we get the overall requirement that, for this tester to work, $\ns$ must be $\bigOmega{\frac{k^{1/2}}{\dst^{2}} + \frac{k^{3/4}}{\priv \dst} \sqrt{\log(1/\delta)}}$.

This bound is not optimal, but it is not too bad. However, it cannot be generalized to the heterogeneous setting because the privatizing procedure is, in fact, not separated (because it can be viewed as adding noise to the difference of histograms taken from two distributions). We also considered other methods of privatizing but were not able to develop an algorithm with good sample complexity due to the time frame of this thesis.

\subsubsection{Fourth idea: use "flattening samples" and "testing samples"}

In~\cite{DK:16}, the authors proposed a new approach for sample-optimally closeness testing. They firstly use some ``flattening samples'' from one distribution to ``flatten'' the domain in order to decrease the expected value of $\ell_2-$norm. Then, they use a standard $\ell_2$ tester to achieve $\ell_1$ closeness testing. It is natural to develop a close testing algorithm using different numbers of samples from two distributions, which is proposed in that paper. However, it is hard to make that algorithm differential private because the sensitivity of the ``flattening samples'' can be prohibitively large (as large as the number of samples). 

It is worth noting that the authors of~\cite{ADKR:19} developed a method for making the algorithm in~\cite{DK:16} deferentially private. They considered all permutations of ``flattening samples'' and ``testing samples'' in the calculation of the test statistic to decrease the sensitivity of ``flattening samples.'' However, we were not able to generalize their proof to our needs, and it is unclear whether this technique
could yield a sample-efficient tester for our
problem.
\section{Future Work}
    \label{sec:future:work}

Our results raise several future directions. First, it is not clear how to achieve pure privacy for the shuffle model, which is interesting since pure privacy is a stricter privacy guarantee. Second, while we are able to obtain sample-optimal algorithms for the local model tight bounds for the central and shuffle model remain unknown. Third, since our algorithms only work for the high privacy regime where $\priv \in (0,1]$, it will be interesting to determine algorithms for the low privacy regime. Note that a low-privacy algorithm for the local model should directly lead to a high-privacy algorithm for the shuffle model, \emph{via} amplification by shuffling. Last, it would be interesting to consider a mixed privacy guarantee, \eg where one group of users works a local model, while the other group relies on the shuffle model.
\printbibliography

\appendix
\section{Proof of Lemma ~\ref{theo:product_closeness_no_privacy}}
\label{app:proof:product_closeness_no_privacy}
\begin{proof}[Proof of~\cref{theo:product_closeness_no_privacy}]
We now prove the correctness of~\cref{alg:product_closeness_no_privacy} by using Chebyshev's inequality. Firstly, it is not hard to show the expectation of $Z_1$ is 
\begin{equation*}
\mathbb{E}(Z_1) = \begin{cases}
0 &\text{if $P = Q$,}\\
\norm{\mu(P) - \mu(Q)}^2_2 &\text{if $P \neq Q$.}
\end{cases}
\end{equation*}

To apply Chebyshev's inequality, we also need to calculate the variance of $Z_1$. Let $\triangle = \hat{X} - \hat{Y}, \triangle' = \hat{X'} - \hat{Y'},$ We have 
\begin{align*} 
    \mathbb{E}(Z_1^2) &= \mathbb{E}(\langle\, \triangle , \triangle'  \rangle^2) \\
    &= \mathbb{E}((\sum_{j=1}^{d}\triangle_j \triangle'_j)^2) \\
    &= \mathbb{E}(\sum_{j=1}^{d}((\triangle_j \triangle'_j)^2)) + \mathbb{E} (\sum_{1 \leq i,j \leq d, i \neq j} (\triangle_j \triangle'_j \triangle_i \triangle'_i)) \\
    &= \sum_{j=1}^{d}(\mathbb{E}^2(\triangle_j^2)) + \sum_{1 \leq i,j \leq d, i \neq j} \mathbb{E}^2(\triangle_j) \mathbb{E}^2(\triangle_i) \\
    &= \sum_{j=1}^{d}(\mathbb{E}^2(\triangle_j^2)) + \sum_{j=1}^{d} \mathbb{E}^2(\triangle_j) \sum_{i=1}^{d}\mathbb{E}^2(\triangle_i) - \sum_{j=1}^{d}(\mathbb{E}^4(\triangle_j))\\
    &= \sum_{j=1}^{d}(\mathbb{E}^2(\triangle_j^2)) + \norm{\mu(P)-\mu(Q)}_2^4 - \norm{\mu(P)-\mu(Q)}_4^4
\end{align*}

For any $1 \leq j \leq d$, we have
\begin{align*}
    \mathbb{E}(\triangle_j^2) &= \mathbb{E}^2(\triangle_j) + \var(\triangle_j) \\
    &= \mathbb{E}^2(\triangle_j) + \var(\hat{X}_j) + \var(\hat{Y}_j) \\
    &= \mathbb{E}^2(\triangle_j) + \frac{1}{\ns}\mu(P)_i(1-\mu(P)_i) + \frac{1}{\ns}\mu(Q)_i(1-\mu(Q)_i) \\
    &\leq (\mu(P)_i - \mu(Q)_i)^2 + \frac{1}{\ns}.
\end{align*}

It follows that
\[
    \sum_{j=1}^{d}(\mathbb{E}^2(\triangle_j^2)) \leq \norm{\mu(P) - \mu(Q)}_4^4 + \frac{2\norm{\mu(P) - \mu(Q)}^2_2}{\ns} + \frac{d}{\ns^2}
\]
Thus,
\begin{align*}
    \var(Z_1) &= E(Z_1^2) - E^2(Z_1) \\
    &= \sum_{j=1}^{d}(\mathbb{E}^2(\triangle_j^2)) + \norm{\mu(P)-\mu(Q)}_2^4 - \norm{\mu(P)-\mu(Q)}_4^4 - \norm{\mu(P)-\mu(Q)}^4_2 \\
    &\leq \norm{\mu(P) - \mu(Q)}_4^4  + \frac{d}{\ns^2} + \frac{2\norm{\mu(P) - \mu(Q)}^2_2}{\ns}  - \norm{\mu(P)-\mu(Q)}_4^4  \\
    &= \frac{d}{\ns^2}  + \frac{2\norm{\mu(P) - \mu(Q)}^2_2}{\ns} \\
    &= \frac{\dst^2}{100^2}  + \frac{\dst^2 \norm{\mu(P) - \mu(Q)}^2_2 }{50 \sqrt{d}}
\end{align*}

When $P = Q$, $\sqrt{\var(Z_1)} = \frac{\dst}{100}$. By Chebyshev's inequality, we have 
\begin{align}
    \bPr{Z_1 > \frac{\dst^2}{2}} < \bPr{Z_1 \geq \sqrt{3}\sqrt{\var(Z_1)}}   < \frac{1}{3}.
    \label{eq:closeness_product_no_noise_1}
\end{align}

When $\norm{\mu(P) - \mu(Q)}_2 > \dst$, we have $\sqrt{\var(Z_1)} < \frac{\dst}{100}  + \frac{\dst \norm{\mu(P) - \mu(Q)}_2 }{\sqrt{50}} < \frac{11 \norm{\mu(P) - \mu(Q)}^2_2 }{100}$. By Chebyshev's inequality, we have 
\begin{align}
    \bPr{Z_1 < \frac{\dst^2}{2}} < \bPr{|Z_1 - \norm{\mu(P) - \mu(Q)}^2_2| \geq \sqrt{3}\sqrt{\var(Z_1)}} < \frac{1}{3}.
    \label{eq:closeness_product_no_noise_2}
\end{align}
 
By combining~\cref{eq:closeness_product_no_noise_1,eq:closeness_product_no_noise_2}, we conclude that this algorithm always return the correct answer with probability at least $\frac{2}{3}$.
\end{proof}

\section{Proof of Theorem~\ref{theo:l2_mean_flip}}
\label{app:proof:l2_mean_flip}
\begin{proof}[Proof of~\cref{theo:l2_mean_flip}]
    We draw two groups of samples $X^{(1)},...,X^{(\ns)},X'^{(1)},...,X'^{(\ns)}$ from the $d-$dimensional product distribution $P$ with noise and two groups of samples $Y^{(1)},...,Y^{(\ns)},Y'^{(1)},...,Y'^{(\ns)}$ from the $d-$dimensional product distribution $Q$ with noise, where $\q,\,\q \in \{ -1,\,1 \}^d$, $\ns_1 = \frac{800\sqrt{d}}{\priv_1^2 \dst^2} $ and $\ns_2 = \frac{800\sqrt{d}}{\priv_2^2 \dst^2} $. Then, we calculate the following test statistic
\begin{equation}
    Z_2 = \langle\, \frac{e^{\priv_1}+1}{e^{\priv_1}-1}(\hat{X} - \frac{1}{e^{\priv_1}+1}) - \frac{e^{\priv_2}+1}{e^{\priv_2}-1}(\hat{Y} - \frac{1}{e^{\priv_2}+1}) , \, \frac{e^{\priv_1}+1}{e^{\priv_1}-1}(\hat{X'} - \frac{1}{e^{\priv_1}+1}) - \frac{e^{\priv_2}+1}{e^{\priv_2}-1}(\hat{Y'} - \frac{1}{e^{\priv_2}+1}) \rangle
    \label{eq:z2}
\end{equation}
where $\hat{X}, \hat{X'}, \hat{Y}, \hat{Y'}$ are the mean vectors of $X,X',Y,Y'$ respectively. We accept $P = Q$ if $Z_2 \leq \frac{1}{2}\dst^2$, and reject otherwise.

Here, $\frac{e^{\priv_1}+1}{e^{\priv_1}-1}(\hat{X} - \frac{1}{e^{\priv_1}+1}), \frac{e^{\priv_1}+1}{e^{\priv_1}-1}(\hat{X'} - \frac{1}{e^{\priv_1}+1})$ are unbiased estimators for $\mu(P)$ while $\frac{e^{\priv_2}+1}{e^{\priv_2}-1}(\hat{Y} - \frac{1}{e^{\priv_2}+1})$ and $\frac{e^{\priv_2}+1}{e^{\priv_2}-1}(\hat{Y'} - \frac{1}{e^{\priv_2}+1})$ are unbiased estimators for $\mu(Q)$. Thus, the expectation of $Z_2$ is the same as $Z_1$. To apply Chebyshev's inequality for this test statistic, we again calculate the variance of $Z_2$.

Let $\gamma_1,\gamma_2$ to be $\frac{e^{\priv_1}+1}{e^{\priv_1}-1},\,\frac{e^{\priv_2}+1}{e^{\priv_2}-1}$ respectively, then our test statistic can be written as 
\[
    Z_2 = \langle\, \gamma_1(\hat{X} - \frac{1}{2}(1-\frac{1}{\gamma_1})) - \gamma_2(\hat{Y} - \frac{1}{2}(1-\frac{1}{\gamma_2})) , \, \gamma_1(\hat{X'} - \frac{1}{2}(1-\frac{1}{\gamma_1})) - \gamma_2(\hat{Y'} - \frac{1}{2}(1-\frac{1}{\gamma_2})) \rangle.
\]
I.e.,
\[
    Z_2 = \langle\, \gamma_1(\hat{X} - \frac{1}{2}) - \gamma_2(\hat{Y} - \frac{1}{2}) , \, \gamma_1(\hat{X'} - \frac{1}{2}) - \gamma_2(\hat{Y'} - \frac{1}{2}) \rangle.
\]
Letting $\triangle = \gamma_1(\hat{X} - \frac{1}{2}(1-\frac{1}{\gamma_1})) - \gamma_2(\hat{Y} - \frac{1}{2}(1-\frac{1}{\gamma_2}))$, we have 
\begin{align*} 
    \mathbb{E}(Z_2^2) = \sum_{j=1}^{d}(\mathbb{E}^2(\triangle_j^2)) + \norm{\mu(P)-\mu(Q)}_2^4 - \norm{\mu(P)-\mu(Q)}_4^4.
\end{align*}
And 
\begin{align*}
    \mathbb{E}(\triangle_j^2) &= \Paren{\mathbb{E}^2(\triangle_j) + \var(\triangle_j)}^2\\
    &= \Paren{(\mu(P)_j - \mu(Q)_j)^2 + \var(\hat{X}_j)) + \var(\hat{Y}_j)}^2 \\
    &= \Paren{(\mu(P)_j - \mu(Q)_j)^2 + \frac{\gamma_1^2}{\ns_1}\mu(P)_j(1-\mu(P)_j) + \frac{\gamma_2^2}{\ns_2}\mu(Q)_j(1-\mu(Q)_j)}^2 \\
    &\leq \Paren{(\mu(P)_j - \mu(Q)_j)^2 + \frac{\gamma_1^2}{\ns_1}\mu(P)_j + \frac{\gamma_2^2}{\ns_2}\mu(Q)_j}^2 \\
    &= (\mu(P)_j - \mu(Q)_j)^4 + 2(\frac{\gamma_1^2}{\ns_1}\mu(P)_j + \frac{\gamma_2^2}{\ns_2}\mu(Q)_j)(\mu(P)_j - \mu(Q)_j)^2 + (\frac{\gamma_1^2}{\ns_1}\mu(P)_j + \frac{\gamma_2^2}{\ns_2}\mu(Q)_j)^2 \\
    &\leq (\mu(P)_j - \mu(Q)_j)^4 + 2(\frac{\gamma_1^2}{\ns_1} + \frac{\gamma_2^2}{\ns_2})(\mu(P)_j - \mu(Q)_j)^2 + (\frac{\gamma_1^2}{\ns_1} + \frac{\gamma_2^2}{\ns_2})^2
\end{align*}

Thus, 
\begin{align*} 
    \mathbb{E}(Z_2^2) &= \sum_{j=1}^{d}(\mathbb{E}^2(\triangle_j^2)) + \norm{\mu(P)-\mu(Q)}_2^4 - \norm{\mu(P)-\mu(Q)}_4^4 \\
    &\leq \norm{\mu(P)-\mu(Q)}_2^4 + 2\mleft(\frac{\gamma_1^2}{\ns_1} + \frac{\gamma_2^2}{\ns_2}\mright)\norm{\mu(P)-\mu(Q)}_2^2 + d\mleft(\frac{\gamma_1^2}{\ns_1} + \frac{\gamma_2^2}{\ns_2}\mright)^2.
\end{align*}

Since $\frac{e^{x}+1}{e^{x}-1} < \frac{3}{x}$ for any $x \in (0, 1 ]$, we have 
\begin{align*}
    \var(Z_2) &= \mathbb{E}(Z_2^2) - \mathbb{E}^2(Z_2) \\
    &\leq 2\mleft(\frac{\gamma_1^2}{\ns_1} + \frac{\gamma_2^2}{\ns_2}\mright)\norm{\mu(P)-\mu(Q)}_2^2 + d\mleft(\frac{\gamma_1^2}{\ns_1} + \frac{\gamma_2^2}{\ns_2}\mright)^2 \\
    &< 6\mleft(\frac{1}{\priv_1^2 \ns_1} + \frac{1}{\priv_2^2 \ns_2}\mright)\norm{\mu(P)-\mu(Q)}_2^2 + 3d\mleft(\frac{1}{\priv_1^2 \ns_1} + \frac{1}{\priv_2^2 \ns_2}\mright)^2 \\
    &= \frac{3 \dst^2 \norm{\mu(P)-\mu(Q)}_2^2}{200 \sqrt{d}} + \frac{12 \dst^4}{800^2}.
\end{align*}

When $P = Q$, $\sqrt{\var(Z_1)} = \frac{\sqrt{3}}{400} \dst$. By Chebyshev's inequality, we have 
\begin{align}
    \bPr{Z_2 > \frac{\dst^2}{2}} < \bPr{Z_1 \geq \sqrt{3}\sqrt{\var(Z_1)}}  < \frac{1}{3}.
    \label{eq:closeness_product_with_noise_1}
\end{align}

When $\norm{\mu(P) - \mu(Q)}_2 > \dst$, we have $\sqrt{\var(Z_1)} <  \sqrt{\frac{ 3}{200} + \frac{12}{800^2}}\norm{\mu(P) - \mu(Q)}^2_2 $. By Chebyshev's inequality, we have 
\begin{align}
    \bPr{Z_1 < \frac{\dst^2}{2}} < \bPr{|Z_1 - \norm{\mu(P) - \mu(Q)}^2_2| \geq \sqrt{3}\sqrt{\var(Z_1)}} < \frac{1}{3}.    
    \label{eq:closeness_product_with_noise_2}
\end{align}

By combining~\cref{eq:closeness_product_with_noise_1,eq:closeness_product_with_noise_2}, we conclude that this algorithm always return the correct answer with probability at least $\frac{2}{3}$.
\end{proof}

\end{document}